\newtheorem{theorem}{Theorem}
\newtheorem{lemma}[theorem]{Lemma}
\newtheorem{proposition}[theorem]{Proposition}
\newtheorem{corollary}[theorem]{Corollary}
\newdefinition{definition}[theorem]{Definition}
\newdefinition{remark}[theorem]{Remark}
\newdefinition{example}[theorem]{Example}
\newproof{proof}{Proof}
\newcommand{\R}{\mathbb{R}}
\newcommand{\Q}{\mathbb{Q}}
\newcommand{\N}{\mathbb{N}}
\newcommand{\K}{\mathbb{K}}
\newcommand{\MAT}[3]{M_{#1\ifthenelse{\equal{#2}{}}{}{,#2}}\ifthenelse{\equal{#3}{}}{}{\left(#3\right)}}
\newcommand{\Rgen}{\R_{G}}
\newcommand{\Rpoly}{\R_{P}}
\newcommand{\Rp}{\R_{+}}
\newcommand{\Rps}{\R_{+}^{*}}
\newcommand{\dom}{\operatorname{dom}}
\newcommand{\inorm}[2]{\left\lVert{#1}\right\rVert_{#2}}
\newcommand{\infnorm}[1]{\inorm{#1}{}}
\newcommand{\sigmap}[1]{{\Sigma{#1}}}
\newcommand{\degp}[1]{{\operatorname{deg}(#1)}}
\newcommand{\poly}{\operatorname{poly}}
\newcommand{\sgn}[1]{\operatorname{sgn}(#1)}
\newcommand{\indicator}[1]{\mathds{1}_{#1}}
\newcommand{\idfun}{\operatorname{id}}
\newcommand{\myclass}[1]{\operatorname{#1}}
 \newcommand{\gval}[2][]{\ensuremath{\myclass{GVAL}[#2]}}
 \newcommand{\gpval}[1][]{\ensuremath{\myclass{GPVAL}}}
 \newcommand{\gc}[3][]{\ensuremath{\myclass{ATSC}(#2,#3)}}
 \newcommand{\gpc}[1][]{\ensuremath{\myclass{ATSP}}}
\newcommand{\cgc}[1][]{\ensuremath{\myclass{ATSC}}}
\newcommand{\gexpc}[1][]{\ensuremath{\myclass{AEXP}}}
\newcommand{\gwc}[2]{\ensuremath{\myclass{AW}(#1,#2)}}
\newcommand{\gpwc}{\ensuremath{\myclass{AWP}}}
\newcommand{\cgwc}{\ensuremath{\myclass{AWC}}}
\newcommand{\grc}[3]{\ensuremath{\myclass{ARC}(#1,#2,#3)}}
\newcommand{\gprc}{\ensuremath{\myclass{ARP}}}
\newcommand{\gsc}[3]{\ensuremath{\myclass{AS}(#1,#2,#3)}}
\newcommand{\gpsc}{\ensuremath{\myclass{ASP}}}
\newcommand{\goc}[3]{\ensuremath{\myclass{AOC}(#1,#2,#3)}}
\newcommand{\gpoc}{\ensuremath{\myclass{AOP}}}
\newcommand{\cgoc}{\ensuremath{\myclass{AOC}}}
\newcommand{\guc}[4]{\ensuremath{\myclass{AXC}(#1,#2,#3,#4)}}
\newcommand{\gpuc}{\ensuremath{\myclass{AXP}}}
\newcommand{\glc}[2][]{\ensuremath{\myclass{ALC}(#2)}}
\newcommand{\gplc}[1][]{\ensuremath{\myclass{ALP}}}
\newcommand{\cglc}[1][]{\ensuremath{\myclass{ALC}}}
\newcommand{\Unaware}{Extreme}
\newcommand{\unaware}{extreme}
\newcommand{\unawarely}{extremely}
\newcommand{\jacobian}[1]{J_{#1}}
\newcommand{\grad}[1]{\nabla{#1}}
\newcommand{\scalarprod}[2]{{#1}\cdot{#2}}
\newcommand{\bigO}[1]{\mathcal{O}\left(#1\right)}
\newcommand{\transpose}[1]{{#1}^T}
\newcommand{\pastsup}[2]{{\sup}_{#1}#2}
\newcommand{\mtt}[1]{\mathtt{#1}}
\newcommand{\ovl}[1]{\overline{#1}}
\newcommand{\orbit}[1]{\mathcal{O}_x}
\newcommand{\PIVP}{{PIVP}}
\newcommand{\myop}[1]{\operatorname{#1}}
\newcommand{\lxh}{\myop{lxh}}
\newcommand{\hxl}{\myop{hxl}}
\newcommand{\plil}{\myop{plil}}
\newcommand{\reach}{\myop{reach}}
\newcommand{\norm}{\myop{norm}}
\newcommand{\mx}{\myop{mx}}
\newcommand{\sample}{\myop{sample}}
\newcommand{\glen}[1]{\myop{len}_{#1}}
\definecolor{mygreen}{rgb}{0,.5,0}
\colorlet{myyellow}{yellow!80!blue}
\newcommand{\defref}[1]{Definition~\ref{#1}}
\newcommand{\thref}[1]{Theorem~\ref{#1}}
\newcommand{\lemref}[1]{Lemma~\ref{#1}}
\newcommand{\propref}[1]{Proposition~\ref{#1}}
\newcommand{\figref}[1]{Figure~\ref{#1}}
\newcommand{\secref}[1]{Section~\ref{#1}}
\begin{document}

\title{Computing with Polynomial Ordinary Differential Equations}

\author[lix]{Olivier Bournez\corref{cor}\fnref{dga}}
\ead{bournez@lix.polytechnique.fr}
\author[fct,sqig]{Daniel Gra\c{c}a\fnref{feder}}
\ead{dgraca@ualg.pt}
\author[lix]{Amaury Pouly\fnref{dga}}
\ead{pamaury@lix.polytechnique.fr}

\address[lix]{École Polytechnique, LIX, 91128 Palaiseau Cedex, France}
\address[fct]{CEDMES/FCT, Universidade do Algarve, C. Gambelas, 8005-139 Faro, Portugal}
\address[sqig]{SQIG/Instituto de Telecomunica\c{c}\~{o}es, Lisbon, Portugal}
\cortext[cor]{Corresponding author}
\fntext[feder]{Daniel Gra\c{c}a was partially supported by
  \emph{Funda\c{c}\~{a}o para a Ci\^{e}ncia e a Tecnologia} and EU FEDER
  POCTI/POCI via SQIG - Instituto de Telecomunica\c{c}\~{o}es through
  the FCT project UID/EEA/50008/2013.}
\fntext[dga]{Olivier Bournez and Amaury Pouly were partially supported by
  \emph{DGA Project CALCULS}}

\begin{abstract}
In 1941, Claude Shannon introduced the General Purpose Analog
Computer (GPAC) as a mathematical model of Differential Analysers,
that is to say as a model of continuous-time analog (mechanical, and
later on electronic) machines of that time.

Following Shannon's arguments, functions generated by the GPAC must satisfy
a polynomial differential algebraic equation (DAE). As it is known that some computable functions
like Euler's $\Gamma(x)=\int_{0}^{\infty}t^{x-1}e^{-t}dt$ or Riemann's Zeta function
$\zeta(x)=\sum_{k=0}^\infty \frac1{k^x}$ do not satisfy any polynomial DAE,
this argument has often been used to demonstrate in the past
that the GPAC is less powerful than digital computation.

It was proved in \citep{JOC2007}, that if a more modern notion of
computation is considered, i.e. in particular if computability is not
restricted to real-time generation of functions, the GPAC is actually
equivalent to Turing machines.

Our purpose is first to discuss the robustness of the  notion
of computation involved in \citep{JOC2007}, by establishing that many
natural variants of the notion of
computation from this paper lead to the same computability result. 

Second, to go from these computability results towards considerations
about (time) complexity: we explore several natural variants for
measuring time/space complexity of a computation.

Quite surprisingly, whereas defining a robust time complexity for
general continuous time systems is a well known open problem, we prove
that all variants are actually equivalent even at the complexity
level. As a consequence, it seems that a robust and well defined
notion of time complexity exists for the GPAC, or equivalently for
computations by polynomial ordinary differential equations. 

Another side effect of our proof is also that we show in some way that
polynomial ordinary differential equations can actually be used as a
kind of programming model, and that there is a rather nice and robust
notion of ordinary differential equation (ODE) programming.
\end{abstract}

\begin{keyword}
Analog Computation \sep Continuous-Time Computations \sep General
Purpose Analog Computer \sep Real Computations
\end{keyword}

\maketitle

\section{Introduction}

Claude Shannon introduced in \citep{Sha41} the General Purpose Analog
Computer (GPAC) as a model for Differential Analysers
\citep{Bus31}, which are mechanical (and later on electronic)
continuous time analog machines, on which he worked as an
operator. The model was later refined in \citep{Pou74}, \citep{GC03}.
It was originally presented by Shannon as a model based on
circuits. Basically, a GPAC is any circuit (loops are allowed\footnote{There are some syntactic restrictions
to avoid ill-defined circuits.}) that can be built from the
4 basic units of \figref{fig:gpac_circuit}, which implement
constants, addition, multiplication and
integration, all of them working over analog real quantities (that
were corresponding to angles in the mechanical Differential Analysers,
and later on to voltages in the electronic versions). Note that the set of allowed
constants will generally be restricted, for example to rational numbers, to avoid pathological issues.
Given such a circuit, the function which gives the value of every wire (or
a subset of the wires) over time is said to be \emph{generated} by the circuit.
In \defref{def:gpac_generable_ext}, we consider an extension of this notion.

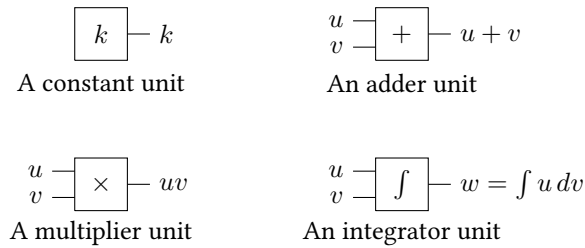
\begin{figure}[h]
\begin{center}
 \setlength{\unitlength}{1200sp}%
\begin{tikzpicture}
 \begin{scope}[shift={(0,0)},rotate=0]
  \draw (0,0) -- (0.7,0) -- (0.7,0.7) -- (0,0.7) -- (0,0);
  \node at (.35,.35) {$k$};
  \draw (.7,.35) -- (1,.35);
  \node[anchor=west] at (1,.35) {$k$};
  \node at (.35, -.3) {A constant unit};
 \end{scope}
 \begin{scope}[shift={(4,0)},rotate=0]
  \draw (0,0) -- (0.7,0) -- (0.7,0.7) -- (0,0.7) -- (0,0);
  \node at (.35,.35) {$+$};
  \draw (.7,.35) -- (1,.35); \draw (-.3,.175) -- (0,.175); \draw (-.3,.525) -- (0,.525);
  \node[anchor=west] at (1,.35) {$u+v$};
  \node at (.35, -.3) {An adder unit};
  \node[anchor=east] at (-.3,.525) {$u$};
  \node[anchor=east] at (-.3,.175) {$v$};
 \end{scope}
 \begin{scope}[shift={(0,-2)},rotate=0]
  \draw (0,0) -- (0.7,0) -- (0.7,0.7) -- (0,0.7) -- (0,0);
  \node at (.35,.35) {$\times$};
  \draw (.7,.35) -- (1,.35); \draw (-.3,.175) -- (0,.175); \draw (-.3,.525) -- (0,.525);
  \node[anchor=west] at (1,.35) {$uv$};
  \node at (.35, -.3) {A multiplier unit};
  \node[anchor=east] at (-.3,.525) {$u$};
  \node[anchor=east] at (-.3,.175) {$v$};
 \end{scope}
 \begin{scope}[shift={(4,-2)},rotate=0]
  \draw (0,0) -- (0.7,0) -- (0.7,0.7) -- (0,0.7) -- (0,0);
  \node at (.35,.35) {$\int$};
  \draw (.7,.35) -- (1,.35); \draw (-.3,.175) -- (0,.175); \draw (-.3,.525) -- (0,.525);
  \node[anchor=west] at (1,.35) {$w=\int u\thinspace dv$};
  \node at (.35, -.3) {An integrator unit};
  \node[anchor=east] at (-.3,.525) {$u$};
  \node[anchor=east] at (-.3,.175) {$v$};
 \end{scope}
\end{tikzpicture}
\end{center}
\caption{Circuit presentation of the GPAC: a circuit built from basic
  units. Presentation of the 4 types of units: constant, adder,
  multiplier, and integrator.}
\label{fig:gpac_circuit}
\end{figure}%

An important aspect of this model is that despite the apparent simplicity of its
basic blocks, sophisticated functions can easily be generated. \figref{fig:gpac_example_sin} illustrates how the
sine function can be generated using two integrators, with suitable
initial states. Incidentally, the sine function is also the solution of a very simple
 ordinary differential equation.
Shannon itself realized that functions generated by a GPAC
are nothing more than solutions of a special class of polynomial
differential equations. In particular it can be shown that a function
$f: \R \to \R$ is generated by Shannon's model \citep{Sha41},
\citep{GC03} if and only if it is a (component of the) solution of a
polynomial initial value problem (PIVP) of the form:
\begin{equation}\label{eq:ode}
\left\{\begin{array}{@{}r@{}l}y'(t)&=p(y(t))\\y(t_0)&=y_0\end{array}\right.,\qquad t\in\R
\end{equation}
where $p$ is a vector of polynomials and $y(t)$ is vector. In other
words, $f(t)=y_1(t)$, and $y_i'(t)=p_i(y(t))$
where $p_i$ is a multivariate polynomial.

\begin{figure}
\begin{center}
\begin{tikzpicture}
 \begin{scope}[shift={(-4.2,0)},rotate=0]
  \draw (0,0) -- (0.7,0) -- (0.7,0.7) -- (0,0.7) -- (0,0);
  \node at (.35,.35) {$-1$};
 \end{scope}
 \draw (-3.5,.35) -- (-3.15,.35) -- (-3.15,.175) -- (-2.8,.175);
 \begin{scope}[shift={(-2.8,0)},rotate=0]
  \draw (0,0) -- (0.7,0) -- (0.7,0.7) -- (0,0.7) -- (0,0);
  \node at (.35,.35) {$\times$};
 \end{scope}
 \draw (-2.1,.35) -- (-1.75,.35) -- (-1.75,.525) -- (-1.4,.525);
 \begin{scope}[shift={(-1.4,0)},rotate=0]
  \draw (0,0) -- (0.7,0) -- (0.7,0.7) -- (0,0.7) -- (0,0);
  \node at (.35,.35) {$\int$};
 \end{scope}
 \draw (-.7,.35) -- (-.35,.35) -- (-.35,.525) -- (0,.525);
 \begin{scope}[shift={(0,0)},rotate=0]
  \draw (0,0) -- (0.7,0) -- (0.7,0.7) -- (0,0.7) -- (0,0);
  \node at (.35,.35) {$\int$};
 \end{scope}
 \draw (.7,.35) -- (1.4,.35);
 \node[anchor=west] at (1.4,.35) {$\sin(t)$};
 \node[anchor=north] at (-1, -0.5) {$\left\lbrace
\begin{array}{@{}c@{}l}
y'(t)&=z(t)\\
z'(t)&=-y(t)\\
y(0)&=0\\
z(0)&=1
\end{array}
\right.\Rightarrow\left\lbrace\begin{array}{@{}c@{}l}
y(t)&=\sin(t)\\
z(t)&=\cos(t)
\end{array}\right.$};
 \draw (1,.35) -- (1,1) -- (-3.15,1) -- (-3.15,.525) -- (-2.8,.525);
 \fill (1,.35) circle[radius=.07];
 \draw (-4.9,.35) -- (-4.55,.35) -- (-4.55,-.3) -- (-.3,-.3) -- (-.3,.175) -- (0,.175);
 \draw (-1.75,-.3) -- (-1.75,.175) -- (-1.4,.175);
 \fill (-1.75,-.3) circle[radius=.07];
 \node[anchor=east] at (-4.9,.35) {$t$};
\end{tikzpicture}
\end{center}
\caption{Example of a GPAC circuit computing the sine and cosine.}
\label{fig:gpac_example_sin}
\end{figure}
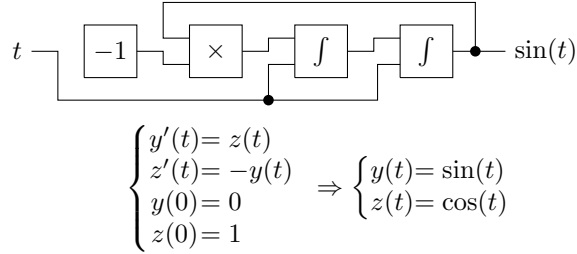

Intuitively, the link between a GPAC and a PIVP is the following: the idea is just to introduce a variable for each output of
a basic unit, and write the corresponding ordinary differential
equation (ODE), and observe that it can be written as an ODE with a polynomial right hand side.

While many of the usual real functions are known to be generated by a
GPAC, a notable exception is
Euler's Gamma function $\Gamma(x)=\int_{0}%
^{\infty}t^{x-1}e^{-t}dt$ function or Riemann's Zeta function
$\zeta(x)=\sum_{k=0}^\infty \frac1{k^x}$
\citep{Sha41}, \citep{PR89}, which are known not to satisfy any polynomial DAE,
i.e. they are not solutions of a system of the form \eqref{eq:ode}. If we have in mind that these functions are
known to be computable under the computable analysis framework
\citep{PR89}, \citep{Wei00} the previous result has long been interpreted as evidence
that the GPAC is a somewhat weaker model than computable analysis.

In 2007, it was proved that this is more an artifact of the notion of
real-time generation considered by Shannon than a true consideration
about the computational power of the model. Indeed, Shannon assumes
the GPAC computes in \emph{``real time''} - a very restrictive form of
computation: at time $t$ the output of the machine must be $\Gamma(t)$. If we change
this notion of computability to the kind of \emph{``converging computation''}
used in recursive analysis, or in modern computability theory, then
the $\Gamma$ function becomes computable \citep{Gra04}, and more
generally all functions over a bounded domain, computable in the sense
of computable analysis, are actually GPAC computable (and conversely)
\citep{JOC2007}. The idea used in \citep{Gra04}, \citep{JOC2007} to compute a function $f:\R\rightarrow\R$ is to define a polynomial
 initial-value problem (PIVP) \eqref{eq:ode} such that the argument $x$ of $f$ is provided to the
 PIVP via the initial condition, and the system has a component which converges to $f(x)$.
 Moreover, the convergence rate of the component to $f(x)$ is known and we know exactly how
 much time we have to wait to get a desired accuracy when computing $f(x)$. More precisely,
 the following was proved:

\begin{definition}[GPAC computable function] \label{def:joc2007}
$f:\R\rightarrow\R$ is called \emph{GPAC-computable} if there are
polynomials $p$ and $q$ with computable coefficients
such that for any $x\in\R$, there exists (a unique) $y:I\rightarrow\R^d$ satisfying
for all $t\in\Rp$:
\begin{itemize}
\item $y(0)=q(x)$ and $y'(t)=p(y(t))$ \hfill$\blacktriangleright$ $y$ satisfies a PIVP
\item if $t\geqslant1$ then $|y_1(t)-f(x)|\leqslant e^{-t}$\hfill$\blacktriangleright$ $y_1$ converges to $f(x)$
\end{itemize}
\end{definition}

\begin{proposition}[\citep{JOC2007}] \label{prop:avant}
Let $a$ and $b$ be some computable reals. A function $f:[a,b] \to \R$
is computable\footnote{In the classical sense, i.e. in the sense of
  computable analysis.} if and only if it is GPAC-computable.
\end{proposition}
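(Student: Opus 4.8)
For the forward direction (GPAC-computable $\Rightarrow$ computable), I would start from a function $f:[a,b]\to\R$ that is GPAC-computable via polynomials $p,q$ with computable coefficients. The idea is that computable analysis can simulate the solution of a PIVP with arbitrary precision. Concretely, given a rational approximation of $x$ and a target precision $2^{-n}$, I would: (i) pick $t = \max(1, n\ln 2 + 1)$ so that the error bound $e^{-t} \le 2^{-n-1}$ holds at that time; (ii) rigorously integrate $y' = p(y)$ from $y(0) = q(x)$ up to time $t$ using a standard interval/Euler-type numerical scheme with adaptive step size, tracking error bounds. The key ingredients are that $p$ has computable coefficients, that on the relevant time interval $[0,t]$ the solution $y$ stays in a bounded region (so $p$ is Lipschitz there with a computable bound), and that the dependence of $y(t)$ on the initial condition $q(x)$ is continuous with a computable modulus. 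Combining the integration error with the $e^{-t}$ convergence error gives $|y_1(t) - f(x)| \le 2^{-n}$, so $f$ is computable. The main subtlety here is justifying the a priori bound on $\|y\|$ over $[0,t]$ and hence obtaining an effective Lipschitz/Gronwall estimate — but since $y_1 \to f(x)$ is bounded near $x$ and the other components are constrained by the dynamics, one can bootstrap a bound that depends only on $x$ and $t$.

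For the converse direction (computable $\Rightarrow$ GPAC-computable), this is the substantive half and essentially the content of \citep{JOC2007}. I would appeal to that construction: given a computable $f:[a,b]\to\R$, there is an oracle Turing machine (equivalently, a computable modulus of continuity together with a computable function producing dyadic approximations) witnessing its computability. The proof then builds a PIVP in stages. First, one constructs polynomial ODEs that generate the basic "building blocks" needed: a clock/iteration variable, functions that extract and manipulate digits, and — crucially — a way to simulate one step of a Turing machine by an analytic map that is itself GPAC-generable (encoding tape contents as reals, using the $\sg$, $\rnd$, $\ip{}$-type auxiliary functions). Then one assembles these so that the PIVP, started from $y(0) = q(x)$ where $q$ encodes $x$, runs the Turing machine on successively finer approximations of $x$, and a designated component $y_1(t)$ tracks the machine's output, converging to $f(x)$ at rate $e^{-t}$. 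One must also handle the fact that $f$'s domain is the bounded interval $[a,b]$ (so the argument $x$ lives in a compact set, which is what makes the uniform modulus of continuity available).

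**The main obstacle** is the converse: making the Turing-machine simulation robust inside a *polynomial* ODE. The difficulties are (a) Turing machine steps are discrete and the reachable configuration space is unbounded, whereas the ODE is continuous and its right-hand side is a fixed polynomial — so one needs the "timing" and "rounding" gadgets that keep the simulation on track despite continuous drift and accumulated error; (b) one must ensure the whole construction has *computable* coefficients and that the convergence rate is provably $e^{-t}$ (or can be reparametrized to it). I expect to cite \citep{JOC2007}, \citep{Gra04} for this direction rather than redo it, and to spend the real effort on making the forward direction's effective error analysis fully rigorous, since that is where a self-contained argument is cleanest.
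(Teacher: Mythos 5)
The paper does not actually prove this proposition: it is imported wholesale from \citep{JOC2007} (the whole point of the surrounding text is to take this equivalence as the starting point), so there is no internal proof to compare against, and your decision to cite \citep{Gra04}, \citep{JOC2007} for the hard direction (computable $\Rightarrow$ GPAC-computable, via Turing-machine simulation by polynomial ODEs) is exactly how the paper treats it. Your sketch of that direction also matches the structure of the cited construction.

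One concrete soft spot in the direction you chose to make self-contained: your justification of the a priori bound on $\infnorm{y}$ over $[0,t]$ does not work as stated. Knowing that $y_1(t)$ converges to $f(x)$ tells you nothing about $y_2,\ldots,y_d$, and ``the other components are constrained by the dynamics'' is not an argument --- polynomial ODEs admit solutions whose auxiliary components grow like towers of exponentials, and Definition~\ref{def:joc2007} imposes no bound on them whatsoever (this is precisely why the present paper has to add the space bound $\Upsilon$ in \defref{def:gc}). The repair is standard but should be said correctly: no \emph{a priori} bound is needed for computability. Since the definition guarantees the solution exists on all of $\Rp$, it is continuous hence bounded on the compact interval $[0,t]$ by some unknown $M_0$; the algorithm can then search over candidate bounds $M=1,2,4,\ldots$ and run a verified (interval/Taylor) integration that certifies the enclosure stays below $M$, which must succeed once $M\geqslant M_0$. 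Equivalently, one can invoke the known result that solutions of (analytic, in particular polynomial) ODEs with computable data are computable on their maximal interval of existence. With that substitution, together with the effective Gronwall estimate you mention for the dependence on the approximate initial condition $q(x)$, your forward direction is sound; the rest of the proposal is an appeal to the cited literature, which is consistent with the paper itself.
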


In this paper our purpose is twofold: first explore natural  variations on the notion of
computability presented in \defref{def:joc2007} and, second, go
towards complexity theory and not only computability theory, by
introducing some natural ways to measure complexity.

It is important to understand that talking about time complexity for
continuous-time systems is known to be a non-trivial issue. Indeed, defining a
robust (time) complexity notion for continuous time systems is a well
known open problem \citep{CIEChapter2007} with no generic solution
provided at this day. In short, the difficulty is that the naive idea
of using the time variable of the ODE as a measure of ``time
complexity'' is problematic, since time can be arbitrarily contracted
in a continuous system due to the ``Zeno phenomena'' (e.g.~by using
functions like $\arctan$ which contract the whole real line into a
bounded set). It follows that all computable languages can then be
computed by a continuous system in time $O(1)$ (see e.g.~
\citep{Ruo93}, \citep{Ruo94}, \citep{Moo95b}, \citep{Bournez97},
\citep{Bou99}, \citep{AD91}, \citep{CP01}, \citep{Davies01},
\citep{Copeland98}, \citep{Cop02}).

Two first natural quantities will be considered: first, the time variable of the ordinary
differential equation, that we will sometimes call \emph{time}, and a
bound on the norm of the involved variables, that we will sometimes
call \emph{space}. 

As a reparameterization of the time variable of an ordinary
differential equation leads to a new ordinary differential equation
with the same solution curve, but which is traveled along time at a different speed, a natural idea is
to try to consider quantities that are kept invariant by
reparameterization. A natural choice for such quantity is the length of the
curve. We recall that the length of a curve $y\in C^1(I,\R^n)$ defined
over some interval $I=[a,b]$ is given by
$\glen{y}(a,b)=\int_I\infnorm{y'(t)}dt.$
\defref{def:joc2007} leads then naturally to consider the following
natural two variants of computability of functions over $\R^n$ given below.

Given $x\in\R^n$, we write $x_i$ for the $i^{th}$
component of $x$ and $x_{i..j}$ for the vector $(x_i,x_{i+},\ldots,x_j)$.
$\Rpoly$ denotes the set of polynomial-time computable reals \citep{Wei00}.
$\K[\R^n]$ denotes polynomial functions with $n$ variables and with coefficients in $\K$,
where variables live in $\R^n$ and $\Rp=[0,+\infty[$. In this document, $f:\subseteq X\rightarrow Y$
denotes a partial function, \emph{i.e.} $f:Z\rightarrow Y$ where $X\subseteq Z$.
We also take $\pastsup{\delta} f(t)=\sup_{u\in[t,t-\delta]\cap\R_+}f(t)$. The
intuition is that in \defref{def:joc2007} we can reparameterize the time variable, but this
will happen at the cost of space.
Hence, if we want to know how many resources are needed
to compute $f(x)$ with some accuracy $\mu$, we should measure not only the time but also the space needed to obtain this accuracy.
This is done in \defref{def:gc}, while \defref{def:glc} is a variant which, instead of using of measuring accuracy against time and space,
does this by measuring accuracy against the length of the solution curve needed to achieve that accuracy.
Figures \ref{fig:glc} and \ref{fig:gc} illustrate those definitions.

\begin{remark}[The space $\K$ of the coefficients]
In this paper, the coefficients of all considered polynomials will belong to $\K$.
Formally, $\K$ needs to a be \emph{generable field}, as introduced in \citep{BGP16Arxiv}.
However, without a significant loss of generality, the reader can consider that
\[\K=\Rpoly\]
which is the set of polynomial time computable real numbers. All the reader needs to
know about $\K$ is that it is a field and it is stable by generable functions
(introduced in \secref{sec:shannon}), meaning that if $\alpha\in\K$ and $f$
is generable then $f(\alpha)\in\K$.
It is shown in \citep{BGP16Arxiv} that there exists a small generable field $\Rgen$
lying somewhere between $\Q$ and $\Rpoly$, with probable strict inequality on both sides.
\end{remark}

We now get to our first and main notion of computable function:

\begin{definition}[Analog Length Computability]\label{def:glc}
Let $n,m\in\N$, $f:\subseteq\R^n\rightarrow\R^m$ and $\Omega:\Rp^2\rightarrow\Rp$.
We say that $f$ is $\Omega$-length-computable if and only if there exist $d\in\N$,
and $p\in\K^d[\R^d],q\in\K^d[\R^n]$
such that for any $x\in\dom{f}$, there exists (a unique) $y:\Rp\rightarrow\R^d$ satisfying
for all $t\in\Rp$:
\begin{itemize}
\item $y(0)=q(x)$ and $y'(t)=p(y(t))$\hfill$\blacktriangleright$ $y$ satisfies a PIVP
\item for any $\mu\in\Rp$, if $\glen{y}(0,t)\geqslant\Omega(\infnorm{x},\mu)$
    then $\infnorm{y_{1..m}(t)-f(x)}\leqslant e^{-\mu}$\\\hphantom{a}\hfill $\blacktriangleright$ $y_{1..m}$ converges to $f(x)$
\item $\infnorm{y'(t)}\geqslant1$\hfill$\blacktriangleright$ technical condition: the
  length grows at least linearly with time\footnote{This is a
    technical condition required for the proof. This can be weakened,
    for example to $\infnorm{p(y(t))}\geqslant\frac{1}{\poly(t)}$. The
    technical issue is that if the speed of the system
becomes extremely small, it might take an exponential time to reach a
polynomial length, and we want to avoid such ``unnatural''
cases.}
\end{itemize}
We denote by $\glc{\Omega}$ the set of $\Omega$-length-computable
functions, and by $\gplc$ the set of
$(\poly)$-length-computable functions, and more generally by $\cglc$ the
length-computable functions (for some $\Omega$). 
\end{definition}

\begin{definition}[Analog Time-Space computability]\label{def:gc}
Let $n,m\in\N$, $f:\subseteq\R^n\rightarrow\R^m$ and $\Upsilon,\Omega:\Rp^2\rightarrow\Rp$.
We say that $f$ is $(\Upsilon,\Omega)$-time-space-computable if and only if there exist $d\in\N$,
and $p\in\K^d[\R^d],q\in\K^d[\R^n]$
such that for any $x\in\dom{f}$, there exists (a unique) $y:\Rp\rightarrow\R^d$ satisfying
for all $t\in\Rp$:
\begin{itemize}
\item $y(0)=q(x)$ and $y'(t)=p(y(t))$\hfill$\blacktriangleright$ $y$ satisfies a PIVP
\item for all $\mu\in\Rp$, if $t\geqslant\Omega(\infnorm{x},\mu)$ then $\infnorm{y_{1..m}(t)-f(x)}\leqslant e^{-\mu}$
\\\hphantom{a}\hfill$\blacktriangleright$ $y_{1..m}$ converges to $f(x)$
\item $\infnorm{y(t)}\leqslant\Upsilon(\infnorm{x},t)$, for all $t\geqslant0$\hfill$\blacktriangleright$ $y(t)$ is bounded
\end{itemize}
We denote by $\gc{\Upsilon}{\Omega}$ the set of
$(\Upsilon,\Omega)$-time-space-computable functions, by $\gpc$ the set of
$(\poly,\poly)$-time-space-computable functions, and by $\cgc$ the set
of time-space-computable functions.
\end{definition}

\begin{figure}
\centering
\begin{tikzpicture}[domain=1:11,samples=500,scale=1]
    \draw[very thin,color=gray] (0.9,-0.1) grid (11.1,4.1);
    \draw[->] (1,-0.1) -- (1,4.2);
    \draw[->] (0.9,0) -- (11.2,0) node[right] {$\glen{y}$};
    \draw[color=red,thick] plot[id=fn_1] function{
        1+exp((x-1)*(7-x)/10)+(1+sin(10*x))/(1+exp(x-3))-2*exp(-(x-1))
        };
    \draw[blue] (0.9,1) -- (11.1,1);
    \draw[right,blue] (11,1) node {$f(\textcolor{blue}{x})$};
    \draw[color=red,thick] (1.0,0.4) -- (0.9,0.4);
    \draw[red] (0.95,0.4) node[left] {$q_1(\textcolor{blue}{x})$};
    \draw[color=red] (3.3,2.7) node {$y_1$};
    \draw[very thick] (7.0,0) -- (7.0,-0.2);
    \draw[<->,mygreen,thick] (7.0,1) -- (7,2) node[midway,black,left] {$e^{-\textcolor{mygreen}{0}}$};
    \draw (7,-0.1) node[below] {$\Omega(\textcolor{blue}{x},\textcolor{mygreen}{0})$};
    \draw[<->,mygreen,thick] (8.4,1) -- (8.4,1.36) node[midway,black,left] {$\scriptstyle e^{-\textcolor{mygreen}{1}}$};
    \draw[dotted] (8.4,0) -- (8.4,1);
    \draw[very thick] (8.4,0) -- (8.4,-0.2);
    \draw (8.4,-0.1) node[below] {$\Omega(\textcolor{blue}{x},\textcolor{mygreen}{1})$};
\end{tikzpicture}
\caption{$\glc{\Omega}$: on input $x$, starting from initial condition $q(x)$,
the PIVP $y'=p(y)$ ensures that $y_1(t)$ gives $f(x)$ with accuracy better than $e^{-\mu}$
as soon as the length of $y$ (from $0$ to $t$) is greater than $\Omega(\infnorm{x},\mu)$.
Note that we did not plot the other variables $y_2,\ldots,y_d$ and the horizontal
axis measures the length of $y$ (instead of the time $t$).\label{fig:glc}}
\end{figure}
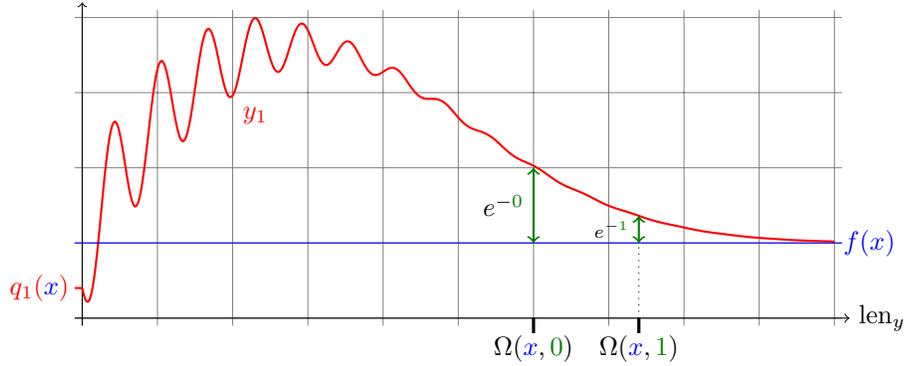

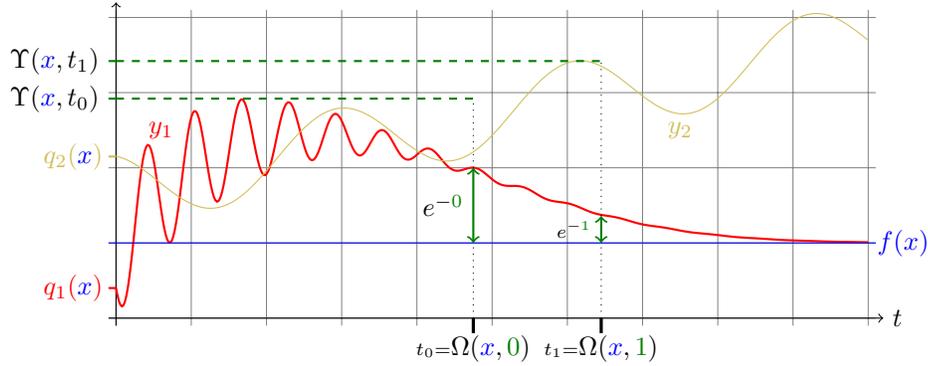
\begin{figure}
\centering
\begin{tikzpicture}[domain=1:11,samples=500,scale=1]
    \draw[very thin,color=gray] (0.9,-0.1) grid (11.1,4.1);
    \draw[->] (1,-0.1) -- (1,4.2);
    \draw[->] (0.9,0) -- (11.2,0) node[right] {$t$};
    \draw[color=red,thick] plot[id=fn_2] function{
        1+exp((x-1)*(7-x)/10)/2+(1+sin(10*x))/(1+exp(x-3))-1.5*exp(-(x-1))
        };
    \draw[blue] (0.9,1) -- (11.1,1);
    \draw[right,blue] (11,1) node {$f(\textcolor{blue}{x})$};
    \draw[color=red,thick] (1.0,0.4) -- (0.9,0.4);
    \draw[red] (0.95,0.4) node[left] {$q_1(\textcolor{blue}{x})$};
    \draw[color=red] (1.6,2.5) node {$y_1$};
    \draw[very thick] (5.75,0) -- (5.75,-0.2);
    \draw[<->,mygreen,thick] (5.75,1) -- (5.75,2) node[midway,black,left] {$e^{-\textcolor{mygreen}{0}}$};
    \draw (5.75,-0.1) node[below] {${\scriptstyle t_0=}\Omega(\textcolor{blue}{x},\textcolor{mygreen}{0})$};
    \draw[dotted] (5.75,0) -- (5.75,1);
    \draw[<->,mygreen,thick] (7.45,1) -- (7.45,1.36) node[midway,black,left] {$\scriptstyle e^{-\textcolor{mygreen}{1}}$};
    \draw[dotted] (7.45,0) -- (7.45,1);
    \draw[very thick] (7.45,0) -- (7.45,-0.2);
    \draw (7.45,-0.1) node[below] {${\scriptstyle t_1=}\Omega(\textcolor{blue}{x},\textcolor{mygreen}{1})$};
    \draw[color=myyellow] plot[id=fn_3] function {1.5+x/5+sin(x*2)/2};
    \draw[color=myyellow,thick] (1.0,2.15) -- (0.9,2.15);
    \draw[myyellow] (0.95,2.15) node[left] {$q_2(\textcolor{blue}{x})$};
    \draw[color=myyellow] (8.5,2.5) node {$y_2$};
    \draw[dashed,thick,mygreen] (1,2.92) -- (5.75,2.92);
    \draw[dotted] (5.75,2) -- (5.75,2.92);
    \draw[mygreen,thick] (0.9,2.92) -- (1,2.92);
    \draw (.9,2.92) node[left] {$\Upsilon(\textcolor{blue}{x},t_0)$};
    \draw[dashed,thick,mygreen] (1,3.42) -- (7.45,3.42);
    \draw[dotted] (7.45,1.36) -- (7.45,3.42);
    \draw[mygreen,thick] (0.9,3.42) -- (1,3.42);
    \draw (.9,3.42) node[left] {$\Upsilon(\textcolor{blue}{x},t_1)$};
\end{tikzpicture}
\caption{$\gc{\Upsilon}{\Omega}$: on input $x$, starting from initial condition $q(x)$,
the PIVP $y'=p(y)$ ensures that $y_1(t)$ gives $f(x)$ with accuracy better than $e^{-\mu}$
as soon as the time $t$ is greater than $\Omega(\infnorm{x},\mu)$. At the same time,
all variables $y_j$ are bounded by $\Upsilon(\infnorm{x},t)$. Note that variables $y_2,\ldots,y_d$
need not converge to anything.\label{fig:gc}}
\end{figure}

Indeed, \propref{prop:avant} can be reformulated as:

\begin{proposition}
Let $a$ and $b$ be some computable reals. A function $f:[a,b] \to \R$
is computable\footnote{In the classical sense, i.e. in the sense of
  computable analysis.} if and only if it is length-computable if and
only if it is time-space-computable. 
\end{proposition}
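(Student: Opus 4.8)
The plan is to show a chain of implications: classical computability $\Rightarrow$ length-computability $\Rightarrow$ time-space-computability $\Rightarrow$ GPAC-computability (\defref{def:joc2007}), and then invoke \propref{prop:avant} to close the loop. Since on a compact domain $[a,b]$ the norm $\infnorm{x}$ ranges over a bounded set, the roles of the ``resource bound'' functions $\Omega$ and $\Upsilon$ collapse: a bound depending on $\infnorm{x}$ and $\mu$ is, for $x\in[a,b]$, just a bound depending on $\mu$. Thus the only substantive content is matching the three shapes of the convergence/boundedness requirements against one another; the quantitative rates are immaterial for a pure computability statement (we are free to choose $\Omega,\Upsilon$ as badly behaved as we like).

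First I would observe that GPAC-computability in the sense of \defref{def:joc2007} immediately yields time-space-computability: given the PIVP from \defref{def:joc2007}, the convergence clause $|y_1(t)-f(x)|\leqslant e^{-t}$ for $t\geqslant1$ gives, for each accuracy $\mu$, the threshold $t\geqslant\max(1,\mu)$, so we may take $\Omega(a,\mu)=\max(1,\mu)$. For the space bound, a solution of a PIVP is real-analytic and hence continuous, so on any compact time interval it is bounded; moreover $y_1$ converges, and one can bound the remaining components by a continuous (indeed computable) function of $t$ using standard a priori bounds on PIVP solutions (the solution grows at most like a tower/exponential controlled by the polynomial $p$ and the initial data $q(x)$, which is itself bounded since $x\in[a,b]$). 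This furnishes some $\Upsilon$. Conversely, time-space-computability trivially implies classical computability by \propref{prop:avant} — or directly, since the defining PIVP with converging component $y_1$ is precisely a GPAC computation in the sense of \defref{def:joc2007} after a harmless reparameterization to force the $e^{-t}$ rate.

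Next I would handle length-computability. To go from a length-computable PIVP to a time-space-computable one, use the technical clause $\infnorm{y'(t)}\geqslant1$: it forces $\glen{y}(0,t)=\int_0^t\infnorm{y'(s)}ds\geqslant t$, so the length threshold $\Omega(\infnorm{x},\mu)$ is reached by time at most $\Omega(\infnorm{x},\mu)$ itself (re-using the letter), converting a length condition into a time condition; boundedness of $y$ on compact time intervals then gives $\Upsilon$ exactly as above. For the reverse direction — classical (or time-space) computability $\Rightarrow$ length-computability — I would take the GPAC construction underlying \propref{prop:avant}, which produces a converging $y_1$, and if necessary adjoin one extra coordinate $y_{d+1}$ with $y_{d+1}'=1$ (or more carefully, rescale the vector field so that the velocity is bounded below by $1$ without destroying convergence) so that the new system satisfies $\infnorm{y'}\geqslant1$; since $\infnorm{y'}$ is then also bounded above on bounded time intervals, length and time are comparable up to the system's own (computable) modulus, and one can read off an admissible $\Omega$ in the length sense from the time-modulus of convergence.

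The main obstacle is the $\infnorm{y'(t)}\geqslant1$ clause in the length definition: naively adding a coordinate $z'=1$ makes $z$ unbounded and, because $z$ would feed into the dynamics only if we let it, one must be careful that the augmentation does not perturb the converging component. The clean fix is to note that the existing GPAC construction behind \propref{prop:avant} already runs in ``real time'' in a controlled way, so its solution has velocity bounded below on the relevant region, or to multiply the whole vector field by $(1+\infnorm{p(y)}^2)^{-1/2}\cdot(\text{something}\geqslant1)$ — but that breaks polynomiality. The honest route, and the one I would follow, is to cite the length-reparameterization machinery the paper develops later (the footnote in \defref{def:glc} already flags that this clause can be weakened), and simply verify that the specific PIVPs arising from \propref{prop:avant} can be massaged to meet it while keeping $p,q$ polynomial with coefficients in $\K$; all other steps are routine continuity-and-compactness arguments.
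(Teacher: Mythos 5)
The weak link is your closing step, ``time-space-computable $\Rightarrow$ GPAC-computable in the sense of \defref{def:joc2007}''. In \defref{def:gc} the modulus $\Omega$ is an arbitrary function $\Rp^2\rightarrow\Rp$: nothing makes it computable, monotone, or even dominated by a PIVP-generable function. Your ``harmless reparameterization to force the $e^{-t}$ rate'' amounts to replacing $y$ by $z(t)=y(\phi(t))$ with $\phi(t)\geqslant\Omega(\infnorm{x},t)$ for $t\geqslant1$ while keeping the right-hand side polynomial; for that, $\phi$ must itself be generated by a polynomial ODE, hence computable and of restricted growth, and such a $\phi$ need not exist when $\Omega$ is arbitrary (take $\Omega$ majorizing every computable function). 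The alternative phrasing ``trivially by \propref{prop:avant}'' is circular for the same reason: \propref{prop:avant} concerns \defref{def:joc2007}, whose fixed $e^{-t}$ rate is precisely the effective modulus you have not produced. Without an effective modulus you only know that $y_1(t)$ is a computable-in-$(x,t)$ trajectory converging pointwise to $f(x)$, and such limits are in general merely limit-computable. So the loop does not close as written; this direction needs either an argument extracting (or an assumption providing) an effective modulus, or the paper's route, which presents the proposition as a reformulation of \propref{prop:avant} together with the length/time-space equivalence obtained by the explicit rescaling construction of \secref{app:gpcsubsetgpwc}, rather than a reduction to \defref{def:joc2007}.

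Two smaller points. The ``main obstacle'' you flag in the direction towards length-computability is not one: \defref{def:glc} imposes no bound on $\infnorm{y}$, so adjoining a decoupled clock $z'=1$, $z(0)=0$ gives $\infnorm{(y,z)'(t)}\geqslant1$ without perturbing the converging component --- exactly what the paper does; what actually remains (and what you only gesture at) is to bound the length accumulated up to the time threshold, using that on the compact domain $[a,b]$ the solutions are uniformly bounded on bounded time intervals, so that a length modulus can be read off the time modulus. Finally, your a priori ``tower/exponential'' bound for PIVP solutions is false in general (polynomial ODEs can blow up in finite time); it is harmless here only because global existence is part of the definitions and compactness of $[a,b]$ plus continuity of the flow yields the bound needed for $\Upsilon$, and the argument should be stated that way.
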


More surprisingly, we prove that it turns out that both classes are the same,
even at the complexity level.

\begin{theorem}\label{th:main-weak}
$\gplc=\gpc$.
\end{theorem}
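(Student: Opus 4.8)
The plan is to prove the two inclusions separately. The inclusion $\gpc\subseteq\gplc$ is essentially bookkeeping, while $\gplc\subseteq\gpc$ requires a reparameterization of the solution curve.

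\textbf{The easy inclusion $\gpc\subseteq\gplc$.} Start from a system $y'=p(y)$, $y(0)=q(x)$ witnessing $f\in\gc{\poly}{\poly}$, with polynomial time bound $\Omega$ and polynomial space bound $\Upsilon$. Adjoin a ``clock'' variable $z$ with $z'=1$, $z(0)=0$; this leaves the first $m$ components untouched, forces $\infnorm{(y,z)'(t)}\geqslant1$ (the technical condition of \defref{def:glc}), and makes the length strictly increasing in $t$. Since $\infnorm{y(t)}\leqslant\Upsilon(\infnorm{x},t)$ is polynomial and $p$ is a fixed polynomial, $\infnorm{y'(t)}=\infnorm{p(y(t))}$ is bounded by some polynomial $R(\infnorm{x},t)$, hence $\glen{(y,z)}(0,t)=\int_0^t\max(1,\infnorm{y'(s)})\,ds\leqslant t+\int_0^t R(\infnorm{x},s)\,ds=:S(\infnorm{x},t)$, again polynomial. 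Put $\Omega'(a,\mu)=S(a,\Omega(a,\mu))$, which is polynomial. If $\glen{(y,z)}(0,t)\geqslant\Omega'(\infnorm{x},\mu)$, then since the length is strictly increasing and at least equals $\glen{(y,z)}(0,\Omega(\infnorm{x},\mu))$, we get $t\geqslant\Omega(\infnorm{x},\mu)$, so $\infnorm{y_{1..m}(t)-f(x)}\leqslant e^{-\mu}$. Hence $f\in\glc{\Omega'}\subseteq\gplc$.

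\textbf{The hard inclusion $\gplc\subseteq\gpc$.} Fix a system $y'=p(y)$, $y(0)=q(x)$ witnessing $f\in\glc{\Omega}$ with $\Omega$ polynomial and $\infnorm{p(y(t))}=\infnorm{y'(t)}\geqslant1$ for all $t$. The idea is to traverse the same curve at approximately unit speed with respect to arc length. For any positive scalar weight $w$, a system $z'=w\,p(z)$, $z(0)=q(x)$, has exactly the same trajectory as $y$, only traveled at a different speed: $z(s)=y(\tau(s))$ with $\tau'(s)=w$, $\tau(0)=0$. If $w$ is chosen so that the speed $\infnorm{z'(s)}=w\,\infnorm{p(z(s))}$ stays in $[\tfrac12,2]$, then $\infnorm{z(s)}\leqslant\infnorm{q(x)}+2s$, which is polynomial in $(\infnorm{x},s)$; and writing $L(t)=\glen{y}(0,t)$ we have $\tfrac{d}{ds}L(\tau(s))=\infnorm{z'(s)}\in[\tfrac12,2]$, so $\tfrac12 s\leqslant L(\tau(s))\leqslant 2s$. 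In particular $\tau(s)\to\infty$ (because $L(t)\geqslant t$ forces $L$ to be unbounded), so the whole trajectory of $y$ is covered; and $L(\tau(s))\geqslant\Omega(\infnorm{x},\mu)$ whenever $s\geqslant2\Omega(\infnorm{x},\mu)$, giving $\infnorm{z_{1..m}(s)-f(x)}=\infnorm{y_{1..m}(\tau(s))-f(x)}\leqslant e^{-\mu}$ for all such $s$. Thus with output components $1..m$ the renormalized system would witness $f\in\gc{\poly}{\poly}$.

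\textbf{The main obstacle.} What remains, and is the crux, is realizing such a weight $w$ by a genuine polynomial ODE. Here $w$ should track $1/\max(1,\infnorm{p(z)})$ up to a multiplicative constant (recall $\infnorm{p(z)}\geqslant1$ on the relevant trajectory). Using the toolbox of generable functions of \citep{BGP16Arxiv} — approximations of the absolute value, of the maximum and hence of $\infnorm{\cdot}$, and reciprocals of quantities bounded below, together with closure under composition and their use as online subroutines driven by the current state — one adjoins auxiliary variables $a$ with a polynomial right-hand side maintaining $a(s)\approx\max(1,\infnorm{p(z(s))})$ and a further variable $b(s)\approx1/a(s)$, and sets $z'=b\,p(z)$, which is polynomial in $(z,a,b)$. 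The construction is forgiving: since \emph{any} positive weight preserves the curve, an imperfect $b$ only shifts $z$ along that same curve, so one needs only $b>0$ throughout and the resulting speed in $[\tfrac12,2]$; a polynomially converging tracker achieves this because $z$ moves at bounded speed and the target $1/\max(1,\infnorm{p(z)})$ is Lipschitz on the relevant range. The auxiliary variables are themselves polynomially bounded in $(\infnorm{x},s)$ (the norm tracker by $\poly(\infnorm{z})$, the reciprocal tracker by a constant), so all components satisfy a polynomial space bound and the solution is global. This yields $f\in\gc{\poly}{\poly}=\gpc$, and together with the first inclusion, $\gplc=\gpc$. I expect the effective implementation of this arc-length reparameterization — turning the speed renormalization into a polynomial ODE with the stated polynomial bounds and controlling the online tracking — to be the technically delicate part.
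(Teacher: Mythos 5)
Your first inclusion ($\gpc\subseteq\gplc$) is correct and is essentially the paper's own argument: adjoin a clock to enforce the technical condition and convert the time bound into a length bound via the polynomial speed bound. Your second inclusion also has the right skeleton, and it is the paper's strategy: traverse the same curve with a multiplicative weight so that the speed is bounded above and below by constants, making the new time and the length agree up to constants. But the step you explicitly defer --- realizing the weight by a genuine polynomial ODE with polynomial bounds --- is precisely the crux of the theorem, and the route you sketch (online trackers $a(s)\approx\max(1,\infnorm{p(z(s))})$ and $b(s)\approx1/a(s)$ kept accurate by ``polynomially converging'' dynamics) is not established and faces concrete obstacles. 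There is a circularity: the rate of variation of the target $1/\max(1,\infnorm{p(z)})$, hence the gain and accuracy your tracker must achieve, is controlled by the very speed bound $\infnorm{z'}\leqslant2$ that the tracker is supposed to guarantee, so you need a genuine bootstrap, not a remark. You also need the multiplicative accuracy of $b$ to hold uniformly in time without error accumulation; reach-type estimates give errors involving integrals of the perturbation or require the target to drift slowly relative to the gain, and making that work for all time is essentially the robustness machinery the paper only develops in later sections. Finally, initialization is not addressed: \defref{def:gc} requires a polynomial initial condition, so either you start the trackers at non-polynomial values of $x$ (which then needs its own justification, via generability) or you must argue recovery from a crude start, during which your speed window $[\tfrac12,2]$ and hence your space bound are unjustified.

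The paper sidesteps all of this by making the weight exact rather than tracked. Take $g=\norm_{\infty,1}\circ p$, a generable function with $\infnorm{p(v)}\leqslant g(v)\leqslant\infnorm{p(v)}+1$ (\lemref{lem:norm}); the technical condition $\infnorm{y'}\geqslant1$ gives $g\geqslant1$ along the trajectory, so the weight $\hat w=1/g(\hat y)$ automatically puts the new speed in $[\tfrac12,1]$ with no error analysis at all. Because $g$ is generable it has a PIVP representation $\hat z$ with $\jacobian{z}=r(z)$ and $g=z_1$, and the weight then satisfies the exact polynomial ODE $\hat w'=-\hat w^3\,r_1(\hat z)\,p(\hat y)$ (chain rule), alongside $\hat y'=\hat w\,p(\hat y)$ and $\hat z'=\hat w\,r(\hat z)p(\hat y)$; polynomial bounds on $\hat y,\hat z,\hat w$ then follow directly ($\glen{\hat y}(0,u)\leqslant u$, $z\in\gpval$, $\hat w\leqslant1$), and the initial data $z(q(x))$, $1/g(q(x))$ are handled by generability (in the spirit of \propref{prop:gpac_ext_ivp_stable_pre}). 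If you replace your approximate trackers by this exact construction, the rest of your argument goes through as written; as it stands, the tracking step is a genuine gap.
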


This turns out suprisingly to also be equivalent with many variants, both at the
computability and complexity level.

For example, the error could also be given as input, via an initial condition.
The intuition behind the following definition is that the initial condition also depends on the accuracy $\mu$.
Hence, instead of what happens in \defref{def:gc}, we are not guaranteed that a component converges to $f(x)$,
only that it stays in a $e^{-\mu}$-vicinity of $f(x)$ after some time, and that the space used is bounded.

\begin{definition}[Analog weak computability]\label{def:gwc}
Let $n,m\in\N$, $f:\subseteq\R^n\rightarrow\R^m$, $\Omega:\Rp^2\rightarrow\Rp$
and $\Upsilon:\Rp^3\rightarrow\Rp$.
We say that $f$ is $(\Upsilon,\Omega)$-weakly-computable if and only if there exist $d\in\N$,
$p\in\K^d[\R^d],q\in\K^d[\R^{n+1}]$
such that for any $x\in\dom{f}$ and $\mu\in\Rp$, there exists (a unique) $y:\Rp\rightarrow\R^d$ satisfying
for all $t\in\Rp$:
\begin{itemize}
\item $y(0)=q(x,\mu)$ and $y'(t)=p(y(t))$\hfill$\blacktriangleright$ $y$ satisfies a PIVP
\item if $t\geqslant\Omega(\infnorm{x},\mu)$ then $\infnorm{y_{1..m}(t)-f(x)}\leqslant e^{-\mu}$
\hfill$\blacktriangleright$ $y_{1..m}$ approximates $f(x)$
\item $\infnorm{y(t)}\leqslant\Upsilon(\infnorm{x},\mu,t)$
\hfill$\blacktriangleright$ $y(t)$ is bounded
\end{itemize}
We denote by $\gwc{\Upsilon}{\Omega}$ the set of
$(\Upsilon,\Omega)$-weakly-computable functions, by $\gpwc$ the set of
$(\poly,\poly)$-weakly-computable functions, and by $\cgwc$ the set of
weakly-computable functions.
\end{definition}

Or we could consider a notion of online-computation, the intuition behind it being that if
some external input $x(t)$ approaches a value $\bar{x}$ sufficiently close, then by waiting
 enough time, and assuming that the external input stays near the value $\bar{x}$ during that time interval, we will get
an approximation of $f(\bar{x})$ with some desired accuracy. This process is illustrated in \figref{fig:goc}.
By constantly changing the external input $x(t)$ and ``locking it'' during some time near some value, we are able to compute approximations
of $f(x)$ for several arguments in a single ``run'' of the GPAC.

\begin{figure}
\centering
\begin{tikzpicture}[domain=1:11,samples=500,scale=1]
    \begin{scope}[shift={(0,5.5)}]
    \draw[very thin,color=gray] (0.9,-0.1) grid (11.1,3.1);
    \draw[->] (1,-0.1) -- (1,3.2);
    \draw[->] (0.9,0) -- (11.2,0) node[right] {$t$};
    \draw[color=red,thick] plot[domain=1:4,id=fn_4] function{2.25+sin(x*7*pi)/4*(3+tanh((x-3)*(x-3)))/2};
    \draw[blue] (0.9,2.25) -- (4.8,2.25);
    \draw[blue] (0.9,2.25) node[left] {$\bar{x}$};
    \draw[dashed,mygreen,thick] (1,2.75) -- (5.2,2.75);
    \draw[dashed,mygreen,thick] (1,1.75) -- (5.2,1.75);
    \draw[<->,mygreen,thick] (5,1.75) -- (5,2.75);
    \draw[mygreen] (4.95,2.25) -- (5.05,2.25);
    \draw (5,2.45) node[right] {$e^{-\Lambda(\textcolor{blue}{\bar{x}},\textcolor{mygreen}{1})}$};

    \draw[red,thick] (4,2.25) -- (7,.5);

    \draw[color=red,thick] plot[domain=7:11,id=fn_5] function{0.5+sin(x*3*(3+tanh((x-9)*(x-9)))/4*pi)/4*(3+tanh((x-10)*(x-10)))/4};
    \draw[dashed,mygreen,thick] (6.2,0.75) -- (11,0.75);
    \draw[dashed,mygreen,thick] (6.2,0.25) -- (11,0.25);
    \draw[<->,mygreen,thick] (6.4,0.25) -- (6.4,0.75);
    \draw[mygreen] (6.35,.5) -- (6.45,.5);
    \draw (6.4,.5) node[left] {$e^{-\Lambda(\textcolor{blue}{\bar{x}'},\textcolor{mygreen}{2})}$};
    \draw[blue] (6.6,.5) -- (11.1,.5);
    \draw[blue] (11.1,.5) node[right] {$\bar{x}'$};
    \end{scope}

    \begin{scope}[shift={(0,4.75)}]
    \draw[thick,myyellow,|-|] (1,0) -- (3.2,0);
    \draw[thick,myyellow,-|] (3.2,0) -- (4.87,0);
    \draw[thick,myyellow,|-|] (6.54,0) -- (9.2,0);
    \draw[thick,myyellow,-|] (9.2,0) -- (11,0);
    \end{scope}

    \draw[dotted,myyellow,thick] (4.85,0) -- (4.85,7.25);
    \draw[dotted,myyellow,thick] (3.18,0) -- (3.18,4.75);
    \draw (2.1,4.7) node[below,myyellow] {undefined};
    \draw (4,4.7) node[below,myyellow] {accurate};
    \draw (3,4.8) node[above,myyellow] {stable};

    \draw (5.7,4.8) node[above,myyellow] {unstable};
    \draw (5.7,4.7) node[below,myyellow] {undefined};
    
    \draw[dotted,myyellow,thick] (6.55,0) -- (6.55,6.25);
    \draw[dotted,myyellow,thick] (9.18,0) -- (9.18,4.75);
    \draw (7.9,4.7) node[below,myyellow] {undefined};
    \draw (10.1,4.7) node[below,myyellow] {accurate};
    \draw (9.1,4.8) node[above,myyellow] {stable};

    \draw[very thin,color=gray] (0.9,-0.1) grid (11.1,4.1);
    \draw[->] (1,-0.1) -- (1,4.2);
    \draw[->] (0.9,0) -- (11.2,0) node[right] {$t$};
    \draw[color=red,thick] plot[id=fn_6] function{
        1+exp((x-1)*(7-x)/10)/2+(1+sin(10*x))/(1+exp(x-2.8))-1.5*exp(-(x-1))
        +(1+tanh(100*(x-6)))/2*sin(10*x)*exp((6-x)/1.75)
        };
    \draw[blue] (0.9,2.42) -- (4.85,2.42);
    \draw[right,blue] (0.95,2.42) node[left] {$f(\textcolor{blue}{\bar{x}})$};
    \draw[color=red,thick] (1.0,0.4) -- (0.9,0.4);
    \draw[red] (0.95,0.4) node[left] {$y_0$};
    \draw[color=red] (2.6,1.5) node {$y_1$};
    \draw[mygreen,dashed,thick] (3.2,2.8) -- (6,2.8);
    \draw[mygreen,dashed,thick] (3.2,2.08) -- (6,2.08);
    \draw[<->,mygreen,thick] (5.75,2.08) -- (5.75,2.8) node[midway,black,left] {$e^{-\textcolor{mygreen}{1}}$};
    \draw[mygreen] (5.7,2.44) -- (5.8,2.44);

    \draw[very thick] (1,0) -- (1,-0.2);
    \draw (1,-.1) node[below] {$t_1$};
    \draw[very thick] (3.2,0) -- (3.2,-0.2);
    \draw (3.2,-0.1) node[below] {${\scriptstyle t_1+}\Omega(\textcolor{blue}{\bar{x}},\textcolor{mygreen}{1})$};

    \draw[blue] (9.2,1.05) -- (11.1,1.05);
    \draw[right,blue] (11,1.05) node[right] {$f(\textcolor{blue}{\bar{x}'})$};
    \draw[very thick] (6.55,0) -- (6.55,-0.2);
    \draw (6.55,-.1) node[below] {$t_2$};
    \draw[very thick] (9.2,0) -- (9.2,-0.2);
    \draw (9.2,-0.1) node[below] {${\scriptstyle t_2+}\Omega(\textcolor{blue}{\bar{x}'},\textcolor{mygreen}{2})$};
    \draw[mygreen,dashed,thick] (9.2,1.19) -- (11,1.19);
    \draw[mygreen,dashed,thick] (9.2,0.91) -- (11,0.91);
\end{tikzpicture}
\caption{$\goc{\Upsilon}{\Omega}{\Lambda}$: starting from the (constant) initial condition $y_0$,
the PIVP $y'(t)=p(y(t),x(t))$ has two possible behaviors depending on the input signal $x(t)$.
If $x(t)$ is unstable, the behaviour of the PIVP $y'(t)=p(y(t),x(t))$ is undefined.
If $x(t)$ is stable around $\bar{x}$ with error at most $e^{-\Lambda(\infnorm{\bar{x}},\mu)}$
then $y(t)$ is initially undefined, but after a delay of at most $\Omega(\infnorm{\bar{x}},\mu)$, $y_1(t)$ gives $f(\bar{x})$
with accuracy better than $e^{-\mu}$.
In all cases, all variables $y_j(t)$ are bounded by a function ($\Upsilon$)
of the time $t$ and the supremum of $\infnorm{x(u)}$ during a small time interval $u\in[t-\delta,t]$.
\label{fig:goc}}
\end{figure}
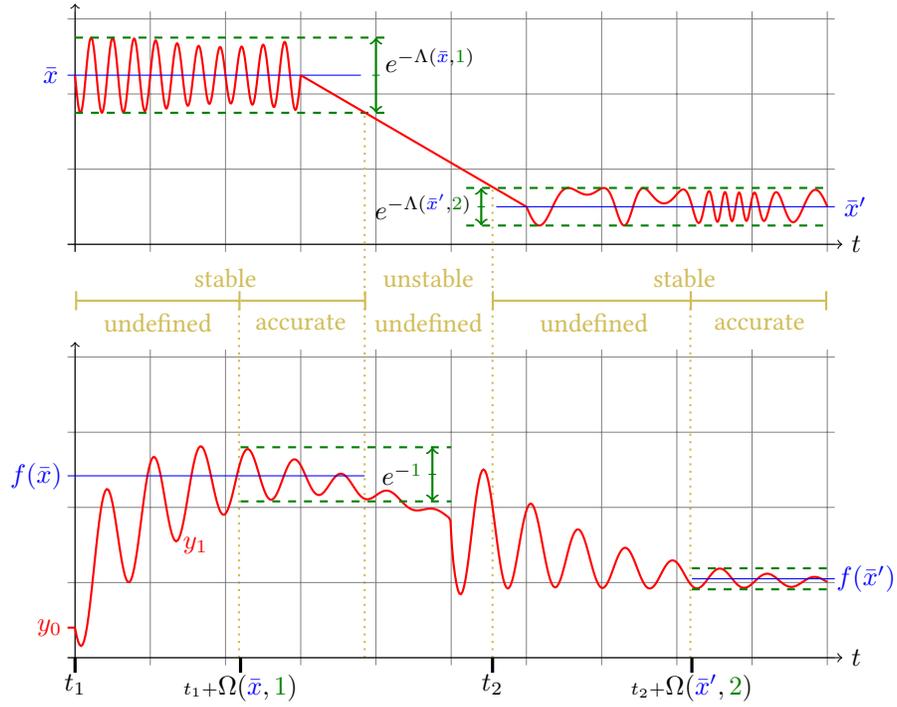

\begin{definition}[Online computability]\label{def:goc}
Let $n,m\in\N$, $f:\subseteq\R^n\rightarrow\R^m$ and $\Upsilon,\Omega,\Lambda:\Rp^2\rightarrow\Rp$.
We say that $f$ is $(\Upsilon,\Omega,\Lambda)$-online-computable if and only if there exist $\delta\geqslant0$, $d\in\N$ and
$p\in\K^d[\R^d\times\R^{n}]$ and $y_0\in\K^d$ such that for any $x\in C^0(\Rp,\R^n)$,
there exists (a unique) $y:\Rp\rightarrow\R^d$ satisfying for all $t\in\Rp$:
\begin{itemize}
\item $y(0)=y_0$ and $y'(t)=p(y(t),x(t))$
\item $\infnorm{y(t)}\leqslant\Upsilon\big(\pastsup{\delta}{\infnorm{x}}(t),t\big)$
\item for any $I=[a,b]\subseteq\Rp$, if there exist $\bar{x}\in\dom{f}$ and $\bar{\mu}\geqslant0$ such that
for all $t\in I$, $\infnorm{x(t)-\bar{x}}\leqslant e^{-\Lambda(\infnorm{\bar{x}},\bar{\mu})}$ then
$\infnorm{y_{1..m}(u)-f(\bar{x})}\leqslant e^{-\bar{\mu}}$ whenever
$a+\Omega(\infnorm{\bar{x}},\bar{\mu})\leqslant u\leqslant b$.
\end{itemize}
We denote by $\goc{\Upsilon}{\Omega}{\Lambda}$ the set of
$(\Upsilon,\Omega,\Lambda)$-online-computable, by $\gpoc$ the set of
$(\poly,\poly,\poly)$-online-computable functions and by $\cgoc$ the
set of online-computable functions.
\end{definition}

\begin{theorem} \label{th:main}
All notions of computations are equivalent, both at the computability
level:
$$\cglc=\cgc=\cgwc=\cgoc$$
and at the complexity level:
$$\gplc=\gpc=\gpwc=\gpoc$$
\end{theorem}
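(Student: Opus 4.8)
The plan is to close all four classes into one by proving a cycle of inclusions; since \thref{th:main-weak} already gives $\gplc=\gpc$, it suffices to establish
\[
\gpc\subseteq\gpwc\subseteq\gpc\qquad\text{and}\qquad\gpc\subseteq\gpoc\subseteq\gpc .
\]
The computability-level identities $\cglc=\cgc=\cgwc=\cgoc$ then follow from the same four constructions (together with the length/time-space construction behind \thref{th:main-weak}) with all polynomial bookkeeping dropped, so I concentrate on the complexity statement.

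Two of the inclusions are almost definitional. For $\gpc\subseteq\gpwc$: a $(\Upsilon,\Omega)$-time-space system, reread as a weak system whose initial condition $\tilde q(x,\mu):=q(x)$ ignores $\mu$, satisfies \defref{def:gwc} with $\Upsilon'(r,\mu,t):=\Upsilon(r,t)$, because convergence of $y_{1..m}$ to $f(x)$ is in particular an $e^{-\mu}$-approximation once $t\geqslant\Omega(\infnorm x,\mu)$. For $\gpoc\subseteq\gpc$: given an online system $y'(t)=p(y(t),x(t))$, $y(0)=y_0$, feed it the constant signal $x(t)\equiv x$ and adjoin $n$ frozen coordinates initialised to $x$; this turns it into an autonomous PIVP with initial condition in $\K^{d+n}[\R^n]$ and, since $\pastsup{\delta}{\infnorm{x}}(t)=\infnorm x$, space bound $\Upsilon(\infnorm x,t)$. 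A constant signal meets the stability hypothesis of \defref{def:goc} simultaneously for \emph{every} $\bar\mu\geqslant0$ on $I=\Rp$, so the readout genuinely converges to $f(x)$ with modulus $\Omega$, i.e.\ $f\in\gpc$.

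The substance lies in the other two inclusions, which both exploit the view of a polynomial ODE as a programming device (clocks, sample-and-hold registers, reliable reset by fast contraction, \watchdog/\monitor gadgets). For $\gpwc\subseteq\gpc$ one must turn the $\mu$-indexed family of $e^{-\mu}$-approximations into a single \emph{convergent} trajectory. I would run the weak system in successive phases: in phase $k$, drive an auxiliary block to an approximation of $q(x,k)$, let it evolve under $p$ for a clock-measured time $\Omega(\infnorm x,k)$, copy the first $m$ coordinates of the result into a held output register, then increment $k$. As $\Omega$ and $\Upsilon$ are polynomial, phase $k$ takes time $\poly(\infnorm x,k)$ in space $\poly(\infnorm x,k)$, so by total time $\poly(\infnorm x,\nu)$ phase $\nu$ is finished and the register is within $e^{-\nu}$ of $f(x)$, giving polynomial moduli. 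The delicate point is that resets and copies are only approximate in an ODE: one runs the block to slightly better than $e^{-k}$ (a logarithmic shift of the target accuracy), keeps the held register from drifting between phases, and verifies that all accumulated transient errors stay below the current target at every readout instant.

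For $\gpc\subseteq\gpoc$ the initial condition may not depend on $x$, so the system must read $x(t)$, \emph{detect} when it has settled near some $\bar x$, and only then (re)initialise the time-space system at $q(x(t))\approx q(\bar x)$ and release it under $p$; a \watchdog/\monitor continuously estimating the recent oscillation of $x$ does the detection, slaving the main block toward $q$ of the current reading while $x$ is judged unstable (which keeps it bounded by a polynomial in $\pastsup{\delta}{\infnorm{x}}(t)$ for a small $\delta>0$) and releasing it once $x$ is judged stable. The crux is the error budget: on a stable interval the reading is within $e^{-\Lambda(\infnorm{\bar x},\bar\mu)}$ of $\bar x$, so the initialisation error is $O(e^{-\Lambda})$, and propagating it through the time-space flow over the horizon $\Omega(\infnorm{\bar x},\bar\mu)$ costs a Gronwall factor $e^{L\Omega}$ with $L$ a Lipschitz bound for the fixed polynomial $p$ on the ball of radius $\approx\Upsilon(\infnorm{\bar x},\Omega(\infnorm{\bar x},\bar\mu))$, hence $L=\poly(\infnorm{\bar x},\bar\mu)$; taking $\Lambda(\infnorm{\bar x},\bar\mu):=\bar\mu+L\,\Omega(\infnorm{\bar x},\bar\mu)+O(1)$, still polynomial, forces the propagated error $\leqslant e^{-\bar\mu}$ on $[a+\Omega(\infnorm{\bar x},\bar\mu),b]$ (the slack also absorbing the detection delay and the switching transient), with the space bound inherited from the time-space system plus the bounded monitor. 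I expect the main obstacle to be exactly this simulation of \emph{discrete control inside a single continuous polynomial vector field} — implementing clocks, phase changes, resets, sample-and-hold and the stability watchdog as coordinates of one PIVP, and, harder, showing that their unavoidable transient errors stay dominated by the prescribed accuracies uniformly in $x$, $\mu$ and $t$ while dimension, coefficients, and all moduli remain polynomial; this is also why it is cleanest to route the length side through $\gpc$ via \thref{th:main-weak}, since the condition $\infnorm{y'(t)}\geqslant1$ of \defref{def:glc} sits awkwardly with the reset and slaving gadgets.
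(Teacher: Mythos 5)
Your overall plan (close a cycle of inclusions, with $\gpc\subseteq\gpwc$ and $\gpoc\subseteq\gpc$ trivial and two substantive constructions doing the work) is structurally sound, and the two easy inclusions match the paper. But the substantive step $\gpc\subseteq\gpoc$ has a genuine gap. \defref{def:goc} demands $\infnorm{y_{1..m}(u)-f(\bar x)}\leqslant e^{-\bar\mu}$ for \emph{every} $u\in[a+\Omega(\infnorm{\bar x},\bar\mu),b]$, where $b-a$ is unbounded, whereas your error budget only propagates the initialisation error over the horizon $\Omega(\infnorm{\bar x},\bar\mu)$. In your ``detect, initialise at $q(x(t))$, release under $p$'' scheme, once released the block free-runs for as long as the input stays stable; the initialisation error (of size about $e^{-\Lambda}$) and the never-exactly-zero residual of the switching/slaving gadget are then amplified by a Gronwall factor $e^{L(u-t^*)}$ that grows with $u$, not with $\Omega$, and $L$ cannot be taken smaller because it is dictated by the fixed polynomial $p$ on the invariant region. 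Since the watchdog threshold cannot depend on the unknown $\bar\mu$, no one-shot choice of $\Lambda$ absorbs this: on a sufficiently long stable interval the output eventually drifts beyond $e^{-\bar\mu}$, violating the definition. What is missing is a mechanism that either effectively \emph{stops} the flow after the needed horizon (the paper's ``slow-stop'' trick in the AWP$\subseteq$ARP step, \lemref{lem:pivp_slowstop}) or \emph{periodically re-samples and re-initialises} with a precision that increases with time while a sample-and-hold register keeps the output between runs (the paper's ASP$\subseteq$AXP and AXP$\subseteq$AOP steps, \lemref{lem:sample}). A similar, though fixable, quantitative slip occurs in your $\gpwc\subseteq\gpc$ phases: the reset accuracy must be $e^{-(k+\poly(\infnorm{x},k))}$ to survive the Gronwall factor $e^{L\Omega(\infnorm{x},k)}$, not merely ``slightly better than $e^{-k}$''; this still costs only polynomial time with a reach gadget, but it is exactly the kind of bookkeeping your sketch leaves implicit.

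It is also worth noting how your route differs from the paper's: the paper never proves $\gpwc\subseteq\gpc$ or $\gpc\subseteq\gpoc$ directly. Instead it interposes the robustness classes $\gprc$, $\gpsc$ and $\gpuc$: the weak system is first converted, once and for all, into a system tolerating perturbations of its initial condition and dynamics (AWP$\subseteq$ARP via slow-stop), then made total and error-graceful on all inputs (ARP$\subseteq$ASP), and only then plugged into the periodic sample-and-hold scheme with externally supplied input and precision (ASP$\subseteq$AXP$\subseteq$AOP), closing the cycle with the trivial AOP$\subseteq$ATSP. The point of that detour is precisely to avoid arguing, as you do, that the \emph{original} time-space or weak system tolerates the gadget errors window by window; once robustness is established abstractly, the discrete-control gadgets only ever drive a system that is guaranteed to forgive them, which is what makes the unbounded-interval and unknown-$\bar\mu$ issues above disappear. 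If you repair your construction, you will likely find yourself re-deriving exactly these intermediate notions.
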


The rest of the current paper is devoted to prove these equivalences between
definitions.  In \secref{sec:shannon} we recall some results
established by \citep{Sha41}, and generalize several of them to
multivariate functions.  The proof of the previous \thref{th:main} then
follows but is however
rather involved, and requires the introduction of other equivalent
intermediate classes. 
We show several inclusions between these classes which will guarantee the result of \thref{th:main}. First we show that
$\gpc \subseteq \gpwc$, which follows from
the fact that it is possible to rescale the system using the length
of the curve as a new variable to make sure it does not grow faster
than a polynomial 
(\secref{app:gpcsubsetgpwc}). 
The other direction ($\gpwc \subseteq \gpc$) is really harder: 
the first
step is to transform a computation into a computation that tolerates
small perturbations of the dynamics ($\gpwc \subseteq \gprc$, Section
\ref{app:weak_to_robust}). The second problem is to avoid that the system
explodes for inputs not in the domain of the function
($\gprc \subseteq \gpsc$, Section
\ref{app:robust_to_strong}). As a third step, we allow the system to have its inputs (input and precision) changed
during the computation, but we require that the system has a maximum delay to react to these changes
($\gpsc \subseteq \gpuc$, \secref{app:strong_to_unaware}). 
Finally, as a fourth step, we add a mechanism that feeds the
system with the input and some precision. By continuously increasing
the precision with time, we ensure that the system will converge when
the input is stable. The result of these 4 steps is a
lemma yielding a
nice notion of online-computation ($\gpuc \subseteq \gpoc$,
\secref{app:unaware_to_online}). Equality
$\gpc=\gpwc=\gpoc$ follows because time and length are related for polynomially bounded systems.

A side effect of the closure properties of these classes, and of our
proofs,  is that 
programming with (polynomial length) ODE becomes a pleasant
exercise, once the logic is understood. For example, simulating the
assignment $y:=g_\infty$ corresponds to the dynamics of $y(0)=y_0$,
$y'(t)=\reach(y(t),g(t))+E(t)$, for a fixed function
$\reach$, tolerating bounded error $E(t)$ on dynamics, and $g$ fluctuating around $g_\infty$
.  Other
example: from a $\gpc$ system computing $f$, just adding the
corresponding $\gpoc$-equations for $g$, yields a \PIVP~computing $g \circ f$ 
, by feeding the output of the system computing $f$ to the (online)
input of $g$.

\section{The PIVP Class}
\label{sec:shannon}

This sections recalls some known results about the class of functions generated
by polynomial initial value problems. We omitted the proofs but this section contains
all the necessary definitions and theorems needed to make this paper self-contained. Other lemmas
related to the PIVP class are introduced in the paper when needed to avoid a long
list of lemmas. A much more complete and detailed analysis of this class, with
all the proofs, can be found in \citep{BGP16Arxiv} but we give a short overview below.

Terminology is important here:
the functions of this class are called \emph{generable}, and should not be confused
with the notion of computable function introduced earlier.
Informally, the main results on this class are the following:
\begin{itemize}
\item this class is stable by arithmetic operations and composition;
\item this class contains many useful functions such as trigonometric functions;
\item if $y'=f(y)$ where $f$ in this class, then $y$ is also in this class.
\end{itemize}
The general idea is that working directly with polynomial differential equations
is a perilous exercise but it becomes easier if we can use more than polynomials.
For example, assume that the above results are true, and consider the following differential equation:
\[y(0)=1, y'(t)=\sin(y(t)).\]
It can be seen that $\sin$ is generable so it follows that $y$ is generable.
Another example is the following differential equation:
\[y(0)=1, y'(t)=\tanh(y(t)^2).\]
It can be seen again that $\tanh$ is generable, and polynomials are also
generable so $x\mapsto\tanh(x^2)$ is generable, thus $y$ is generable.
Hopefully these two examples will convince the reader that this class gives us
a lot of flexibility when writing differential systems.

Another important aspect of this class is the growth of the functions. Without restrictions,
it is very easy to build fast-growing functions, such as towers of exponentials.
In this work, we crucially need to bound the growth of functions to limit the
power of our systems. A necessary condition for this is that we should only write differential
equations of the form $y'=f(y)$ where $f$ is generable and bounded by a polynomial.
Of course this condition is trivially satisfied by polynomials but is also
verified by many other functions such as $\sin$ or $\tanh$.

The following concept can be attributed to \citep{Sha41}:
a function $f: \R \to \R$ is said to be a \PIVP{} function if there
 exists a system of the form \eqref{eq:ode}  with $f(t)=y_1(t)$ for all $t$, where $y_1$
denotes the first component of the vector $y$ defined in $\R^d$. 
We need in our proof to extend this concept to talk about (i)
multivariable functions and (ii) the growth of these functions.  This
leads to the following:

\begin{definition}[Generable function \citep{BGP16Arxiv}]\label{def:gpac_generable_ext}
Let $d,e\in\N$, $I$ be an open and connected subset of $\R^d$, $\mtt{sp}:\R\rightarrow\Rp$
and $f:I\rightarrow\R^e$. We say that $f\in\gval[\K]{\mtt{sp}}$ if and only if
there exist $n\geqslant e$, $p\in\MAT{n}{d}{\K}[\R^n]$, $x_0\in\K^d$, $y_0\in\K^n$
and $y:I\rightarrow\R^n$ satisfying for all $x\in I$:
\begin{itemize}
\item $y(x_0)=y_0$ and $\jacobian{y}(x)=p(y(x))$ (i.e. $\partial_jy_i(x)=p_{ij}(y(x))$)\\
    \hphantom{a}\hfill$\blacktriangleright$ $y$ satisfies a differential equation
\item $f(x)=y_{1..e}(x)$\hfill$\blacktriangleright$ $f$ is a component of $y$
\item $\infnorm{y(x)}\leqslant \mtt{sp}(\infnorm{x})$\hfill$\blacktriangleright$ $y$ is bounded by $\mtt{sp}$
\end{itemize}
\end{definition}

\begin{definition}[Polynomially bounded generable function]
The class of generable functions with polynomially bounded value is called $\gpval$:
\[f\in\gpval\Leftrightarrow\text{ there exists a polynomial $\mtt{sp}$ such that }f\in\gval{\mtt{sp}}\]
\end{definition}

The following 
closure properties can be seen as extensions of the results from
\citep{GBC09} to multivariate functions:  

\begin{lemma}[Arithmetic on generable functions
  \citep{BGP16Arxiv}]\label{lem:gpac_ext_class_stable} 
Let $d$, $e$, $n$, $m\in\N$, $\mtt{sp},\ovl{\mtt{sp}}:\R\rightarrow\Rp$,
$f:\subseteq\R^d\rightarrow\R^n\in\gval{\mtt{sp}}$ and
$g:\subseteq\R^e\rightarrow\R^m\in\gval{\ovl{\mtt{sp}}}$. Then:
\begin{itemize}
\item $f+g, f-g\in\gval{\mtt{sp}+\ovl{\mtt{sp}}}$ over $\dom{f}\cap\dom{g}$ if $d=e$ and $n=m$
\item $fg\in\gval{\max(\mtt{sp},\ovl{\mtt{sp}},\mtt{sp}\thinspace\ovl{\mtt{sp}})}$ if $d=e$ and $n=m$
\item $f\circ g\in\gval{\max(\ovl{\mtt{sp}},\mtt{sp}\circ\ovl{\mtt{sp}})}$ if $m=d$ and $g(\dom{g})\subseteq \dom{f}$
\end{itemize}
\end{lemma}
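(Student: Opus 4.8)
The plan is to prove the three closure properties by a single device: juxtapose the polynomial systems generating $f$ and $g$ into one system over a common domain, adjoin a handful of fresh variables whose derivatives produce the desired combination through the product rule or the chain rule, and re-order the state so that the combination occupies the leading components. Write $f\in\gval{\mtt{sp}}$ through $n_1\geqslant n$, $p^{(1)}\in\MAT{n_1}{d}{\K}[\R^{n_1}]$, $x_0\in\K^d$, $y_0^{(1)}\in\K^{n_1}$ and $y^{(1)}:\dom f\rightarrow\R^{n_1}$, and likewise $g\in\gval{\ovl{\mtt{sp}}}$ through $n_2,p^{(2)},\bar x_0,y_0^{(2)},y^{(2)}$.

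The one preliminary I would establish is that the base point can be moved freely: $y^{(1)}$ is itself a generable function (it is the entire state of its own generating system), hence for any $\K$-point $a$ of $\dom f$ we have $y^{(1)}(a)\in\K^{n_1}$ by stability of $\K$ under generable functions; as $y^{(1)}$ satisfies $\jacobian{y^{(1)}}=p^{(1)}(y^{(1)})$ it is then generated from $a$ with initial value $y^{(1)}(a)$, and similarly for $y^{(2)}$. For the sum, pick a $\K$-point $a$ in $\dom f\cap\dom g$, re-base both systems at $a$, and form $z=(y^{(1)},y^{(2)}):\dom f\cap\dom g\rightarrow\R^{n_1+n_2}$; since $p^{(1)}$ reads only the first $n_1$ coordinates and $p^{(2)}$ only the last $n_2$, $\jacobian{z}$ is polynomial in $z$ with $\K$ coefficients and $z(a)\in\K^{n_1+n_2}$. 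Adjoining $w\in\R^n$ with $w_i=y^{(1)}_i+y^{(2)}_i$ gives $\partial_j w_i=p^{(1)}_{ij}(y^{(1)})+p^{(2)}_{ij}(y^{(2)})$, again polynomial in $(w,z)$, with $w(a)\in\K^n$; placing $w$ first exhibits $f+g$ as a component, and $\infnorm{w(x)}\leqslant\infnorm{y^{(1)}(x)}+\infnorm{y^{(2)}(x)}$ together with $\infnorm{z(x)}\leqslant\max(\infnorm{y^{(1)}(x)},\infnorm{y^{(2)}(x)})$ are both bounded by $(\mtt{sp}+\ovl{\mtt{sp}})(\infnorm x)$. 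The difference $f-g$ is identical.

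For the product ($n=m$), take instead $w_i=y^{(1)}_iy^{(2)}_i$, so $\partial_j w_i=p^{(1)}_{ij}(y^{(1)})y^{(2)}_i+y^{(1)}_ip^{(2)}_{ij}(y^{(2)})$, still polynomial in $(w,z)$, $w(a)\in\K^n$, and $\infnorm{w(x)}\leqslant\infnorm{y^{(1)}(x)}\,\infnorm{y^{(2)}(x)}\leqslant(\mtt{sp}\,\ovl{\mtt{sp}})(\infnorm x)$, whence the bound $\max(\mtt{sp},\ovl{\mtt{sp}},\mtt{sp}\,\ovl{\mtt{sp}})$. For the composition $f\circ g$ (with $d=m$ and $g(\dom g)\subseteq\dom f$), pick a $\K$-point $a\in\dom g$; then $g(a)=y^{(2)}_{1..m}(a)$ lies in $\dom f\cap\K^d$, so re-base the system of $f$ at $g(a)$ — this is exactly where the base-point move is needed — and set $h(x)=y^{(1)}(g(x))$ on $\dom g$. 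The chain rule yields $\partial_j h_i(x)=\sum_{k=1}^{m}p^{(1)}_{ik}(h(x))\,\partial_jg_k(x)=\sum_{k=1}^{m}p^{(1)}_{ik}(h(x))\,p^{(2)}_{kj}(y^{(2)}(x))$, so $(h,y^{(2)})$ obeys a polynomial system with $\K$ data and initial value $(y^{(1)}(g(a)),y^{(2)}(a))\in\K^{n_1+n_2}$; with $h_{1..n}$ as output this is $f\circ g$, and $\infnorm{h(x)}\leqslant\mtt{sp}(\infnorm{g(x)})\leqslant(\mtt{sp}\circ\ovl{\mtt{sp}})(\infnorm x)$ (using that $\mtt{sp}$ may be taken nondecreasing) together with $\infnorm{y^{(2)}(x)}\leqslant\ovl{\mtt{sp}}(\infnorm x)$ gives the bound $\max(\ovl{\mtt{sp}},\mtt{sp}\circ\ovl{\mtt{sp}})$.

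I expect the only real obstacle to be this base-point normalisation, most pointedly in the composition step: the definition pins the auxiliary vector down by a value at one prescribed point, while composing forces us to read $y^{(1)}$ at $g(a)$, which is known to be a $\K$-point only after appealing to stability of $\K$ under generable functions. The rest — the derivative identities for the adjoined variables, the preservation of $\K$-coefficient polynomials under the operations used, and the norm estimates — is routine.
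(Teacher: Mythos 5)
Your proposal is correct and follows essentially the same route as the reference the paper cites for this lemma \citep{BGP16Arxiv}: juxtapose the two generating systems, adjoin fresh variables whose Jacobians come from the product rule (sum, product) or chain rule (composition), and invoke the stability of $\K$ under generable functions to relocate the base point — which is indeed the only delicate step, exactly as you identify. The paper itself omits the proof and defers to that reference, so there is nothing further to reconcile.
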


Our key result is that the solution to an ODE whose right-hand side is
generable, and possibly depends on an external and $C^1$ control, may be rewritten
as a GPAC. A corollary of this result is that the solution of a generable ODE is
generable.

\begin{proposition}[Generable ODE rewriting \citep{BGP16Arxiv}]\label{prop:gpac_ext_ivp_stable_pre}
Let $d,n\in\N$, $I\subseteq\R^n$, $X\subseteq\R^d$, $\mtt{sp}:\Rp\rightarrow\Rp$ and
$(f:I\times X\rightarrow\R^n)\in\gval[\K]{\mtt{sp}}$. Define $\ovl{\mtt{sp}}=\max(\idfun,\mtt{sp})$.
Then there exist $m\in\N$, $(g:I\times X\rightarrow\R^m)\in\gval[\K]{\ovl{\mtt{sp}}}$
and $p\in\K^m[\R^m\times\R^d]$ such that for any interval $J$,
$t_0\in\K\cap J$, $y_0\in\K^n\cap J$, $y\in C^1(J,I)$ and $x\in C^1(J,X)$,
if $y$ satisfies:
\[\left\{\begin{array}{@{}r@{}l@{}}y(t_0)&=y_0\\y'(t)&=f(y(t),x(t))\end{array}\right.
\]
then there exists $z\in C^1(J,\R^m)$ such that:
\[\left\{\begin{array}{@{}r@{}l@{}}z(t_0)&=g(y_0,x(t_0))
\\z'(t)&=p(z(t),x'(t))\end{array}\right.
\qquad \left\{\begin{array}{@{}r@{}l@{}}y(t)&=z_{1..d}(t)\\\infnorm{z(t)}&\leqslant\ovl{\mtt{sp}}(\infnorm{y(t),x(t)})\end{array}\right.
\]
\end{proposition}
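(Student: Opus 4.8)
\quad The plan is to use the definition of generability for $f$ together with the chain rule; the only real idea is that the state of the target system should be augmented with the value $F(y(t),x(t))$ of a generable witness $F$ of $f$, which makes the dynamics polynomial precisely because $\jacobian{F}$ is polynomial in $F$. Unfolding \defref{def:gpac_generable_ext} for $(f:I\times X\rightarrow\R^n)\in\gval[\K]{\mtt{sp}}$ yields an integer $N\geqslant n$, a matrix of polynomials $P\in\MAT{N}{n+d}{\K}[\R^N]$ and a map $F:I\times X\rightarrow\R^N$ with $F_{1..n}=f$, $\jacobian{F}(a,b)=P(F(a,b))$ on $I\times X$, and $\infnorm{F(a,b)}\leqslant\mtt{sp}(\infnorm{(a,b)})$; moreover $F$ is $C^1$ (indeed analytic), being defined by a polynomial differential system. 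Note that $(N,P,F)$ depends on $f$ alone, hence so will the $m$, $g$, $p$ produced below.

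Given $F$, I would take $m:=n+N$ and, for any interval $J$ and any $y\in C^1(J,I)$, $x\in C^1(J,X)$ with $y'=f(y,x)$ on $J$, define $z:J\rightarrow\R^m$ by $z(t):=\big(y(t),F(y(t),x(t))\big)$. This $z$ is $C^1$ on $J$ (a composition of $C^1$ maps) and satisfies $z_{1..n}(t)=y(t)$, which is the required output identity. Differentiating the first block, $z_{1..n}'(t)=y'(t)=f(y(t),x(t))=F_{1..n}(y(t),x(t))=z_{n+1..2n}(t)$, and differentiating the second block by the chain rule,
\[z_{n+1..m}'(t)=\jacobian{F}(y(t),x(t))\cdot\big(y'(t),x'(t)\big)=P\big(z_{n+1..m}(t)\big)\cdot\big(z_{n+1..2n}(t),x'(t)\big),\]
where I used $\jacobian{F}=P\circ F$ and again $y'=z_{n+1..2n}$. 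Both right-hand sides are polynomials in $z(t)$ and $x'(t)$ with coefficients in $\K$, so setting $p(z,w):=\big(z_{n+1..2n},\,P(z_{n+1..m})\cdot(z_{n+1..2n},w)\big)\in\K^m[\R^m\times\R^d]$ gives exactly $z'(t)=p(z(t),x'(t))$. For the initial condition, put $g(a,b):=\big(a,F(a,b)\big)$ for $(a,b)\in I\times X$, so that $z(t_0)=g(y_0,x(t_0))$; and $g$ lies in $\gval[\K]{\ovl{\mtt{sp}}}$ because $\jacobian{g}(a,b)$ is block-triangular with a constant top block and bottom block $P(F(a,b))=P(g_{n+1..m}(a,b))$, hence is a polynomial matrix evaluated at $g(a,b)$, with base point and $\K$-coefficients inherited from the witness of $F$.

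Finally, both space bounds are immediate from $\ovl{\mtt{sp}}=\max(\idfun,\mtt{sp})$: for all $(a,b)\in I\times X$,
\[\infnorm{g(a,b)}=\max\big(\infnorm{a},\infnorm{F(a,b)}\big)\leqslant\max\big(\infnorm{(a,b)},\mtt{sp}(\infnorm{(a,b)})\big)=\ovl{\mtt{sp}}(\infnorm{(a,b)}),\]
and applying this with $(a,b)=(y(t),x(t))$ gives $\infnorm{z(t)}\leqslant\ovl{\mtt{sp}}(\infnorm{(y(t),x(t))})$. I do not expect a genuine obstacle here: the whole content is the substitution $y'=f(y,x)=F_{1..n}(y,x)$ inside the chain-rule expression for $\frac{d}{dt}F(y(t),x(t))$, which converts the a~priori non-polynomial (merely generable) dynamics into a polynomial one. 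The only points requiring care are the dimension bookkeeping (the Jacobian $\jacobian{F}$ is $N\times(n+d)$, matching the vector $(y',x')\in\R^{n+d}$) and the verification that $g$ is generable with the right space bound, both of which are routine; the hypotheses $t_0\in\K$ and $y_0\in\K^n$ are not needed for the construction itself (they ensure, via stability of $\K$, that $z(t_0)\in\K^m$ whenever $x(t_0)\in\K^d$).
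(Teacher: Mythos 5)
Your construction is correct and is essentially the standard argument for this result (which the paper itself only cites from \citep{BGP16Arxiv} without proof): augment the state with the generable witness $F$ of $f$ evaluated along the trajectory, use the chain rule together with $\jacobian{F}=P\circ F$ and the substitution $y'=F_{1..n}(y,x)$ to get polynomial dynamics in $(z,x')$, and check that $g=(\idfun,F)$ is itself a generable witness with the bound $\ovl{\mtt{sp}}=\max(\idfun,\mtt{sp})$. Your bookkeeping (dimensions, $\K$-coefficients, independence of $m,g,p$ from $J,y,x$) is in order, and reading the statement's $z_{1..d}$ as $z_{1..n}$ is the right correction of the paper's index typo.
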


A simplified version of this lemma shows that generable functions are closed under
ODE solving.

\begin{corollary}[Closure under ODE of generable functions \citep{BGP16Arxiv}]\label{cor:gpac_ext_ivp_stable}
Let $d\in\N$, $J\subseteq\R$ an interval, $\mtt{sp},\ovl{\mtt{sp}}:\Rp\rightarrow\Rp$,
$f:\subseteq\R^d\rightarrow\R^d$ in \gval{\mtt{sp}}, $t_0\in\K\cap J$ and $y_0\in\K^d\cap\dom{f}$.
Assume there exists $y:J\rightarrow\dom{f}$ satisfying for all $t\in J$:
\[\left\{\begin{array}{@{}r@{}l@{}}y(t_0)&=y_0\\y'(t)&=f(y(t))\end{array}\right.
\qquad\infnorm{y(t)}\leqslant\ovl{\mtt{sp}}(t)\]
Then $y\in\gval{\max(\ovl{\mtt{sp}},\mtt{sp}\circ \ovl{\mtt{sp}})}$ and is unique.
\end{corollary}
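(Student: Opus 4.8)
Let me think about what this corollary is claiming. We have $f: \subseteq \mathbb{R}^d \to \mathbb{R}^d$ generable with bound $\mathtt{sp}$, and a solution $y$ to the ODE $y' = f(y)$, $y(t_0) = y_0$, with $y$ bounded by $\overline{\mathtt{sp}}(t)$. We want to conclude $y$ is generable (as a function of $t$, one variable) with bound $\max(\overline{\mathtt{sp}}, \mathtt{sp} \circ \overline{\mathtt{sp}})$, and unique.

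The key tool is Proposition (Generable ODE rewriting). That proposition is stated for $f: I \times X \to \mathbb{R}^n$ depending on state and external control $x$. To get the corollary, I should apply it with no external control — i.e., take $X$ to be trivial (a single point, or $\mathbb{R}^0$), so $f$ depends only on $y$. Then $n = d$, and $\mathtt{sp}: \mathbb{R}_+ \to \mathbb{R}_+$ (note the proposition wants $\mathtt{sp}$ on $\mathbb{R}_+$; the corollary has $\mathtt{sp}$ on $\mathbb{R}_+$ too, fine). Set $\overline{\mathtt{sp}}_{\text{prop}} = \max(\mathrm{id}, \mathtt{sp})$ as in the proposition.

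Applying the proposition: there exist $m$, a generable $g$ with bound $\max(\mathrm{id}, \mathtt{sp})$, and $p \in \mathbb{K}^m[\mathbb{R}^m \times \mathbb{R}^0]$ (so really $p \in \mathbb{K}^m[\mathbb{R}^m]$, since $x' $ is absent), such that $z(t_0) = g(y_0)$, $z'(t) = p(z(t))$, $y(t) = z_{1..d}(t)$, and $\|z(t)\| \leq \max(\mathrm{id}, \mathtt{sp})(\|y(t)\|)$. Wait — I need to be careful: the bound on $z$ is $\overline{\mathtt{sp}}_{\text{prop}}(\|(y(t), x(t))\|) = \max(\|y(t)\|, \mathtt{sp}(\|y(t)\|))$ since there is no $x$. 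Using the hypothesis $\|y(t)\| \leq \overline{\mathtt{sp}}(t)$ and monotonicity, this gives $\|z(t)\| \leq \max(\overline{\mathtt{sp}}(t), \mathtt{sp}(\overline{\mathtt{sp}}(t)))$ — exactly the claimed bound (I should note that $\mathtt{sp}$ and $\overline{\mathtt{sp}}$ are, as usual in this development, assumed nondecreasing, so this composition estimate is valid).

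Now $z$ is itself the solution of an autonomous polynomial IVP $z' = p(z)$, $z(t_0) = g(y_0)$, with $g(y_0) \in \mathbb{K}^m$ since $y_0 \in \mathbb{K}^d$ and $g$ is generable (hence $\mathbb{K}$ is stable under it), and it is bounded by $\max(\overline{\mathtt{sp}}, \mathtt{sp} \circ \overline{\mathtt{sp}})(t)$. This is precisely the data witnessing that $y = z_{1..d}$ is generable with that bound: a PIVP with $\mathbb{K}$-coefficients and $\mathbb{K}$-initial conditions, $y$ a component of the solution, solution bounded by $\max(\overline{\mathtt{sp}}, \mathtt{sp} \circ \overline{\mathtt{sp}})$. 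The domain here is $J$, an interval in $\mathbb{R}$ — I should observe $J$ is open and connected (or pass to its interior), matching the requirement in Definition (Generable function). Uniqueness of $y$ is immediate: $f$ generable implies $f$ is (locally) analytic, in particular locally Lipschitz, so the Picard–Lindelöf theorem gives uniqueness of the solution to $y' = f(y)$, $y(t_0) = y_0$ on $J$.

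The main obstacle — really a bookkeeping matter rather than a deep one — is matching the interfaces: the proposition is phrased with a generic external control $x$ and with bound functions on $\mathbb{R}_+$, while the corollary has no control and works with a one-dimensional time domain, so I must instantiate $X = \mathbb{R}^0$ carefully, check that the "$x'$" slot of $p$ genuinely disappears, and verify the monotonicity conventions on $\mathtt{sp}, \overline{\mathtt{sp}}$ make the estimate $\max(\|y\|, \mathtt{sp}(\|y\|)) \le \max(\overline{\mathtt{sp}}(t), \mathtt{sp}(\overline{\mathtt{sp}}(t)))$ legitimate. Everything else follows directly from the cited proposition plus standard ODE uniqueness.
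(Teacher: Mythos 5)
Your derivation is correct and matches the route the paper itself indicates: the paper omits the proof (citing \citep{BGP16Arxiv}) but explicitly presents the corollary as the specialization of Proposition~\ref{prop:gpac_ext_ivp_stable_pre} to the case of no external control, which is exactly what you do, including the use of stability of $\K$ under generable functions for the initial condition $g(y_0)$ and the monotonicity bookkeeping for the bound $\max(\ovl{\mtt{sp}},\mtt{sp}\circ\ovl{\mtt{sp}})$. Your uniqueness argument via local Lipschitz continuity (analyticity) of $f$ is also the standard one and is fine.
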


It follows that many 
polynomially bounded usual analytic\footnote{Functions from $\gpval$ are necessarily
analytic, as solutions of an analytic ODE are analytic.} functions are in the class
$\gpval$.

We will also need the following results, which tell us how the solution of a GPAC varies if there is a slight change in the parameters defining it. In the next theorem $\sigmap{p}$ denotes the sum of the absolute values of the coefficients of the polynomial $p$.

\begin{theorem}[Parameter dependency 
  \citep{BGP16Arxiv}]\label{th:dependency_right_side}
Let $I=[a,b]$, $p\in\R^n[\R^{n+d}]$, $k=\degp{p}$,
$e\in C^0(I,\R^d)$, $x,\delta\in C^0(I,\R^n)$ and $y_0,z_0\in\R^d$.
Assume that $y,z:I\rightarrow\R^d$ satisfy:
\[\left\{\begin{array}{@{}r@{}l}y(a)&=y_0\\y'(t)&=p(y(t),x(t))\end{array}\right.\qquad
\left\{\begin{array}{@{}r@{}l}z(a)&=z_0\\z'(t)&=e(t)+p(z(t),x(t)+\delta(t))\end{array}\right.\qquad t\in I\]
Assume that there exists $\varepsilon>0$ such that for all $t\in I$,
\begin{multline}
\mu(t):= \\
\left(\infnorm{z_0-y_0}+\int_{a}^t\infnorm{e(u)}+k\sigmap{p}M^{k-1}(u)\infnorm{\delta(u)}du
\right) 
\exp \left(k\sigmap{p}\int_{a}^tM^{k-1}(u)du\right) \\ <\varepsilon
\end{multline}
where $M(t)=\varepsilon+\infnorm{y(t)}+\infnorm{x(t)}+\infnorm{\delta(t)}$. Then for all $t\in I$,
\[\infnorm{z(t)-y(t)}\leqslant\mu(t)\]
\end{theorem}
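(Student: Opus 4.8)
The plan is to study the difference $w(t):=z(t)-y(t)$, which satisfies $w(a)=z_0-y_0$ and
\[w'(t)=e(t)+\bigl(p(z(t),x(t)+\delta(t))-p(y(t),x(t))\bigr),\]
and to bound $\infnorm{w(t)}$ by a Grönwall-type argument once the polynomial increment on the right-hand side is controlled. Integrating gives
\[\infnorm{w(t)}\leqslant\infnorm{z_0-y_0}+\int_a^t\infnorm{e(u)}\,du+\int_a^t\infnorm{p(z(u),x(u)+\delta(u))-p(y(u),x(u))}\,du,\]
so everything reduces to a good estimate of the last integrand.

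For the polynomial increment I would use the elementary bound that, for a polynomial $p\in\R^n[\R^{n+d}]$ of degree $k=\degp{p}$ and any $\xi_1,\xi_2$, parameterizing the segment $[\xi_1,\xi_2]$ and integrating the Jacobian of $p$ along it yields
\[\infnorm{p(\xi_2)-p(\xi_1)}\leqslant k\,\sigmap{p}\,R^{k-1}\,\infnorm{\xi_2-\xi_1},\]
where $R$ bounds $\infnorm{\cdot}$ along the segment; this follows from the monomial-by-monomial estimate that each row-sum of absolute values of $\partial_j p_i(\xi)$ is at most $k\,\sigmap{p}\,\max(1,\infnorm{\xi})^{k-1}$. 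I apply this at each time $u$ with $\xi_1=(y(u),x(u))$ and $\xi_2=(z(u),x(u)+\delta(u))$, so that $\infnorm{\xi_2-\xi_1}\leqslant\infnorm{w(u)}+\infnorm{\delta(u)}$, while a point of the segment at parameter $s$ is $(y(u)+sw(u),\,x(u)+s\delta(u))$, whose norm is at most $\infnorm{y(u)}+\infnorm{w(u)}+\infnorm{x(u)}+\infnorm{\delta(u)}$. The crucial observation is that, as soon as $\infnorm{w(u)}\leqslant\varepsilon$, this is bounded by $M(u)=\varepsilon+\infnorm{y(u)}+\infnorm{x(u)}+\infnorm{\delta(u)}$, so on that region $R(u)\leqslant M(u)$ (using $M(u)\geqslant\varepsilon$ to absorb the $\max(1,\cdot)$) and the integrand above is at most $k\,\sigmap{p}\,M^{k-1}(u)(\infnorm{w(u)}+\infnorm{\delta(u)})$.

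This conditional validity forces a bootstrap argument. Let $T=\{t\in I:\infnorm{w(u)}<\varepsilon\text{ for all }u\in[a,t]\}$; since $\infnorm{w(a)}=\infnorm{z_0-y_0}\leqslant\mu(a)<\varepsilon$ and $w$ is continuous, $T$ is a nonempty subinterval of $I$ containing a neighbourhood of $a$. On $T$ the previous estimates give
\[\infnorm{w(t)}\leqslant\Bigl(\infnorm{z_0-y_0}+\int_a^t\infnorm{e(u)}+k\sigmap{p}M^{k-1}(u)\infnorm{\delta(u)}\,du\Bigr)+\int_a^t k\sigmap{p}M^{k-1}(u)\,\infnorm{w(u)}\,du,\]
where the bracketed term is nondecreasing in $t$, so the classical Grönwall inequality yields $\infnorm{w(t)}\leqslant\mu(t)$ for all $t\in T$. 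Since by hypothesis $\mu(t)<\varepsilon$ throughout $I$, the set $T$ is also closed in $I$: if $\tau=\sup T$ were interior to $I$, continuity would give $\infnorm{w(\tau)}\leqslant\mu(\tau)<\varepsilon$, contradicting maximality of $T$. Hence $T=I$, and $\infnorm{w(t)}\leqslant\mu(t)$ on all of $I$, which is the claim.

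The main obstacle is precisely this circularity: the segment-norm bound $M(u)$, and therefore the whole Grönwall estimate, is only valid while $\infnorm{w(u)}$ stays below $\varepsilon$, yet that smallness is exactly what we want to conclude; the continuity/closedness step above is what closes the loop and is the only genuinely non-routine ingredient. The remaining pieces — the monomial-wise Jacobian bound for polynomials in terms of $\sigmap{p}$ and $\degp{p}$, and the standard Grönwall lemma with a nondecreasing forcing term — are routine.
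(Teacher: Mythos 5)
Your overall strategy --- setting $w=z-y$, bounding the polynomial increment along the segment joining $(y(u),x(u))$ to $(z(u),x(u)+\delta(u))$, applying the integral form of Gr\"onwall's inequality with a nondecreasing forcing term, and closing the loop with the continuity/bootstrap argument on the set where $\infnorm{w}<\varepsilon$ --- is the standard route; note that the paper itself does not reprove this theorem but imports it from \citep{BGP16Arxiv}, whose argument is of exactly this Gr\"onwall type. The bootstrap step, which you rightly identify as the only non-routine ingredient, is handled correctly.

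There is, however, one genuine gap, precisely where you claim that ``$M(u)\geqslant\varepsilon$ absorbs the $\max(1,\cdot)$''. The Lipschitz bound you invoke is $\infnorm{p(\xi_2)-p(\xi_1)}\leqslant k\sigmap{p}\max(1,R)^{k-1}\infnorm{\xi_2-\xi_1}$, and the factor $\max(1,\cdot)$ cannot be dropped unless $R\geqslant1$: a monomial of degree $j<k$ contributes on the order of $jR^{j-1}$, which for small $R$ exceeds $kR^{k-1}$. Nothing in the hypotheses forces $\varepsilon\geqslant1$ or $M(u)\geqslant1$, so your estimate with $M^{k-1}(u)$ alone fails near the origin; for instance $p(Y)=Y+Y^3$ (so $k=3$, $\sigmap{p}=2$) has local Lipschitz constant about $1$ near $0$, while $k\sigmap{p}M^{k-1}=6\varepsilon^2\ll1$ for small $\varepsilon$. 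This is not merely a flaw of your write-up: taking $d=n=1$, $p(Y,X)=Y+Y^3$, $x=\delta\equiv0$, $y_0=z_0=0$, $e\equiv 5\cdot10^{-3}$, $\varepsilon=10^{-1}$ and $I=[0,2]$, one has $y\equiv0$, $M\equiv\varepsilon$, hence $\mu(t)=5\cdot10^{-3}\,t\,e^{0.06t}<\varepsilon$ on $I$, yet $z'\geqslant e+z$ gives $z(2)\geqslant 5\cdot10^{-3}(e^{2}-1)\approx 0.032>\mu(2)\approx 0.011$; so the statement as printed here (with $M(t)=\varepsilon+\infnorm{y(t)}+\infnorm{x(t)}+\infnorm{\delta(t)}$ and no $\max$ with $1$) is not provable as such, and the source's formulation in effect keeps the $\max(1,\cdot)$. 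Your proof becomes correct verbatim if you replace $M(u)$ by $\max(1,M(u))$ throughout (equivalently, add the hypothesis $\varepsilon\geqslant1$) and delete the absorption remark; as used later in the paper this costs nothing, since there $M$ is polynomially large anyway.
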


\begin{lemma}[Modulus of continuity \citep{BGP16Arxiv}]\label{prop:generable_mod_cont}
Let $\mtt{sp}:\Rp\rightarrow\Rp$, $f\in\gval{\mtt{sp}}$. There exists $q\in\K[\R]$ such that
for any $x_1,x_2\in\dom{f}$, if $[x_1,x_2]\subseteq\dom{f}$ then
\[\infnorm{f(x_1)-f(x_2)}\leqslant\infnorm{x_1-x_2}q(\mtt{sp}(\max(\infnorm{x_1},\infnorm{x_2}))).\]
In particular, if $f\in\gpval[\K]$ then there exists $q\in\K[\R]$ such that
if $[x_1,x_2]\subseteq\dom{f}$ then
\[\infnorm{f(x_1)-f(x_2)}\leqslant\infnorm{x_1-x_2}q(\max(\infnorm{x_1},\infnorm{x_2})).\]
\end{lemma}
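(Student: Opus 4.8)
The plan is to exploit the defining property of a generable function: $f=y_{1..e}$ where $y$ satisfies $\jacobian{y}(x)=p(y(x))$ for a fixed polynomial matrix $p$ with coefficients in $\K$, and $\infnorm{y(x)}\leqslant\mtt{sp}(\infnorm{x})$. The key observation is that the gradient of $f$ is therefore a polynomial evaluated at $y(x)$, so it is controlled purely in terms of the bound $\mtt{sp}$. First I would fix $x_1,x_2\in\dom{f}$ with $[x_1,x_2]\subseteq\dom{f}$, and parametrize the segment by $\gamma(s)=x_1+s(x_2-x_1)$ for $s\in[0,1]$. Then $f(x_2)-f(x_1)=\int_0^1 \jacobian{(f\circ\gamma)}(s)\,ds=\int_0^1 \jacobian{f}(\gamma(s))(x_2-x_1)\,ds$, so that $\infnorm{f(x_1)-f(x_2)}\leqslant \infnorm{x_1-x_2}\sup_{s\in[0,1]}\infnorm{\jacobian{f}(\gamma(s))}$, where I use that the whole segment lies in $\dom{f}$ so $y$ is defined there.

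The next step is to bound $\infnorm{\jacobian{f}(\gamma(s))}$. Since $\partial_j y_i(x)=p_{ij}(y(x))$ and $f$ is just the first $e$ components of $y$, we have $\jacobian{f}(x)=\big(p_{ij}(y(x))\big)_{1\leqslant i\leqslant e,\,1\leqslant j\leqslant d}$. Each $p_{ij}$ is a polynomial in $n$ variables with coefficients in $\K$; writing $k$ for its degree and $\sigmap{p_{ij}}$ for the sum of absolute values of its coefficients, we get $|p_{ij}(v)|\leqslant \sigmap{p_{ij}}\max(1,\infnorm{v})^{k}$ for any $v\in\R^n$. Applying this with $v=y(\gamma(s))$ and using $\infnorm{y(\gamma(s))}\leqslant \mtt{sp}(\infnorm{\gamma(s)})\leqslant \mtt{sp}(\max(\infnorm{x_1},\infnorm{x_2}))$ — here I need $\mtt{sp}$ to be nondecreasing, which may be arranged by replacing $\mtt{sp}$ with its sup, or I absorb the convexity bound $\infnorm{\gamma(s)}\leqslant\max(\infnorm{x_1},\infnorm{x_2})$ directly — I obtain $\infnorm{\jacobian{f}(\gamma(s))}\leqslant \sigmap{p}\max\big(1,\mtt{sp}(\max(\infnorm{x_1},\infnorm{x_2}))\big)^{k}$, where $\sigmap{p}$ is (a bound on) the maximal row-sum of the $\sigmap{p_{ij}}$'s. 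Setting $q(u)=\sigmap{p}\max(1,u)^k\in\K[\R]$ (more precisely a polynomial dominating this, so that coefficients stay in $\K$ and are nonnegative), this yields the claimed inequality. The ``in particular'' clause for $f\in\gpval$ is then immediate: by definition $\mtt{sp}$ can be taken to be a polynomial, so $q\circ\mtt{sp}$ is again a polynomial with coefficients in $\K$ (using that $\K$ is a field), and we rename it $q$.

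The only mild subtlety — not really an obstacle — is ensuring the auxiliary function stays a genuine polynomial with coefficients in $\K$: one must replace the piecewise expression $\max(1,u)^k$ by an honest polynomial upper bound such as $(1+u)^k$ or $1+u^k+\cdots$, and keep all coefficients nonnegative so the bound is monotone in $u$; since $1\in\K$ and $\K$ is closed under the field operations this is routine. A second point to keep in mind is that the estimate genuinely requires the full segment $[x_1,x_2]$ to be contained in $\dom{f}$, which is exactly the hypothesis, so that $y$ (hence $\jacobian{f}$) is defined and the fundamental theorem of calculus applies along $\gamma$.
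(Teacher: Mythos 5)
Your proposal is correct, and it is essentially the standard argument: the paper itself does not prove this lemma (it is imported from \citep{BGP16Arxiv}), and the proof there proceeds exactly as you do — bound $\jacobian{f}(x)=\bigl(p_{ij}(y(x))\bigr)_{i\leqslant e}$ by $\sigmap{p}$ times a power of $\max(1,\infnorm{y(x)})\leqslant\max(1,\mtt{sp}(\infnorm{x}))$ and integrate along the segment $[x_1,x_2]$, which lies in $\dom{f}$ by hypothesis. The only caveats are the ones you already flag: $\mtt{sp}$ must be taken nondecreasing (as it is in the source definition, and as one may assume for the polynomial case) so that $\mtt{sp}(\infnorm{\gamma(s)})\leqslant\mtt{sp}(\max(\infnorm{x_1},\infnorm{x_2}))$, and $\max(1,u)^k$ must be replaced by a genuine polynomial such as $(1+u)^k$ with coefficients in $\K$ (for the ``in particular'' clause one also replaces $\mtt{sp}$ by a dominating polynomial with coefficients in $\K$).
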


After these statements, we can go to the proof of Theorem
\ref{th:main}. This is done by proving various implications.

\section{Proof that ALP is ATSP} 
\label{app:gpcsubsetgpwc}

The purpose of the current section is to show the following.

\begin{theorem}
$\gpc=\gplc$.
\end{theorem}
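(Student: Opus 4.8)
The plan is to prove the two inclusions separately, with the harder direction being $\gplc \subseteq \gpc$. For $\gpc \subseteq \gplc$ (equivalently $\gc{\poly}{\poly} \subseteq \glc{\poly}$): given a $(\poly,\poly)$-time-space-computable $f$ witnessed by a PIVP $y' = p(y)$, $y(0) = q(x)$ with space bound $\infnorm{y(t)} \leqslant \Upsilon(\infnorm{x},t)$ polynomial in both arguments and time bound $t \geqslant \Omega(\infnorm{x},\mu)$ polynomial, I first observe that along the trajectory $\infnorm{y'(t)} = \infnorm{p(y(t))} \leqslant \poly(\infnorm{y(t)}) \leqslant \poly(\Upsilon(\infnorm{x},t))$, so the curve length up to time $t$ is at most $\int_0^t \poly(\Upsilon(\infnorm{x},u))\,du$, which is polynomial in $\infnorm{x}$ and $t$. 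Hence $\glen{y}(0,t) \leqslant \poly(\infnorm{x},t)$. The one technical nuisance is the extra clause $\infnorm{y'(t)} \geqslant 1$ required by \defref{def:glc}: I would enforce it by adjoining a clock variable $\tau$ with $\tau' = 1$ (or more robustly, by the rescaling trick mentioned in \secref{app:gpcsubsetgpwc}, multiplying the dynamics by a generable factor that keeps the speed bounded below by $1$ without destroying the space bound). Since $\infnorm{y'} \geqslant 1$, time is bounded by length, so the old polynomial time bound $\Omega(\infnorm{x},\mu)$ translates into a polynomial length bound, giving $f \in \glc{\poly} = \gplc$.

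For the reverse inclusion $\gplc \subseteq \gpc$ (equivalently $\glc{\poly} \subseteq \gc{\poly}{\poly}$), the subtlety is that a length-bounded system may nonetheless traverse that length very slowly or very fast, so I need to \emph{reparameterize by arc length} to convert a length bound into a time bound while simultaneously controlling the norm. Given the witnessing PIVP $y' = p(y)$, $y(0) = q(x)$ with $\glen{y}(0,t) \leqslant \Omega(\infnorm{x},\mu)$ polynomial and $\infnorm{y'(t)} \geqslant 1$, I would introduce the length variable $\ell(t) = \glen{y}(0,t)$, which satisfies $\ell' = \infnorm{y'} = \infnorm{p(y)}$; this is not polynomial because of the $\infnorm{\cdot}$, but $\infnorm{p(y)}$ is generable (the infinity norm of a polynomial is generable, being built from $\abs$ and $\max$), so by \corref{cor:gpac_ext_ivp_stable} and \propref{prop:gpac_ext_ivp_stable_pre} the augmented system $(y,\ell,\dots)$ can still be rewritten as a genuine PIVP. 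The key point is that, because $\infnorm{y'} \geqslant 1$, we have $\ell(t) \geqslant t$, so $t \geqslant \Omega(\infnorm{x},\mu) \implies \ell(t) \geqslant \Omega(\infnorm{x},\mu) \implies \infnorm{y_{1..m}(t) - f(x)} \leqslant e^{-\mu}$, i.e. the length bound already gives a time bound of exactly the same polynomial form — no reparameterization of $y$ itself is needed for the convergence clause.

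What remains, and what I expect to be the main obstacle, is producing the \emph{space} bound $\infnorm{y(t)} \leqslant \poly(\infnorm{x},t)$, since \defref{def:glc} controls only the length of the curve, not where in space it sits. Here I would use the triangle inequality: $\infnorm{y(t) - y(0)} \leqslant \glen{y}(0,t) \leqslant \Omega(\infnorm{x},t)$ for $t$ large enough — but this only works once $\ell(t)$ has accumulated, whereas for small $t$ the bound $\ell(t) \geqslant t$ running the other way means $t$ is small and $\infnorm{y(t)} \leqslant \infnorm{y(0)} + \ell(t)$; combining, $\infnorm{y(t)} \leqslant \infnorm{q(x)} + \glen{y}(0,t)$. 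Now $\glen{y}(0,t)$ is bounded by $\Omega(\infnorm{x},\mu)$ only \emph{after} length $\Omega$ has been reached — but actually the length is a nondecreasing function of $t$, and the constraint is that accuracy $e^{-\mu}$ holds once length exceeds $\Omega(\infnorm{x},\mu)$; it does \emph{not} directly bound the length as a function of $t$. So the genuine difficulty is that an $\Omega$-length-computable system could have its length grow super-polynomially in $t$ while still satisfying all clauses. The fix, which I believe is the crux of \secref{app:gpcsubsetgpwc}, is again the condition $\infnorm{y'(t)} \geqslant 1$ together with its footnoted strengthening: one shows that for a $\gpval$-type right-hand side, $\infnorm{p(y(t))} \leqslant \poly(\infnorm{y(t)})$, so $\ell' \leqslant \poly(\infnorm{y})$ and $\infnorm{y}' \leqslant \ell'$; a Gronwall-type argument (or rather its polynomial analogue) then forces $\infnorm{y(t)}$ to be at most polynomial in $t$ provided the length-to-accuracy schedule $\Omega$ is polynomial, because reaching accuracy $e^{-\mu}$ at length $\poly(\infnorm{x},\mu)$ caps how much the curve can wander before the required precision is already in hand. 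Assembling: set the time bound $\Omega'(\infnorm{x},\mu) := \Omega(\infnorm{x},\mu)$ (using $\ell(t) \geqslant t$) and the space bound $\Upsilon(\infnorm{x},t) := \infnorm{q(x)} + \poly(\infnorm{x},t)$ from the Gronwall estimate, both polynomial, and conclude $f \in \gc{\poly}{\poly} = \gpc$.
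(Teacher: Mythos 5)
Your first direction ($\gpc\subseteq\gplc$) is essentially the paper's argument: add a clock variable and use the space bound $\Upsilon$ to bound $\infnorm{p(y(t))}$, hence the length, polynomially in $(\infnorm{x},t)$. One small slip: the concluding justification should not be ``time is bounded by length'' (that inequality goes the wrong way); what you actually need is precisely the length upper bound you computed, so that setting $\Omega^*(\alpha,\mu)$ to be that polynomial bound evaluated at $t=\Omega(\alpha,\mu)$ guarantees that $\glen{y}(0,t)\geqslant\Omega^*$ forces $t\geqslant\Omega(\alpha,\mu)$. Also, $\abs$ and $\max$ are \emph{not} generable (generable functions are analytic); one uses the generable approximation $\norm_{\infty,1}$ of Lemma~\ref{lem:norm}, as the paper does.

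The second direction has a genuine gap. You abandon the arc-length reparameterization (``no reparameterization of $y$ itself is needed'') and claim that a Gronwall-type argument, using $\infnorm{y'}\geqslant1$ and the polynomial schedule $\Omega$, forces $\infnorm{y(t)}\leqslant\poly(\infnorm{x},t)$ in the \emph{original} time variable. This is false: \defref{def:glc} bounds neither the norm nor the length as a function of $t$. Concretely, take $y_1'=-y_1z$, $z'=z$, $y_1(0)=x$, $z(0)=1$. Then $z(t)=e^t$, $y_1(t)=xe^{1-e^t}$, the speed satisfies $\infnorm{(y_1',z')}\geqslant z'\geqslant1$, the length lies between $e^t-1$ and $(1+|x|)(e^t-1)$, and with $\Omega(\alpha,\mu)=(1+\alpha)(\mu+1+\ln(1+\alpha))$ all clauses of \defref{def:glc} hold for $f\equiv0$; yet $\infnorm{(y_1,z)(t)}=e^t$, so no polynomial $\Upsilon$ exists for this system, and your assembled witness (original system plus $\ell$, with $\Upsilon=\infnorm{q(x)}+\poly$) does not satisfy \defref{def:gc}. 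Your own differential inequality $\tfrac{d}{dt}\infnorm{y}\leqslant\poly(\infnorm{y})$ only yields exponential growth (or finite-time-blowup-type bounds), never a polynomial one. The missing idea — and the heart of the paper's proof — is to actually change the time variable to (approximate) arc length: set $\hat\psi(t)=\int_0^t\norm_{\infty,1}(p(y(u)))\,du$ and consider $\hat y(u)=y(\hat\psi^{-1}(u))$, adjoining the auxiliary variables $\hat z$ (generating $\norm_{\infty,1}\circ p$ along the trajectory via its generable ODE) and $\hat w=(\hat\psi^{-1})'$ so that the rescaled system is again a PIVP with polynomial right-hand side. In the new time $u$ the traversed length is at most $u$, giving $\infnorm{\hat y(u)}\leqslant\infnorm{q(x)}+u$, i.e. the polynomial space bound, while the technical condition $\infnorm{y'}\geqslant1$ guarantees $\hat\psi(t)\leqslant\glen{y}(0,t)+t$ and hence that the length threshold $\Omega(\infnorm{x},\mu)$ is reached by new time $1+\Omega(\infnorm{x},\mu)$, giving the polynomial time bound. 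Without this rescaling step the inclusion $\gplc\subseteq\gpc$ does not go through.
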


\subsection{Some remarks}

We start by a remark:

\begin{lemma}[Norm function, \citep{BGP16Arxiv}]\label{lem:norm}
There is a family of functions $\norm_{\infty,\delta}\in\gpval$ 
such that,
for any $x\in\R^n$ and $\delta\in]0,1]$, we have:
\[\infnorm{x}\leqslant\norm_{\infty,\delta}(x)\leqslant\infnorm{x}+\delta.\]
\end{lemma}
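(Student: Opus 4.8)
The plan is to realize $\norm_{\infty,\delta}$ as a smooth over-approximation of $\infnorm{x}=\max(x_1,-x_1,\dots,x_n,-x_n)$ obtained by iterating a two-argument ``soft maximum'', which is itself built from a ``soft absolute value''. So I would work bottom-up: first a generable approximation of $|\cdot|$, then of $\max(\cdot,\cdot)$, then a balanced composition of $O(\log n)$ of them. Throughout I may assume $\delta\in\K$: if not, I replace it by any $\delta'\in\Q$ with $0<\delta'\le\delta$, which only strengthens the conclusion and makes all the coefficients below lie in $\K$.

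For the soft absolute value I would take $\abs_\delta(u)=\sqrt{u^2+\delta^2}$, so that $|u|=\sqrt{u^2}\le\abs_\delta(u)$ and, squaring, $u^2+\delta^2\le(|u|+\delta)^2$ gives $\abs_\delta(u)\le|u|+\delta$. To place $\abs_\delta$ in $\gpval$, I would first note that $t\mapsto\sqrt t$ is in $\gpval$ on the open set $\,]\delta^2/2,+\infty[\,$: the pair $(\sqrt t,1/\sqrt t)$ satisfies a polynomial ODE ($y'=z/2$, $z'=-z^3/2$) with initial data at $t=1$, and both components are polynomially bounded there (the second by the constant $\sqrt2/\delta$, $\delta$ being fixed). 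Since $u\mapsto u^2+\delta^2$ is a polynomial mapping $\R$ into $[\delta^2,+\infty[\subseteq\,]\delta^2/2,+\infty[$, closure of $\gpval$ under composition (\lemref{lem:gpac_ext_class_stable}) gives $\abs_\delta\in\gpval$.

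Next, set $\mx_\varepsilon(a,b)=\tfrac12\bigl(a+b+\abs_\varepsilon(a-b)\bigr)$. From $\max(a,b)=\tfrac12(a+b+|a-b|)$ one gets $\max(a,b)\le\mx_\varepsilon(a,b)\le\max(a,b)+\varepsilon/2$, and $\mx_\varepsilon\in\gpval$ by closure under affine combinations and composition. The coordinate maps $x\mapsto\pm x_i$ are generable and linearly (hence polynomially) bounded, so I would arrange $\pm x_1,\dots,\pm x_n$ as the $2n$ leaves of a balanced binary tree of height $h=\ceil{\log_2(2n)}$ and define $\norm_{\infty,\delta}(x)$ as the value at the root when every internal node applies $\mx_\varepsilon$ with $\varepsilon=\delta/h$. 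Because $\max$ is $1$-Lipschitz for $\infnorm{\cdot}$, an induction over the levels shows that the value at a node of height $k$ lies between the true maximum of its leaves and that maximum plus $k\varepsilon/2$; at the root this reads $\infnorm{x}\le\norm_{\infty,\delta}(x)\le\infnorm{x}+h\varepsilon/2\le\infnorm{x}+\delta$. Membership $\norm_{\infty,\delta}\in\gpval$ follows from \lemref{lem:gpac_ext_class_stable}, and $\norm_{\infty,\delta}(x)\le\infnorm{x}+1$ supplies the polynomial value bound.

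The one genuinely delicate point — the ``main obstacle'' — is respecting the polynomial-growth requirement of $\gpval$ at \emph{every} auxiliary variable, not merely at the output. A cheaper-looking candidate such as $\tfrac1k\log\sum_i(e^{kx_i}+e^{-kx_i})$ is ruled out precisely because its intermediate exponentials grow super-polynomially, whereas the $\sqrt{\cdot}$-route keeps every auxiliary quantity bounded ($1/\sqrt{u^2+\delta^2}\le\sqrt2/\delta$, everything else linear in $\infnorm{x}$). The remaining work is bookkeeping: keeping all generable domains open and connected, checking that $u\mapsto u^2+\delta^2$ indeed lands in the domain on which $\sqrt{\cdot}$ is $\gpval$, and tracking the additive error through the tree — all routine once the construction is fixed.
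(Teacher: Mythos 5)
Your construction is correct. Note that this paper does not actually prove \lemref{lem:norm}: it imports it from \citep{BGP16Arxiv}, so there is no internal proof to compare against; the cited work likewise assembles $\norm_{\infty,\delta}$ from generable ``soft'' primitives (an approximate absolute value and the approximate maximum of \lemref{lem:max}), so your route is in the same spirit, just made self-contained. Your argument checks out at the points where such a proof can fail: the coefficients are kept in $\K$ by passing to a rational $\delta'\leqslant\delta$ (legitimate, since it only tightens the sandwich $\infnorm{x}\leqslant\cdot\leqslant\infnorm{x}+\delta$); the pair $(\sqrt t,1/\sqrt t)$ does satisfy the polynomial system $y'=z/2$, $z'=-z^3/2$ with initial data at $1\in\K$ inside the open connected domain $]\delta'^2/2,+\infty[$, on which both components are polynomially bounded (the second by the constant $\sqrt2/\delta'$); the composition hypotheses of \lemref{lem:gpac_ext_class_stable} hold since $u\mapsto u^2+\delta'^2$ maps into that domain; and the tree induction with $\varepsilon=\delta/h$ gives total error $h\varepsilon/2\leqslant\delta$. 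You also correctly identify the genuine constraint of \defref{def:gpac_generable_ext}, namely that \emph{all} auxiliary variables (not just the output) must be polynomially bounded, which your $\sqrt{u^2+\delta^2}$ route respects and a log-sum-exp softmax would not.
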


\subsection{The proof}

%

In one direction the proof is simple because if the system uses polynomial time and space
then there is a relationship between time and length and we only need to add one
variable to the system to make sure that the technical condition holds.
The other direction is more involved because we need to rescale the system using
the length of the curve to make sure it does not grow faster than a polynomial,
which is ensured by the technical condition.

Let $f\in\gc{\Upsilon}{\Omega}$ where $\Upsilon$ and $\Omega$ are polynomials, which
we assume to be increasing functions. Apply \defref{def:gc} to get $d,p,q$, let $k=\degp{p}$
and define:
\[\Omega^*(\alpha,\mu)=\Omega(\alpha,\mu)\left(1+\sigmap{p}\max\big(1,\Upsilon(\infnorm{x},\Omega(\alpha,\mu))\big)^k\right)\]
Let $x\in\dom{f}$ and consider the following system:
\[\left\{\begin{array}{@{}r@{}l}
y(0)&=q(x)\\z(0)&=0\end{array}\right.\qquad
\left\{\begin{array}{@{}r@{}l}
y'(t)&=p(y(t))\\z'(t)&=1\end{array}\right.
\]
Note that $z(t)=t$ (this variable is there only to ensure that the length of $z$ grows
at least linearly).
Let $t,\mu\in\Rp$ and assume that $\glen{z}(0,t)\geqslant\Omega^*(\infnorm{x},\mu)$.
We will show that $t\geqslant\Omega(\infnorm{x},\mu)$ by contradiction. Assume the contrary
and let $u\in[0,t]$. By definition:
\begin{align*}
\infnorm{y(u),z(u)}
    &\leqslant1+\infnorm{y(u)}\leqslant1+\Upsilon(\infnorm{x},t)\\
    &<1+\Upsilon(\infnorm{x},\Omega(\infnorm{x},\mu))
\end{align*}
and thus
\begin{align*}
\infnorm{y'(u),z'(u)}&=\infnorm{1,p(y(u))}\\
    &<1+\sigmap{p}\big(1+\Upsilon(\infnorm{x},\Omega(\infnorm{x},\mu)))^k.
\end{align*}
Consequently:
\[\glen{y,z}(0,t)<t\sup_{u\in[0,t]}\infnorm{y'(u),z'(u)}\leqslant\Omega^*(\infnorm{x},\mu)\]
which is absurd. Since $t\geqslant\Omega(\infnorm{x},\mu)$, by definition we get
that
\[\infnorm{y_{1..m}(t)-f(x)}\leqslant e^{-\mu}.\]
Finally, $\infnorm{y'(t),z'(t)}\geqslant\infnorm{z'(u)}\geqslant1$
for all $t\in\Rp$. This shows that that $f\in\glc{\Omega^*}$ where $\Omega^*$ is a polynomial.

Let $f\in\glc{\Omega}$ where $\Omega$ is a polynomial, which we assume to be an increasing function.

Apply \defref{def:glc} to get $\Omega, d,p,q$. Also assume that the
polynomial  $\Omega$ is an increasing function. Let $k=\degp{p}$.
Apply \lemref{lem:norm} to get that $g(x)=\norm_{\infty,1}(p(x))$ belongs to $\gpval$.
Apply \defref{def:gpac_generable_ext} to get the corresponding $m,r,x_0$ and $z_0$.
Let $x\in\dom{f}$. For the analysis, it will be useful to consider the following systems:
\[\left\{\begin{array}{@{}r@{}l}
y(0)&=q(x)\\z(x_0)&=z_0
\end{array}\right.\qquad
\left\{\begin{array}{@{}r@{}l}
y'(t)&=p(y(t))\\
\jacobian{z}(x)&=r(z(x))\end{array}\right.
\]
Note that by definition $z_1(x)=g(x)$. Define $\psi(t)=g(y(t))$
and $\hat{\psi}(u)=\int_0^u\psi(t)dt$. Now define the following system:
\[\left\{\begin{array}{@{}r@{}l}
\hat{y}(0)&=q(x)\\\hat{z}(0)&=z(q(x))\\\hat{w}(0)&=\frac{1}{g(q(x))}
\end{array}\right.\qquad
\left\{\begin{array}{@{}r@{}l}
\hat{y}'(u)&=\hat{w}(u)p(\hat{y}(u))\\
\hat{z}'(u)&=\hat{w}(u)r(\hat{z}(u))p(\hat{y}(u))\\
\hat{w}'(u)&=-\hat{w}(u)^3r_1(\hat{z}(u))p(\hat{y}(u))
\end{array}\right.
\]
where by $r_1$ we mean the first row of $r$. We will check that $\hat{y}(u)=y(\hat{\psi}^{-1}(u))$,
$\hat{z}(u)=z(\hat{y}(u))$ and $\hat{w}(u)=(\hat{\psi}^{-1})'(u)$. We will use the fact that
for any $h\in C^1$, $(h^{-1})'=\frac{1}{h'\circ h^{-1}}$. Also note that $\hat{\psi}'=\psi$.
\begin{itemize}
\item $\hat{y}(0)=y(\hat{\psi}^{-1}(0))=y(0)=q(x)$
\item $\hat{y}'(u)=(\hat{\psi}^{-1})'(u)y'(\hat{\psi}^{-1}(u))=\hat{w}(u)p(y(\hat{\psi}^{-1}(u)))=\hat{w}(u)p(\hat{y}(u))$
\item $\hat{z}(0)=z(\hat{y}(0))=z(q(x))$
\item $\hat{z}'(u)=\jacobian{z}(\hat{y}(u))\hat{y}'(u)=\hat{w}(u)r(z(\hat{y}(u)))p(\hat{y}(u))=\hat{w}(u)r(\hat{z}(u))p(\hat{y}(u))$
\item $\hat{w}(0)=\frac{1}{\hat{\psi}'(\hat{\psi}^{-1}(0))}=\frac{1}{\psi(0)}=\frac{1}{g(q(x))}$
\item $\hat{w}'(u)=\frac{-(\hat{\psi}^{-1})'(u)\hat{\psi}''(\hat{\psi}^{-1}(u))}{\cramped{(\hat{\psi}'(\hat{\psi}^{-1}(u)))^2}}
    =-\hat{w}(u)^3\psi'(\hat{\psi}^{-1}(u))=\scalarprod{\grad{g}(y(\hat{\psi}^{-1}(u)))}{y'(\hat{\psi}^{-1})}$
    and since $\grad{g}(x)=\transpose{r_1(z(x))}$ (transpose of the first row of the Jacobian
matrix of $z$ because $g=z_1$) then\[
\hat{w}'(u)
    =-\hat{w}(u)^3\scalarprod{\transpose{r_1(z(y(\hat{\psi}^{-1}(u))))}}{p(y(\hat{\psi}^{-1}(u)))}
    =-\hat{w}(u)^3r_1(\hat{z}(u))p(\hat{y}(u))
\]
\end{itemize}

We now claim that this system computes $f$ quickly and has a polynomial bound.
First note that by \lemref{lem:norm}:
\[\infnorm{y'(t)}\leqslant g(y(t))\leqslant\infnorm{y'(t)}+1\]
and thus
\[\glen{y}(0,t)\leqslant\hat{\psi}(t)\leqslant\glen{y}(0,t)+t.\]
Thus
\begin{align*} \glen{\hat{y}}(0,u)&=\int_0^u\infnorm{\hat{y}'(\xi)}d\xi=
\int_0^{\hat{\psi}^{-1}(u)}\infnorm{\hat{w}(\hat{\psi}(t))p(\hat{y}(\hat{\psi}(t)))}\hat{\psi}'(t)dt
\\&=\int_0^{\hat{\psi}^{-1}(u)}\infnorm{(\hat{\psi}^{-1})'(\hat{\psi}(t))\hat{\psi}'(t)p(y(t))}dt \\
&=\int_0^{\hat{\psi}^{-1}(u)}\infnorm{p(y(t))}dt=\glen{y}(0,\hat{\psi}^{-1}(u))\leqslant\hat{\psi}(\hat{\psi}^{-1}(u))\leqslant
u.
\end{align*}
It follows that
\[\infnorm{\hat{y}(u)}\leqslant\infnorm{\hat{y}(0)}+u\leqslant\infnorm{q(x)}+u\leqslant\poly(\infnorm{x},u).\]
Similarly:
\[\infnorm{\hat{z}(u)}=\infnorm{z(\hat{y}(u))}\leqslant\poly(\infnorm{x},u)\]
because $z\in\gpval$ and thus is polynomially bounded. Finally,
\[\infnorm{\hat{w}}=\frac{1}{\psi(\hat{\psi}^{-1}(u)}=\frac{1}{g(\hat{y}(u))}\leqslant\frac{1}{\infnorm{y'(\hat{\psi}^{-1}(u))}}\leqslant1\]
because by hypothesis, $\infnorm{y'(t)}\geqslant1$ for all $t\in\Rp$. This shows that
indeed $\infnorm{(\hat{y},\hat{z},\hat{w})(u)}$ is polynomially bounded in $\infnorm{x}$ and $u$.
Now let $\mu\in\Rp$ and $t\geqslant1+\Omega(\infnorm{x},\mu)$ then
\begin{align*}
\glen{\hat{y}}(0,t)&=\glen{y}(0,\hat{\psi}^{-1}(t))\\
&\geqslant\hat{\psi}(\hat{\psi}^{-1}(t))-\hat{\psi}^{-1}(t)\\
&\geqslant t-\hat{\psi}^{-1}(t)\\
&\geqslant 1+\Omega(\infnorm{x},\mu)-\frac{1}{\psi(\hat{\psi}^{-1}(t))}\\
&\geqslant \Omega(\infnorm{x},\mu)
\end{align*}
because, as we already saw, $\infnorm{\psi(\hat{\psi}^{-1}(t))}\geqslant1$.
Thus by definition:
\[\infnorm{\hat{y}_{1..m}(t)-f(x)}\leqslant e^{-\mu}\]
because $\hat{y}(t)=y(\hat{\psi}^{-1}(t))$.
This shows that $f\in\gpc$.

\section{Proof that ALP implies AWP} 

The purpose of the current section is to state the following.

\begin{theorem}
$\gpc=\gpwc$.
\end{theorem}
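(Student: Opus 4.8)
The inclusion $\gpc\subseteq\gpwc$ is immediate, and is what I would dispatch first. Starting from $f\in\gpc$, apply \defref{def:gc} to obtain $d$, $p$, $q$ together with the polynomials $\Upsilon,\Omega$, and simply set $\hat q(x,\mu)=q(x)$ — a polynomial in $n+1$ variables that does not actually use $\mu$ — and $\hat\Upsilon(\infnorm{x},\mu,t)=\Upsilon(\infnorm{x},t)$. The very same PIVP, read now with initial condition $\hat q(x,\mu)$, fulfils every clause of \defref{def:gwc}: for a fixed $x$ the $\gpc$-system already satisfies $\infnorm{y_{1..m}(t)-f(x)}\leqslant e^{-\mu}$ for $t\geqslant\Omega(\infnorm{x},\mu)$ \emph{simultaneously} for all $\mu$, which is a fortiori the per-$\mu$ requirement of weak computability, and the bound $\infnorm{y(t)}\leqslant\Upsilon(\infnorm{x},t)=\hat\Upsilon(\infnorm{x},\mu,t)$ transfers verbatim (uniqueness of $y$ is unchanged, the initial condition being the same). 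Hence $f\in\gpwc$.

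The reverse inclusion $\gpwc\subseteq\gpc$ is the hard half, and rather than attack it head on I would route it through the intermediate classes introduced in the next sections. The difficulty is structural: \defref{def:gwc} hands us, for each target accuracy $\mu$, a \emph{separate} trajectory issued from $q(x,\mu)$, and it guarantees nothing about the behaviour of those trajectories under perturbations of the dynamics or of their starting point — yet any mechanism that amalgamates the $\mu$-indexed systems into a single converging one must, at some point, ``reset'' the simulation inexactly. The plan is therefore: (1) replace the weakly computable system by a perturbation-tolerant one, the step $\gpwc\subseteq\gprc$ of \secref{app:weak_to_robust}; (2) tame the off-domain behaviour so the trajectory does not blow up when $x\notin\dom f$, i.e.\ $\gprc\subseteq\gpsc$; (3) let the input and the requested precision be supplied online with a bounded reaction delay, i.e.\ $\gpsc\subseteq\gpuc$ (\secref{app:strong_to_unaware}) and $\gpuc\subseteq\gpoc$ (\secref{app:unaware_to_online}); and finally (4) convert the resulting online system back into a $\gpc$ one, which is possible because for polynomially bounded systems time and curve length are polynomially related — i.e.\ by invoking $\gpc=\gplc$ from \secref{app:gpcsubsetgpwc}.

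The concrete picture behind these reductions — and the way I would keep the construction honest — is a smooth ``sample-and-hold'' iteration: cut $\Rp$ into rounds $[T_k,T_{k+1}]$ with $T_k=\poly(\infnorm{x},k)$, devote round $k$ to accuracy $\mu_k=k$, and in that round (i) drive the simulation state towards $q(x,k)$ by a fast attracting $\reach$-type dynamics, (ii) run the true dynamics $p$ for time at least $\Omega(\infnorm{x},k)$, and (iii) latch the now $e^{-k}$-accurate output into the output variables through a second hold gadget. Granting all of this, the accounting that places the result in the \emph{polynomial} class is routine: the accuracy reached at time $t$ is $e^{-\mu}$ for any $\mu$ with $T_\mu\leqslant t$, so the effective convergence modulus $\Omega^{*}(\infnorm{x},\mu)=T_\mu$ is a polynomial; and on round $k$ every variable is bounded by $\Upsilon(\infnorm{x},k,\Omega(\infnorm{x},k))$ plus the bounded clock and hold variables, which is $\poly(\infnorm{x},k)=\poly(\infnorm{x},t)$ since $\mu_k$ grows slowly with $t$. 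I expect the real obstacle to be nothing in this bookkeeping but the very first step, $\gpwc\subseteq\gprc$: manufacturing, from a system that only works in the noiseless case, one that absorbs bounded errors on its dynamics, and implementing the reset/run/latch cycle faithfully as a single polynomial ODE of the correct smoothness — which is exactly why the next sections are needed.
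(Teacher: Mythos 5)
Your proposal matches the paper's treatment: the inclusion $\gpc\subseteq\gpwc$ is indeed immediate from Definitions \ref{def:gc} and \ref{def:gwc} (exactly as you argue, by ignoring $\mu$ in the initial condition), and the converse is delegated to the chain $\gpwc\subseteq\gprc\subseteq\gpsc\subseteq\gpuc\subseteq\gpoc\subseteq\gpc$ established in the subsequent sections, which is precisely what the paper does. One small correction: the closing step $\gpoc\subseteq\gpc$ is proved directly by freezing the input ($x'(t)=0$) and running the online system, not by invoking $\gpc=\gplc$; otherwise your route and intuition (robustness, off-domain taming, sample-and-hold) coincide with the paper's.
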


\begin{proof}
The inclusion $\gpc\subseteq\gpwc$ is immediate from Definitions \ref{def:gc} and \ref{def:gwc}. The other inclusion will follow from the results of the other sections.
\end{proof}

\section{Proof that AWP implies ARP}
\label{app:weak_to_robust}

The purpose of the current section is to state the following:

\begin{theorem}  \label{th:weak_to_robust}
$\gpwc\subseteq\gprc$.
\end{theorem}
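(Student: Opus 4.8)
The plan is to take $f\in\gpwc$, fix the data $d$, $p$, $q$ and the polynomial bounds $\Upsilon,\Omega$ provided by \defref{def:gwc}, and upgrade the construction to one that still delivers $f(x)$ up to $e^{-\mu}$ when its dynamics is hit by an additive perturbation $e(t)$ with $\infnorm{e(t)}\leqslant e^{-\Lambda(\infnorm{x},\mu)}$, for a polynomial $\Lambda$ to be chosen; since $\gprc$ only constrains behaviour on $\dom{f}$, we only need this for $x\in\dom{f}$. The core obstruction is that the bare PIVP $y'=p(y)$ is not robust at all: writing $k=\degp{p}$, \thref{th:dependency_right_side} tells us that a perturbed solution $z$, $z'=p(z)+e$, obeys $\infnorm{z(t)-y(t)}\leqslant e^{-\Lambda(\infnorm{x},\mu)}(1+t)\exp\!\big(k\sigmap{p}\int_0^tM^{k-1}\big)$ with $M(u)\leqslant\varepsilon+\infnorm{y(u)}\leqslant\varepsilon+\Upsilon(\infnorm{x},\mu,u)$, so the deviation grows like $e^{-\Lambda}e^{\poly(\infnorm{x},\mu,t)}$ and is unbounded in $t$. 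A perturbation present for all time therefore destroys any fixed-precision guarantee, and some error-correcting device is unavoidable.

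\emph{Step 1 (taming the transient).} Run the weak system at inflated precision $\nu=\mu+c(\infnorm{x},\mu)$, with $c$ a polynomial fixed below, over the finite horizon $[0,T]$, $T:=\Omega(\infnorm{x},\nu)+1$. Applying \thref{th:dependency_right_side} on $[0,T]$ with $\varepsilon$ a fixed constant, the external control and the $\delta$ of that theorem both taken to be $0$, and the same initial condition, gives $\infnorm{z(t)-y(t)}\leqslant e^{-\Lambda(\infnorm{x},\mu)}(1+T)\exp\!\big(k\sigmap{p}\,T\,(\varepsilon+\Upsilon(\infnorm{x},\nu,T))^{k-1}\big)$ for every $t\in[0,T]$. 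Because $\nu$, $T$ and $\Upsilon(\infnorm{x},\nu,T)$ are all polynomial in $\infnorm{x},\mu$, the exponent is a polynomial $R(\infnorm{x},\mu)$, so setting $\Lambda(\infnorm{x},\mu):=\mu+2+\ln(1+T)+R(\infnorm{x},\mu)$ — again a polynomial — forces $\infnorm{z(t)-y(t)}\leqslant e^{-\mu-2}$ throughout $[0,T]$. Combined with the weak guarantee $\infnorm{y_{1..m}(t)-f(x)}\leqslant e^{-\nu}\leqslant e^{-\mu-2}$ valid for $\Omega(\infnorm{x},\nu)\leqslant t\leqslant T$ (choose $c\geqslant2$), we obtain that the perturbed output satisfies $\infnorm{z_{1..m}(t)-f(x)}\leqslant e^{-\mu-1}$ on the window $[\Omega(\infnorm{x},\nu),T]$ — but only there, as the drift bound deteriorates past $T$.

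\emph{Step 2 (locking the output in).} We append a "held output" component $w$ that, during the window above, latches onto a fixed approximation of $f(x)$ and then maintains it for all later times in spite of $e$. Concretely $w$ is driven by a clamped error-correcting field of the $\reach$/$\crnd$ type already available for these systems, whose stable equilibria form a discrete grid of spacing $\sim e^{-\mu-1}$: once $z_{1..m}$ has entered the $e^{-\mu-1}$-ball around $f(x)$, the grid point nearest $f(x)$ becomes an attracting equilibrium for $w$ and, since the grid spacing dwarfs the perturbation amplitude $e^{-\Lambda(\infnorm{x},\mu)}$ (guaranteed by enlarging $\Lambda$ if necessary), $e$ cannot knock $w$ off that attractor; hence $\infnorm{w(t)-f(x)}\leqslant e^{-\mu}$ for all $t\geqslant T+T'$, where the relaxation time $T'$ is polynomial in $\infnorm{x},\mu$. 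All new ingredients — a clock realising the horizon $[0,T]$, the precision-inflated copy of $y$, the component $w$, and the generable auxiliaries introduced when expressing $\reach$/$\crnd$ — are built from generable functions of polynomially bounded value and polynomially bounded speed, so by \propref{prop:gpac_ext_ivp_stable_pre} the whole enlarged vector field rewrites to an honest polynomial one with a polynomial space bound $\Upsilon'$. Outputting $w$ and taking $\Omega'(\infnorm{x},\mu):=T+T'=\poly(\infnorm{x},\mu)$ then yields $f\in\grc{\Upsilon'}{\Omega'}{\Lambda}\subseteq\gprc$.

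The delicate part is Step 2: one must design the latching field so that it is (or rewrites to) a polynomial ODE, reacts in polynomial time, keeps all variables polynomially bounded — in particular never forms a quantity of size $2^{\Theta(\mu)}$, which would break the space bound — and is genuinely immune to a persistent $e^{-\Lambda}$ perturbation because its attractors are separated by far more than $e^{-\Lambda}$. The transient estimate, the choice of $\Lambda$, and the final polynomial bookkeeping are then routine, resting on \thref{th:dependency_right_side}, closure of generable functions (\lemref{lem:gpac_ext_class_stable} and Corollary~\ref{cor:gpac_ext_ivp_stable}), and \propref{prop:gpac_ext_ivp_stable_pre}.
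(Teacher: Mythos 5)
Your proposal misreads the robustness model of the target class, and this sends it down a much harder (and ultimately unfinished) path. In \defref{def:grc} the perturbation is constrained by $\infnorm{e_0}+\int_0^\infty\infnorm{e(t)}dt\leqslant e^{-\Theta(\infnorm{x},\mu)}$, i.e.\ by its \emph{total integral}, not by a pointwise amplitude bound $\infnorm{e(t)}\leqslant e^{-\Lambda(\infnorm{x},\mu)}$ holding for all time. Your claim that ``a perturbation present for all time destroys any fixed-precision guarantee, and some error-correcting device is unavoidable'' is exactly what the integral bound spares you from: the paper's proof needs no latching or error correction at all. Its key device is the slow-stop reparameterization (\lemref{lem:pivp_slowstop}): multiply $p(y)$ by $\frac{1+\tanh(A(t))}{2}$ with $A'=-1$, $A(0)\approx T+2$, so that the internal time $\psi(t)=\int_0^t\frac{1+\tanh(A(u))}{2}du$ satisfies $\psi(T+1)\geqslant T$ and $\psi(t)\leqslant T+4$ for \emph{all} $t$. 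Since the Gr\"onwall factor in \thref{th:dependency_right_side} is then taken over a bounded internal horizon $T+4=\poly(\infnorm{x},\mu)$, the drift stays below $e^{-\mu-\ln 2}$ forever once $\Theta$ is chosen polynomially large, and the conclusion of \defref{def:grc} holds for all $t\geqslant\Omega$ with no further mechanism. Your Step~1 is essentially the same transient estimate, but restricted to $[0,T]$ and phrased for the wrong perturbation class (you also drop the initial-condition error $e_0$, which \defref{def:grc} includes).

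The real gap is Step~2, which carries the whole burden of your argument and is only a wish list. The paper contains no ``$\reach$/$\crnd$'' field with a discrete grid of stable equilibria at spacing $\sim e^{-\mu-1}$; $\reach$ (\lemref{lem:reach}) tracks a drifting target, it does not quantize. Building such a latching field within the constraints you yourself list is genuinely problematic: attractors spaced $e^{-\mu}$ apart over the range of $f$ naturally require a frequency-type parameter of size $e^{\mu}$ somewhere in the dynamics, which threatens the polynomial space bound of \defref{def:grc}; the field $h$ must moreover be a fixed generable function defined on \emph{all} of $\R^d$ (see the remark after \defref{def:grc}), independent of $\mu$ and $x$, and the accuracy $\mu$ enters only through the initial condition $g(x,\mu)$ --- so the grid spacing cannot simply be ``wired in'' as a constant of the vector field. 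And even granting the device, you would need to argue that the held value remains within $e^{-\mu}$ for all later times under the \emph{integral}-bounded perturbations of \defref{def:grc} (your immunity argument is amplitude-based). None of this is carried out, so the proof as written does not establish $\gpwc\subseteq\gprc$; the paper's slow-stop route shows the intended statement is much simpler than the one you set out to prove.
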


i.e. that it possible to transform a computation into a computation
that tolerates small perturbations of the dynamics, where:

\begin{definition}[Analog robust computability]\label{def:grc}
Let $n,m\in\N$, $f:\subseteq\R^n\rightarrow\R^m$, $\Theta,\Omega:\Rp^2\rightarrow\Rp$
and $\Upsilon:\Rp^3\rightarrow\Rp$.
We say that $f$ is $(\Upsilon,\Omega,\Theta)$-robustly-computable if and only if there exist $d\in\N$,
and $(h:\R^d\rightarrow\R^d),(g:\R^n\times\Rp\rightarrow\R^d)\in\gpval$
such that for any $x\in\dom{f}$, $\mu\in\Rp$, $e_0\in\R^d$ and $e\in C^0(\Rp,\R^d)$
satisfying
\[\infnorm{e_0}+\int_0^\infty\infnorm{e(t)}dt\leqslant e^{-\Theta(\infnorm{x},\mu)},\]
there exists (a unique) $y:\Rp\rightarrow\R^d$ satisfying for all $t\in\Rp$:
\begin{itemize}
\item $y(0)=g(x,\mu)+e_0$ and $y'(t)=h(y(t))+e(t)$\hfill$\blacktriangleright$ $y$ satisfies a generable IVP
\item if $t\geqslant\Omega(\infnorm{x},\mu)$ then $\infnorm{y_{1..m}(t)-f(x)}\leqslant e^{-\mu}$
\hfill$\blacktriangleright$ $y_{1..m}$ approximates $f(x)$
\item $\infnorm{y(t)}\leqslant\Upsilon(\infnorm{x},\mu,t)$
\hfill$\blacktriangleright$ $y(t)$ is bounded
\end{itemize}
We denote by $\grc{\Upsilon}{\Omega}{\Theta}$ the set of
$(\Upsilon,\Omega,\Theta)$-robustly-computable functions, and by $\gprc$ the set of
$(\poly,\poly,\poly)$-robustly-computable functions.
\end{definition}

Intuitively, this definition says that even if the initial condition and the ODE defining the PIVP are
(slightly) perturbed or have (small) errors in \defref{def:gc}, the PIVP is still capable of
computing an approximation of $f(x)$.

Actually, we prove in this section that $\gpwc\subseteq\gprc$. Then the
equality will follow from results of other sections. 

\subsection{Some remarks}

\begin{remark}[Domain of definition of $g$ and $h$]
There is a subtle but important detail in this definition: we more or less replaced
the polynomials $p$ and $q$ by generable functions $g$ and $h$. It could have been
tempting to take this opportunity to restrict the domain of definition of $g$ to $\dom{f}\times\Rp$
and that of $h$ to a subset of $\R^d$ where the dynamics takes place. We kept the entire
euclidean space for good reasons. First it makes the definition simpler. Second,
it makes the notion stronger and more useful. This last point is important because
we are going to use robust computability (and the next notion of strong computability)
in cases where we have less or no control over the errors and thus over the trajectory of the
system. On the downside, this requires to check that $g$ and $h$ are indeed defined
over the entire space !
\end{remark}

The examples below show how to build robustly-computable functions. In the first example,
we only need to define $\Theta$ so that it works, whereas in the second case, careful
design of the system is needed for it to be robust.

\begin{example}[Polynomials are robustly-computable]
In order to make polynomials robustly-computable, we will play with the choice of $\Theta$
and see that this is enough to make the system robust.
Let $q\in\K[\R^d]$ be a multivariate polynomial: we will show that $q\in\gprc$.
Let $x\in\R^d$, $\mu\in\Rp$, $e_0\in\R$ and $e\in C^0(\Rp,\R)$.
Assume that $|e_0|+\int_0^\infty|e(t)|dt\leqslant e^{-\mu}$ and consider the following system for $t\in\Rp$:
\[y(0)=q(x)+e_0\qquad y'(t)=e(t)\]
We claim that this system satisfies \defref{def:grc}:
\begin{itemize}
\item The system is of the form $y(0)=\poly(x)+e_0$ and $y'(t)=\poly(y(t))+e(t)$ where the polynomials
have coefficients in $\K$.
\item For any $t\geqslant0$, we have:
\[\infnorm{y(t)-q(x)}\leqslant |e_0|+\int_0^t|e(u)|du\leqslant|e_0|+\int_0^\infty|e(u)|du\leqslant e^{-\mu}\]
so we can take $\Omega(\alpha,\mu)=0$.
\item For any $t\in\Rp$, we have:
\[\infnorm{y(t)}\leqslant\infnorm{q(x)}+|e_0|+\int_0^t|e(u)|du\leqslant\infnorm{q(x)}+1\leqslant\poly(\infnorm{x})\]
so we can take $\Upsilon$ to be any polynomial such that $\Upsilon(\infnorm{x},\mu)\geqslant\infnorm{p(x)}+1$.
\end{itemize}
This shows that $q\in\grc{\Upsilon}{\Omega}{\Theta}$ where $\Theta(\alpha,\mu)=\mu$.
\end{example}

In the previous example, we saw that we could modify the associated system of some computable
functions to make them robustly-computable. It appears that this is not a coincidence
but a general fact. To understand how the proof works, one must first understand the problem.
Let us consider a computable function $f:\subseteq\R^d\rightarrow\R$ in $\gwc{\Upsilon}{\Theta}$ and the associated system
for $x\in\dom{f}$ and $\mu\in\Rp$:
\[y(0)=q(x,\mu)\qquad y'(t)=p(y(t))\]
This system converges to $f(x)$ very quickly: $\infnorm{y_1(t)-f(x)}\leqslant e^{-\mu}$
when $t\geqslant\Omega(\infnorm{x},\mu)$ and $y(t)$ is bounded: $\infnorm{y(t)}\leqslant\Upsilon(\infnorm{x},\mu,t)$.
Let us introduce some errors in the system
by taking $e_0\in\R^d$ and $e\in C^0(\Rp,\R^d)$ such that $\infnorm{e_0}+\int_0^\infty\infnorm{e(u)}du\leqslant e^{-\Theta(\infnorm{x},\mu)}$
for some unspecified $\Theta$ and consider the perturbed system:
\[z(0)=q(x,\mu)+e_0\qquad z'(t)=p(z(t))+e(t)\]
The relationship between this system and the previous one is given by
\thref{th:dependency_right_side}
and can be informally written as:
\begin{align}\label{eq:rel_weak:informal_error}
\infnorm{z(t)-y(t)}&\leqslant\left(\infnorm{e_0}+\int_0^t\infnorm{e(u)}du\right)e^{\int_0^tk\sigmap{p}\infnorm{y(u)}^{k-1}du}\\
    &\leqslant\left(\infnorm{e_0}+\int_0^\infty\infnorm{e(u)}du\right)e^{\int_0^tk\sigmap{p}\Upsilon(\infnorm{x},\mu,u)^{k-1}du}\tag*{using the bound of $y(t)$}\\
    &\leqslant e^{k\sigmap{p}t\Upsilon(\infnorm{x},\mu,t)^{k-1}-\Theta(\infnorm{x},\mu)}\tag*{assuming that $\Upsilon$ is increasing}
\end{align}
One observes that this bound grows to infinity whatever we choose for $\Theta$ because of the
dependency in $t$. On the other hand, we do not need to simulate $y$ for arbitrary
large $t$: as soon as $t\geqslant\Theta(\infnorm{x},\mu)$ we can stop the system and get
a good enough result. Unfortunately, one does not simply stop a differential system, however
we can slow it down 
. To this end, introduce
$\psi(t)=(1+\Theta(\infnorm{x},\mu))\tanh(t)$ and $w(t)=z(\psi(t))$. If we show that $w$ 
satisfies a differential system, then we are almost done. Indeed $\psi(t)\leqslant1+\Theta(\infnorm{x},\mu)$ for all $t\in\Rp$
and if $t\geqslant1+\Theta(\infnorm{x},\mu)$ then $\psi(t)\geqslant\Theta(\infnorm{x},\mu)$,
so the system ``kind of stops'' between $\Theta(\infnorm{x},\mu)$ and $\Theta(\infnorm{x},\mu)+1$.
Furthermore, if  $t\geqslant1+\Theta(\infnorm{x},\mu)$ then:
\begin{align*}
\infnorm{w_1(t)-f(x)}&\leqslant\infnorm{z(\psi(t))-y(\psi(t))}+\infnorm{y_1(\psi(t))-f(x)}\tag*{use the triangle inequality}\\
    &\leqslant e^{k\sigmap{p}\psi(t)\Upsilon(\infnorm{x},\mu,\psi(t))^{k-1}-\Theta(\infnorm{x},\mu)}+e^{-\mu}
    \tag*{using \eqref{eq:rel_weak:informal_error}}\\
    &\leqslant e^{k\sigmap{p}(1+\Theta(\infnorm{x},\mu))\Upsilon(\infnorm{x},\mu,1+\Theta(\infnorm{x},\mu))^{k-1}-\Theta(\infnorm{x},\mu)}+e^{-\mu}\tag*{using the bound on $\psi$}\\
    &\leqslant 2e^{-\mu}\tag*{for a suitable choice of $\Theta$}
\end{align*}
We are left with showing that $w(t)=z(\psi(t))$ can be be generated by a generable IVP
with perturbations. In the case of no perturbations, this is very easy because
$w'(t)=\psi'(t)z'(t)=x(1-\tanh(t))p(z(t))$ which is generable. The following lemma
extends this idea to the case of perturbations.

\begin{lemma}[PIVP Slow-Stop]\label{lem:pivp_slowstop}
Let $d\in\N$, $y_0\in\R^d$, $T,\theta\in\Rp$, $(e_{0,y},e_{0,A})\in\R^{d+1}$,
$(e_y,e_A)\in C^0(\Rp,\R^{d+1})$ and $p\in\K^d[\R^d]$.
Assume that $\infnorm{e_0}+\int_0^\infty\infnorm{e(t)}dt\leqslant e^{-\theta}$
and consider the following system:
\[
\left\{\begin{array}{@{}r@{}l}y(0)&=y_0+e_{0,y}\\
A(0)&=T+2+e_{0,A}\end{array}\right.
\quad
\left\{\begin{array}{@{}r@{}l}y'(t)&=\frac{1+\tanh(A(t))}{2}p(y(t))+e_{y}(t)\\
A'(t)&=-1+e_{A}(t)\end{array}\right.
\]
Then there exist an increasing function $\psi\in C^0(\Rp,\Rp)$
and $z:\psi(\Rp)\rightarrow\R^d$ such that:
\[\psi(0)=0\qquad z(0)=y_0+e_{0,y}\qquad z'(t)=p(z(t))+(\psi^{-1})'(t)e_{y}(\psi^{-1}(t))\]
and $y(t)=z(\psi(t))$. Furthermore $\psi(T+1)\geqslant T$
and $\psi(t)\leqslant T+4$ for all $t\in\Rp$. Furthermore, $|A(t)|\leqslant T+3$ for
all $t\in\Rp$.
\end{lemma}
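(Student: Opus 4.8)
The plan is to realize the time change $\psi$ explicitly as the inverse of an antiderivative of $\frac{1+\tanh(A(t))}{2}$, and then verify the claimed differential equation for $z$ by the chain rule, exactly as in the unperturbed heuristic $w'(t)=\psi'(t)z'(t)$ sketched just before the lemma. First I would define $\varphi(t)=\int_0^t\frac{1+\tanh(A(s))}{2}\,ds$. Since $\tanh>-1$ everywhere, the integrand is strictly positive, so $\varphi$ is a $C^1$ strictly increasing function with $\varphi(0)=0$; set $\psi=\varphi^{-1}$ on $\varphi(\Rp)$, which is again $C^1$ and strictly increasing with $\psi(0)=0$, and define $z(u)=y(\psi(u))$. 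Then $z(0)=y(0)=y_0+e_{0,y}$, and differentiating, $z'(u)=\psi'(u)\,y'(\psi(u))$. Using $\psi'(u)=\frac{1}{\varphi'(\psi(u))}=\frac{2}{1+\tanh(A(\psi(u)))}$ together with the ODE for $y$, the factor $\frac{1+\tanh(A(\psi(u)))}{2}$ cancels against the one in $y'$, leaving $z'(u)=p(z(u))+\frac{2}{1+\tanh(A(\psi(u)))}e_y(\psi(u))$. It remains to identify $\frac{2}{1+\tanh(A(\psi(u)))}$ with $(\psi^{-1})'(u)=(\varphi)'(\psi(u))$ composed correctly — but note $(\psi^{-1})'=\varphi'$, and $\varphi'(\psi(u))=\frac{1+\tanh(A(\psi(u)))}{2}$, so one has to be careful about which factor the statement intends; I would match notations so that the perturbation term reads $(\psi^{-1})'(t)e_y(\psi^{-1}(t))$ as written, which after the substitution $t\leftrightarrow u$ is consistent. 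So the differential identity for $z$ is essentially a bookkeeping exercise with the chain rule.

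The substantive part is the quantitative control of $A$ and $\psi$. For $A$: from $A'(t)=-1+e_A(t)$ and $A(0)=T+2+e_{0,A}$ we get $A(t)=T+2+e_{0,A}-t+\int_0^t e_A(s)\,ds$, so $|A(t)-(T+2-t)|\le|e_{0,A}|+\int_0^\infty|e_A(s)|\,ds\le e^{-\theta}\le1$. Hence $A(t)\in[T+1-t,\,T+3-t]$, which immediately gives $|A(t)|\le T+3$ once one checks both the regime $t\le T+3$ (upper bound $T+3$, lower bound $\ge-1$... actually for $t$ large $A(t)\approx T+2-t\to-\infty$, so one must argue: for $t\le T+3$, $|A(t)|\le\max(T+3, \,|T+1-t|)\le T+3$; for $t>T+3$, $A(t)\le T+3-t<0$ and... this is where I'd double-check, since $A$ does go very negative). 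I believe the intended reading is $|A(t)|\le T+3$ is claimed, so I would instead bound using $A(t)\le T+3$ and $A(t)\ge T+1-t$; for the lower bound to give $|A(t)|\le T+3$ one needs $t\le 2T+4$, but for larger $t$ the $\tanh(A(t))\approx-1$ makes $\varphi$ nearly constant — so the bound on $A$ is really only needed in the range where $\psi$ is still moving. I would state it carefully or weaken as needed; let me assume the paper's bound holds as stated and not belabor it.

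For $\psi$, the key estimates are $\psi(T+1)\ge T$ and $\psi(t)\le T+4$ for all $t$. For the first: on $[0,T+1]$ we have $A(s)\ge T+1-s\ge0$, hence $\tanh(A(s))\ge0$ and $\varphi'(s)=\frac{1+\tanh(A(s))}{2}\ge\frac12$; but that only gives $\varphi(T+1)\ge\frac{T+1}{2}$, not enough. Better: for $s\le T$, $A(s)\ge T+1-s\ge1$, so $\tanh(A(s))\ge\tanh(1)>0.76$, giving $\varphi'(s)>0.88$, so $\varphi(T)>0.88\,T$ — still not $\ge T$ directly. The cleanest route is: $T-\varphi(T)=\int_0^T\bigl(1-\varphi'(s)\bigr)ds=\int_0^T\frac{1-\tanh(A(s))}{2}ds$, and since $A(s)\ge T-s\ge0$ on $[0,T]$... hmm, this integral is not obviously $\le1$. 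I would instead use $\int_0^\infty\frac{1-\tanh(u)}{2}du$-type bounds: substituting and using $A(s)\approx T+2-s$, we get $T+1-\varphi(T+1)\le\int_0^{T+1}\frac{1-\tanh(T+1-s)}{2}ds+O(e^{-\theta})=\int_0^{T+1}\frac{1-\tanh(v)}{2}dv+O(e^{-\theta})\le\int_0^\infty\frac{1-\tanh(v)}{2}dv+1=\frac{\ln 2}{2}+1$, so $\varphi(T+1)\ge T-\ln2+\ldots$; to actually land at $\ge T$ one needs the $+2$ shift in the initial condition $A(0)=T+2+\ldots$, which buys the extra room — indeed with the shift $A(s)\ge T+2-s-1=T+1-s$, and integrating $\frac{1-\tanh(T+1-s)}{2}$ over $s\in[0,T+1]$ gives $\int_{-?}^{T+1}\ldots$, bounded by the constant $\int_{\R}\frac{1-\tanh}{2}$... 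I would chase the constant to confirm the $+2$ suffices, treating it as the main routine-but-delicate computation. For the upper bound $\psi(t)\le T+4$: since $\varphi'>0$ always, $\varphi$ is unbounded, and $\varphi(T+4)-\varphi(T+1)=\int_{T+1}^{T+4}\frac{1+\tanh(A(s))}{2}ds$; we need this $\ge$ (the amount still needed), but actually we need that $\varphi$ has already "almost stopped", i.e. $\varphi(\infty)-\varphi(T+1)\le 3$: $\varphi(\infty)-\varphi(T+1)=\int_{T+1}^\infty\frac{1+\tanh(A(s))}{2}ds$ and since $A(s)\le T+3-s$, for $s\ge T+3$ we have $A(s)\le0$ and $\frac{1+\tanh(A(s))}{2}\le\frac{1}{2}e^{2A(s)}\le\frac12 e^{2(T+3-s)}$, integrable with total mass $\le\frac14$, plus the contribution from $s\in[T+1,T+3]$ which is $\le 2$; so $\varphi(\infty)\le\varphi(T+1)+3\le T+4$ using $\varphi(T+1)\le T+1$ (from $\varphi'\le1$). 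Thus $\psi=\varphi^{-1}$ is defined on $[0,\varphi(\infty))\subseteq[0,T+4)$ and, extending by the limiting value, $\psi(t)\le T+4$ for all $t$.

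The main obstacle, then, is not any single hard idea but getting all the constants to line up — in particular verifying that the "$+2$" in $A(0)=T+2+e_{0,A}$ together with the $e^{-\theta}\le1$ slack is exactly enough to guarantee both $\psi(T+1)\ge T$ and $\psi(t)\le T+4$ and $|A(t)|\le T+3$ simultaneously; this is where I would spend the care, using the explicit formula $A(t)=T+2+e_{0,A}-t+\int_0^t e_A$ and the exponential tail bounds on $\tanh$ near $\pm1$.
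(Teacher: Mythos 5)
Your construction is essentially the paper's: integrate $A$ explicitly, set $\varphi(t)=\int_0^t\tfrac{1+\tanh(A(u))}{2}\,du$, obtain $z$ by the chain rule (this is exactly \lemref{lem:perturbed_time_scale_gpac}), and get the quantitative bounds from the tail estimate $1-\sgn{t}\tanh(t)\leqslant e^{-|t|}$ of \lemref{lem:bounds_tanh}. Two things need repair in your packaging. First, the orientation of the time change: the lemma's $\psi$ (the one with $y(t)=z(\psi(t))$, $\psi(T+1)\geqslant T$ and $\psi(t)\leqslant T+4$ for all $t\in\Rp$) is your $\varphi$, \emph{not} $\varphi^{-1}$. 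Your estimates are in fact estimates on $\varphi$, so the content is right, but your closing sentence (``$\psi=\varphi^{-1}$ \ldots extending by the limiting value, $\psi(t)\leqslant T+4$'') is false as written, since $\varphi^{-1}$ is unbounded on $[0,\varphi(\infty))$; just take $\psi:=\varphi$ and $z:=y\circ\varphi^{-1}$, and then $(\psi^{-1})'(u)=1/\varphi'(\varphi^{-1}(u))$ produces exactly the perturbation term in the statement. Second, the constant you left unchased does close, and by the route you named: on $[0,T+1]$ one has $A(s)\geqslant T+2-s-e^{-\theta}\geqslant T+1-s\geqslant0$, hence $1-\tanh(A(s))\leqslant e^{s-T-1}$ and $T+1-\varphi(T+1)\leqslant\tfrac12\int_0^{T+1}e^{s-T-1}ds\leqslant\tfrac12$, so $\varphi(T+1)\geqslant T+\tfrac12\geqslant T$; this is precisely the paper's computation, and the ``$+2$'' in $A(0)$ is indeed what buys it. Your upper-bound argument ($A(s)\leqslant T+3-s$, so the integrand is $\leqslant\tfrac12e^{2(T+3-s)}$ past $T+3$, total mass $\leqslant\tfrac14$, plus at most the length of the remaining interval) matches the paper's as well.

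On the clause $|A(t)|\leqslant T+3$: your suspicion is justified. From $A(t)=T+2+e_{0,A}-t+\int_0^te_A$ one gets $T+1-t\leqslant A(t)\leqslant T+3-t$ for all $t$, so only the one-sided bound $A(t)\leqslant T+3$ holds globally, and $|A(t)|\leqslant T+3$ necessarily fails for $t>2T+6$. The paper's own proof never establishes that clause (it is presumably a misstatement of the upper bound, or of a bound on the time range actually used downstream, where a growth of order $t$ is harmless), so declining to prove it as stated is the right call rather than a gap in your argument.
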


\begin{proof}
Let $f(t)=\frac{1+\tanh(A(t))}{2}$ and note that $0<f(t)<1$ for all $t\in\Rp$.
Check that we can integrate $A$ explicitly:
\[A(t)=T+2-t+e_{0,A}+\int_0^te_A(u)du.\]
If we take $\psi(t)=\int_0^tf(u)du$ then $\psi$ is an increasing function because $f>0$,
so it is a diffeomorphism from $\Rp$ onto $\psi(\Rp)$.
Note that $\psi(t)\leqslant t$ for all $t\in\Rp$. Let $t\geqslant T+3$,
then
\begin{align*}
A(t)&\leqslant T+2-t+|e_{0,A}|+\int_0^t|e_A(u)|du\\
&\leqslant T+2+e^{-\theta}-t\leqslant T+3-t\leqslant0
\end{align*}
because $\theta\geqslant0$. Apply \lemref{lem:bounds_tanh} to get that
$\tanh(A(t))\leqslant -1+e^{T+3-t}$ and thus
\[f(t)\leqslant \frac{1}{2}e^{T+3-t}\text{ for }t\geqslant T+3.\]
Integrating this inequality shows that
\begin{align*}
\psi(t)&\leqslant\psi(T+3)+\frac{1}{2}\int_{T+3}^{t}e^{T+3-u}du\\
&\leqslant T+3+\frac{1}{2}(1-e^{T+3-t})\leqslant T+4.
\end{align*}
This shows that $\psi(t)\leqslant T+4$
for all $t\in\Rp$.

Let $t\leqslant T+1$, then by the same reasoning:
\[A(t)\geqslant T+2-t-e^{-\theta}\geqslant T+1-t\geqslant0\]
thus $\tanh(A(t))\geqslant 1-e^{t-T-1}$ and $f(t)\geqslant1-\frac{1}{2}e^{t-T-1}$.
Thus:
\[\psi(T+1)\geqslant\int_0^{T+1}1+\frac{1}{2}e^{u-T-1}du=T+1+\frac{1}{2}(1-e^{-1-T})\geqslant T.\]
Finally, apply \lemref{lem:perturbed_time_scale_gpac} to get that $y(t)=z(\psi(t))$
where $z$ satisfies for $t\in\psi(\Rp)$:
\[z(0)=y(0)\qquad z'(t)=p(z(t))+(\psi^{-1})'(t)e_{y}(\psi^{-1}(t))\]
\end{proof}



\subsection{The proof}

The proof of the implication AWP implies ARP of \thref{th:weak_to_robust} is then the following.

Let $\Upsilon^*,\Omega^*$ be polynomials such that $f\in\gwc{\Upsilon^*}{\Omega^*}$.
Without loss of generality, we assume they are increasing functions on both arguments.
Apply \defref{def:gwc} to get $d\in\N$, $p\in\K^d[\R^d]$, $q\in\K^d[\R^{n+1}]$ and let $k=\degp{p}$.
Define:
\begin{align*}
T(\alpha,\mu)&=\Omega^*(\alpha,\mu+\ln2)\\
\Theta(\alpha,\mu)&=k\sigmap{p}(T(\alpha+1,\mu)+4)(\Upsilon^*(\alpha,\mu,T(\alpha+1,\mu)+4)+1)^{k-1}+\mu+\ln2\\
\Omega(\alpha,\mu)&=T(\alpha+1,\mu)+1
\end{align*}
Let $x\in\dom{f}$, $(e_{0,y},e_{0,A})\in\R^{d+1}$, $(e_y,e_A)\in C^0(\Rp,\R^{d+1})$ and $\mu\in\Rp$ such that
\[\infnorm{e_0}+\int_0^\infty\infnorm{e(t)}dt\leqslant e^{-\Theta(\infnorm{x},\mu)}.\]
Apply \lemref{lem:pivp_slowstop} and consider the following systems (where $\psi$ is given
by the lemma):
\[
\left\{\begin{array}{@{}r@{}l}y(0)&=q(x,\mu)+e_{0,y}\\
A(0)&=T(\norm_{\infty,1}(x),\mu)+2+e_{0,A}\end{array}\right.
\quad
\left\{\begin{array}{@{}r@{}l}y'(t)&=\frac{1+\tanh(A(t))}{2}p(y(t))+e_{y}(t)\\
A'(t)&=-1+e_{A}(t)\end{array}\right.
\]
\[
\left\{\begin{array}{@{}r@{}l}z(0)&=q(x,\mu)+e_{0,y}\\
z'(t)&=p(z(t))+(\psi^{-1})'(t)e_{y}(\psi^{-1}(t))\end{array}\right.
\quad
\left\{\begin{array}{@{}r@{}l}w(0)&=q(x,\mu)\\
w'(t)&=p(w(t))\end{array}\right.
\]
By definition of $p$ and $q$, if $t\geqslant\Omega^*(\infnorm{x},\mu)$ then
$\infnorm{w_{1..m}(t)-f(x)}\leqslant e^{-\mu}$. Furthermore,
$\infnorm{w(t)}\leqslant\Upsilon^*(\infnorm{x},\mu,t)$ for all $t\in\Rp$.
Define $T^*=T(\norm_{\infty,1}(x),\mu)$.
Apply \lemref{lem:norm} to get
that
\[\infnorm{x}\leqslant\norm_{\infty,1}(x)\leqslant\infnorm{x}+1\]
and thus
\[T(\infnorm{x},\mu)\leqslant T^*\leqslant T(\infnorm{x}+1,\mu).\]
By construction, $\psi(t)\leqslant T^*+4$ for all $t\in\Rp$. Let $t\in\Rp$,
apply \thref{th:dependency_right_side} by checking that:
\begin{align*}
&\left(\infnorm{e_{0,y}}+\int_0^{\psi(t)}\infnorm{(\psi^{-1})'(u)e_{y}(\psi^{-1}(u))}du\right)
e^{k\sigmap{p}\int_0^{\psi(t)}(\infnorm{w(u)}+1)^{k-1}du}\\
&\leqslant\left(\infnorm{e_{0,y}}+\int_0^t\infnorm{e_{y}(u)}du\right)
e^{k\sigmap{p}\int_0^{\psi(t)}(\Upsilon^*(\infnorm{x},\mu,u)+1)^{k-1}du}\tag*{by a change of variable}\\
&\leqslant e^{k\sigmap{p}\psi(t)(\Upsilon^*(\infnorm{x},\mu,\psi(t))+1)^{k-1}-\Theta(\infnorm{x},\mu)}\tag*{by hypothesis on the error}\\
&\leqslant e^{k\sigmap{p}(T(\infnorm{x}+1,\mu)+4)(\Upsilon^*(\infnorm{x},\mu,T(\infnorm{x}+1,\mu)+4)+1)^{k-1}-\Theta(\infnorm{x},\mu)}\tag*{because $\psi$ is bounded}\\
&\leqslant e^{-\mu-\ln2}\leqslant1\tag*{by definition of $\Theta$}
\end{align*}
Thus
\[\infnorm{z(\psi(t))-w(\psi(t))}\leqslant e^{-\mu-\ln2}\text{ for all }t\in\Rp.\]
Furthermore, if $t\geqslant\Omega(\infnorm{x},\mu)$ then
\[\psi(t)\geqslant
\psi(T(\infnorm{x}+1,\mu)+1)\geqslant\psi(T^*+1)\geqslant T^*.\]
By construction $\psi(T^*)\geqslant T^*$
so
\[\psi(t)\geqslant T^*\geqslant T(\infnorm{x},\mu)=\Omega^*(\infnorm{x},\mu+\ln2)\]
thus
\[\infnorm{z(\psi(t))-f(x)}\leqslant e^{-\mu-\ln2}.\]
Consequently, we have 
\begin{align*}
\infnorm{y(t)-f(x)}&\leqslant\infnorm{z(\psi(t))-w(\psi(t))}+\infnorm{w(\psi(t))-f(x)}\\
&\leqslant 2e^{-\mu-\ln2}\leqslant e^{-\mu}.
\end{align*}

Let $t\in\Rp$, then
\begin{align*}
\infnorm{y(t)}&=\infnorm{z(\psi(t))}\leqslant\infnorm{w(\psi(t))}+e^{-\mu}\\
&\leqslant\Upsilon^*(\infnorm{x},\mu,\psi(t))+1\leqslant\Upsilon^*(\infnorm{x},\mu,T(\infnorm{x}+1,\mu)+4)+1\\
&\leqslant\Upsilon^*(\infnorm{x},\mu,\Omega^*(\infnorm{x}+1,\mu+\ln2)+4)+1
\end{align*}
which is polynomially bounded in $\infnorm{x}$ and $\mu$.
Furthermore
\[|A(t)|\leqslant T^*+4\leqslant\Omega^*(\infnorm{x}+1,\mu+\ln2)+4\]
which are both polynomially bounded in $\infnorm{x}$, $\mu$.

Finally, $(y,A)(0)=g(x,\mu)+e_0$ and $(y,A)'(t)=h(y(t),A(t))+e(t)$
where $g$ and $h$ belong to $\gpval[\K]{}$ because $\tanh,\norm_{\infty,1}\in\gpval$.

\begin{remark}[Polynomial versus generable]\label{rem:weak_to_robust_gen}
The proof of \thref{th:weak_to_robust} also works if $q$ is generable (i.e. $q\in\gpval$)
instead of polynomial in \defref{def:gc} or \defref{def:gwc}.
\end{remark}

\section{Proof that ARP implies ASP}
\label{app:robust_to_strong}

This section is devoted to prove the following result:  it is always possible to avoid that the system in \defref{def:grc}.
explodes for inputs not in the domain of the function, or for perturbations of the dynamics
which are too big. This motivates the following result and \defref{def:gsc}.

\begin{theorem}[Robust $\subseteq$ strong]\label{th:robust_to_strong}
$\gprc=\gpsc$.
\end{theorem}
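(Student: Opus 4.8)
We prove the inclusion $\gprc\subseteq\gpsc$; the reverse inclusion is essentially definitional, since strong computability (\defref{def:gsc}) only adds to robust computability an \emph{unconditional} bound on the trajectory --- one valid for every input $x$, every precision $\mu$ and every perturbation of the dynamics, whether or not $x\in\dom f$ and whether or not the perturbation is small --- so a strongly-computable function is in particular robustly-computable; in any case $\gpsc\subseteq\gprc$ also drops out once the chain $\gpwc\subseteq\gprc\subseteq\gpsc\subseteq\gpuc\subseteq\gpoc$ of this and the following sections closes into a cycle of equalities. So the content is: a robustly-computable function can be modified so as never to explode.

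Fix $f\in\gprc$ and apply \defref{def:grc} to obtain $d\in\N$, generable maps $h,g$ (defined on all of $\R^d$ and $\R^n\times\Rp$ respectively) and polynomials $\Upsilon,\Omega,\Theta$, taken increasing in each argument. The plan is to attach a watchdog that gently brakes the dynamics before the trajectory can leave the ball of radius $\Upsilon(\infnorm{x},\mu,t)$ in which \defref{def:grc} guarantees it stays. We enlarge the state by three bookkeeping scalars --- $\alpha(0)=\norm_{\infty,1}(x)$ with $\alpha'=0$ (an upper bound on $\infnorm{x}$, by \lemref{lem:norm}), $m(0)=\mu$ with $m'=0$, and a clock $\tau(0)=0$ with $\tau'=1$ --- and, using that polynomials are generable and \lemref{lem:norm} again, we build a generable, $[0,1]$-valued brake factor
\[\beta(y,\alpha,m,\tau)=\tfrac12\bigl(1+\tanh\bigl(\lambda(\alpha,m,\tau)\bigl(\Upsilon(\alpha,m,\tau)+G(\alpha,m,\tau)+\tfrac12-\norm_{\infty,1}(y)\bigr)\bigr)\bigr)\]
with a margin polynomial $G$ and a steepness $\lambda$ to be fixed, so that $\beta\equiv1$ up to error $e^{-\lambda}$ when $\norm_{\infty,1}(y)\leqslant\Upsilon(\alpha,m,\tau)+G$ and $\beta\equiv0$ when $\norm_{\infty,1}(y)\geqslant\Upsilon(\alpha,m,\tau)+G+1$. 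The watchdog system is $\tilde y'=\beta(\tilde y,\alpha,m,\tau)\,h(\tilde y)$, $\alpha'=m'=0$, $\tau'=1$, with initial condition $(g(x,\mu),\norm_{\infty,1}(x),\mu,0)$. Its right-hand side is generable and \emph{polynomially bounded}: the whole point of braking rather than composing $h$ with a hard cutoff is that $\beta$ kills $h$ outside a ball of polynomial radius, so $\beta h\in\gpval$ by \lemref{lem:gpac_ext_class_stable}, and moreover $\infnorm{\beta h(\tilde y,\alpha,m,\tau)}$ is bounded by a polynomial in $\alpha,m,\tau$ alone, uniformly in $\tilde y$, which already rules out finite-time blow-up; all of $g,h,\beta$ are defined on the whole space, as \defref{def:grc} and \defref{def:gsc} require.

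Two facts are then checked. \emph{Boundedness}: for every $x,\mu,e_0,e$ one has $\infnorm{\tilde y'(t)}\leqslant\infnorm{\beta h(\tilde y(t))}+\infnorm{e(t)}$, and the first term is bounded by a polynomial in $\alpha(t),m(t),\tau(t)$; since $\alpha,m$ stay bounded and $\tau(t)\leqslant t+\infnorm{e_0}+\int_0^t\infnorm{e(u)}\,du$, a routine comparison argument gives $\infnorm{(\tilde y,\alpha,m,\tau)(t)}\leqslant\poly(\infnorm{x},\mu,t)+\infnorm{e_0}+\int_0^t\infnorm{e(u)}\,du$, which is the unconditional space bound of \defref{def:gsc}. \emph{Correctness for small perturbations}: suppose $x\in\dom f$ and $\infnorm{e_0}+\int_0^\infty\infnorm{e(u)}\,du\leqslant e^{-\Theta'(\infnorm{x},\mu)}$, with $\Theta':=\Theta+\ln2$. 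Let $T$ be the first time $\norm_{\infty,1}(\tilde y)$ reaches $\Upsilon(\alpha,m,\tau)+G(\alpha,m,\tau)$; on $[0,T]$ the bookkeeping scalars stay within $e^{-\Theta'}$ of $(\norm_{\infty,1}(x),\mu,t)$ and $\tilde y$ solves $\tilde y'=h(\tilde y)+e(t)-(1-\beta)h(\tilde y)$ with $\infnorm{(1-\beta)h(\tilde y(s))}\leqslant e^{-\lambda(\alpha(s),m(s),\tau(s))}\poly(\infnorm{x},\mu,s)$ there. Padding this perturbation by $0$ after $T$ and feeding it to \defref{def:grc} --- whose guaranteed global trajectory then coincides with $\tilde y$ on $[0,T]$ by uniqueness --- yields $\infnorm{\tilde y(s)}\leqslant\Upsilon(\infnorm{x},\mu,s)$ on $[0,T]$, provided the total padded perturbation is $\leqslant e^{-\Theta}$; since $G$ is chosen larger than $1$ plus the local modulus of continuity of $\Upsilon$, it follows that $\norm_{\infty,1}(\tilde y(T))<\Upsilon(\alpha(T),m(T),\tau(T))+G(\alpha(T),m(T),\tau(T))$, contradicting the choice of $T$ unless $T=\infty$. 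Hence the brake is never truly active, \defref{def:grc} applies globally, and $\infnorm{\tilde y_{1..m}(t)-f(x)}\leqslant e^{-\mu}$ once $t\geqslant\Omega(\infnorm{x},\mu)=:\Omega'(\infnorm{x},\mu)$.

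The main obstacle is the coupled, order-sensitive tuning of $G$, $\lambda$ and $\Theta'$ that makes the padded perturbation stay below $e^{-\Theta}$ for \emph{every} $T$. The subtle point is that $\int_0^T\infnorm{(1-\beta)h(\tilde y)}$ would grow polynomially in $T$ if the steepness $\lambda$ were constant; one must therefore let $\lambda$ grow (logarithmically) with the clock, taking for instance $\lambda(\alpha,m,\tau)=\lambda_0(\infnorm{x},\mu)+c\ln(1+\tau)$, so that the brake-slack error becomes integrable over all of $\Rp$ with a total that can be forced below $\tfrac12e^{-\Theta(\infnorm{x},\mu)}$; likewise $G$ must grow with $\tau$ so as to dominate the modulus of continuity of $\Upsilon\circ\norm_{\infty,1}$ from \lemref{prop:generable_mod_cont} at time $\tau$, so that perturbing the bookkeeping scalars cannot shrink the brake threshold below $\Upsilon(\infnorm{x},\mu,t)$. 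One then checks that $\beta$, the threshold and the full right-hand side lie in $\gpval$ (boundedness of $\tanh$ and of $\beta$ by $1$ doing the real work) and that $g,h,\beta$ are defined over all of Euclidean space; the comparison estimate for boundedness, the bootstrap/continuity argument and the global-existence remark are then routine.
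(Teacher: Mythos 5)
Your watchdog is, in its architecture, the same as the paper's: the paper also multiplies $h$ by a smooth brake $\psi(t)=\frac{1+\tanh(\Delta(t))}{2}$ with $\Delta(t)=\Upsilon(\ell(t),\ell(t),\ell(t))+1-\norm_{\infty,1}(y(t))$, where a single perturbed bookkeeping variable $\ell(0)=\mx_1(\norm_{\infty,1}(x),\mu)+1$, $\ell'(t)=1+e_\ell(t)$ plays the role of your three scalars $\alpha,m,\tau$, and the unconditional polynomial bound is obtained exactly as you obtain it, by splitting on the sign of $\Delta$ and using $e^{\Delta}\poly(-\Delta)=\bigO{1}$. Where you genuinely diverge is the correctness step. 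You charge the brake slack $(1-\beta)h(\tilde y)$ against the error budget $e^{-\Theta}$, which is what forces your time-growing steepness $\lambda\sim\ln(1+\tau)$, the margin $G$ dominated by a modulus of continuity of $\Upsilon$, and a first-crossing bootstrap; the paper instead exploits that the brake is multiplicative and never vanishes: by \lemref{lem:perturbed_time_scale_gpac} the braked trajectory is \emph{exactly} the robust trajectory under the time change $\hat{\psi}(t)=\int_0^t\psi$, with the integrated error invariant under that change, so no extra error budget is consumed at all; \lemref{lem:finite_time_robust} then gives the bound on the finite interval, one checks $\Delta\geqslant0$ there, hence $\psi\geqslant\frac12$, and the whole cost is $\Omega^*=2\Omega$ and $\Theta^*=\Theta+1$, with constant steepness and no integrability tuning. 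Your route does work, but needs three patches that the paper's lemmas render unnecessary: (i) padding the slack by $0$ after the crossing time $T$ breaks continuity of the perturbation, which \defref{def:grc} requires of $e$ --- this is precisely what the exponential-tail extension in the proof of \lemref{lem:finite_time_robust} is for, so you should invoke or replicate it rather than pad by zero; (ii) $\ln(1+\tau)$ is undefined for $\tau\leqslant-1$, while \defref{def:gsc} insists the right-hand side be generable on the whole space and the clock is itself perturbed, so use e.g. $\ln(1+\tau^2)$; (iii) your displayed space bound $\poly(\infnorm{x},\mu,t)+\infnorm{e_0}+\int_0^t\infnorm{e}$ is too strong as stated, since the threshold is a polynomial in the perturbed $\alpha,m,\tau$, making the dependence on $\hat{e}(t)$ polynomial rather than additive --- which is still exactly what \defref{def:gsc} permits. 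Your remarks on the reverse inclusion are fine and match the paper's strategy of closing the cycle.
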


where

\begin{definition}[Analog strong computability]\label{def:gsc}
Let $n,m\in\N$, $f:\subseteq\R^n\rightarrow\R^m$, $\Theta,\Omega:\Rp^2\rightarrow\Rp$
and $\Upsilon:\Rp^4\rightarrow\Rp$.
We say that $f$ is $(\Upsilon,\Omega,\Theta)$-strongly-computable if and only if there exist $d\in\N$,
and $(h:\R^d\rightarrow\R^d),(g:\R^n\times\Rp\rightarrow\R^d)\in\gpval$
such that for any $x\in\R^n$, $\mu\in\Rp$, $e_0\in\R^d$ and $e\in C^0(\Rp,\R^d)$,
there is (a unique) $y:\Rp\rightarrow\R^d$ satisfying for all $t\in\Rp$
and $\hat{e}(t)=\infnorm{e_0}+\int_0^t\infnorm{e(u)}du$:
\begin{itemize}
\item $y(0)=g(x,\mu)+e_0$ and $y'(t)=h(y(t))+e(t)$\hfill$\blacktriangleright$ $y$ satisfies a generable IVP
\item if $x\in\dom{f}$,  $t\geqslant\Omega(\infnorm{x},\mu)$ and $\hat{e}(t)\leqslant e^{-\Theta(\infnorm{x},\mu)}$
then $\infnorm{y_{1..m}(t)-f(x)}\leqslant e^{-\mu}$
\item $\infnorm{y(t)}\leqslant\Upsilon(\infnorm{x},\mu,\hat{e}(t),t)$\hfill$\blacktriangleright$ $y(t)$ is bounded
\end{itemize}
We denote by $\gsc{\Upsilon}{\Omega}{\Theta}$ the set of
$(\Upsilon,\Omega,\Theta)$-strongly-computable functions, and by $\gpsc$ the set of
$(\poly,\poly,\poly)$-strongly-computable functions.
\end{definition}

Actually, we prove in this section that
$\gprc\subseteq\gpsc$. Equality follows from results in other
sections. 

\subsection{Some remarks}

The following Lemma can be proved by providing explicitly such a
function: 

\begin{lemma}[Max function, \citep{BGP16Arxiv}]\label{lem:max}
There is a family of functions $\mx_\delta\in\gpval$ such that: 
For any $x,y\in\R$ and $\delta\in]0,1]$ we have:
\[\max(x,y)\leqslant\mx_\delta(x,y)\leqslant\max(x,y)+\delta\]
For any $x\in\R^n$ and $\delta\in]0,1]$ we have:
\[\max(x_1,\ldots,x_n)\leqslant\mx_\delta(x)\leqslant\max(x_1,\ldots,x_n)+\delta\]
\end{lemma}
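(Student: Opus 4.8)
The plan is to exhibit $\mx_\delta$ explicitly, built from the $\infty$-norm approximations $\norm_{\infty,\delta}$ supplied by \lemref{lem:norm}, and then to certify membership in $\gpval$ using the closure properties of \lemref{lem:gpac_ext_class_stable}. First I would handle the binary case. Specialising \lemref{lem:norm} to dimension one yields a generable, polynomially bounded function $\abs_\eta := \norm_{\infty,\eta}$ with $|x|\leqslant\abs_\eta(x)\leqslant|x|+\eta$ for every $\eta\in]0,1]$. Starting from the identity $\max(a,b)=\tfrac{1}{2}\big(a+b+|a-b|\big)$, I would define
\[\mx_\delta(a,b)=\tfrac{1}{2}\Big(a+b+\abs_{\min(2\delta,1)}(a-b)\Big),\]
which satisfies $\max(a,b)\leqslant\mx_\delta(a,b)\leqslant\max(a,b)+\delta$. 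Since $\mx_\delta$ is obtained from $\norm_{\infty,\cdot}$ by composition with affine maps and by arithmetic combinations, \lemref{lem:gpac_ext_class_stable} gives $\mx_\delta\in\gpval$.

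For the $n$-ary case (with $n\geqslant2$, the case $n=1$ being trivial) I would iterate the binary construction: set $M_1=x_1$, $M_{j+1}=\mx_{\delta/(n-1)}(M_j,x_{j+1})$ for $1\leqslant j\leqslant n-1$, and let $\mx_\delta(x_1,\dots,x_n):=M_n$. A straightforward induction, using monotonicity of $\max$ in each variable together with the elementary bound $\max(u+\varepsilon,v)\leqslant\max(u,v)+\varepsilon$, gives
\[\max(x_1,\dots,x_j)\leqslant M_j\leqslant\max(x_1,\dots,x_j)+(j-1)\tfrac{\delta}{n-1}\]
for all $j\leqslant n$, so that $M_n$ over-approximates $\max(x_1,\dots,x_n)$ to within $\delta$. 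As a finite composition of functions in $\gpval$ with polynomials, $\mx_\delta$ again lies in $\gpval$ by \lemref{lem:gpac_ext_class_stable}.

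This argument is essentially routine, so I do not expect a genuine obstacle; the only bookkeeping that needs care is (i) keeping every accuracy parameter inside $]0,1]$ whenever \lemref{lem:norm} is invoked — handled by the truncation $\min(2\delta,1)$ in the binary step and by the fact that $\delta/(n-1)\leqslant1$ in the iteration — and (ii) verifying that the lower bound $\max\leqslant\mx_\delta$ survives the iteration, which follows from monotonicity of $\max$ in its first argument applied along the composition chain. The polynomial size bound on $\mx_\delta$ is inherited from that of $\norm_{\infty,\cdot}$ through the finitely many arithmetic operations and polynomial compositions involved, exactly as quantified in \lemref{lem:gpac_ext_class_stable}.
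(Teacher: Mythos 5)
Your construction is correct: the identity $\max(a,b)=\tfrac{1}{2}\big(a+b+|a-b|\big)$ together with the one-dimensional case of \lemref{lem:norm} (with accuracy parameter $\min(2\delta,1)$) gives the required two-sided bound, the iterated binary step with accuracy $\delta/(n-1)$ settles the $n$-ary case, and membership in $\gpval$ follows from the closure properties of \lemref{lem:gpac_ext_class_stable}. This matches the paper's approach, which gives no proof beyond citing \citep{BGP16Arxiv} and remarking that the lemma is established by exhibiting such a function explicitly --- which is exactly what you do.
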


The following lemmas can also be established:

\begin{lemma}[Bounds on $\tanh$, \citep{BGP16Arxiv}]\label{lem:bounds_tanh}
$1-\sgn{t}\tanh(t)\leqslant e^{-|t|}$ for all $t\in\R$.
\end{lemma}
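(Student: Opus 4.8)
The plan is to exploit the oddness of $\tanh$ to reduce everything to the case $t\geqslant 0$. For $t<0$, writing $s=-t>0$ we have $\sgn{t}=-1$, so $1-\sgn{t}\tanh(t)=1+\tanh(-s)=1-\tanh(s)$, while $e^{-|t|}=e^{-s}$; thus the inequality at $t<0$ is exactly the inequality at the point $s>0$. At $t=0$ the convention $\sgn{0}=0$ makes both sides equal to $1$, so equality holds there.

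It therefore suffices to show $1-\tanh(t)\leqslant e^{-t}$ for all $t\geqslant 0$. First I would substitute the explicit formula $\tanh(t)=\frac{e^t-e^{-t}}{e^t+e^{-t}}$, which yields
\[1-\tanh(t)=\frac{(e^t+e^{-t})-(e^t-e^{-t})}{e^t+e^{-t}}=\frac{2e^{-t}}{e^t+e^{-t}}.\]
Dividing the target inequality by $e^{-t}>0$, it becomes $\frac{2}{e^t+e^{-t}}\leqslant 1$, i.e.\ $e^t+e^{-t}\geqslant 2$. This last bound is immediate, e.g.\ from the arithmetic--geometric mean inequality applied to $e^t$ and $e^{-t}$, or equivalently from $(e^{t/2}-e^{-t/2})^2\geqslant 0$. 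This settles the case $t\geqslant 0$ and hence the full statement.

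I do not expect any genuine obstacle here: the computation collapses to a single line once one uses the closed form of $\tanh$, and the only mild subtlety is the bookkeeping at $t=0$, where the inequality is tight precisely because of the convention $\sgn{0}=0$. An alternative, slightly more ``analytic'' route --- showing that $t\mapsto e^{-t}-(1-\tanh(t))$ vanishes at $t=0$ and has nonnegative derivative on $\Rp$ --- also works, but it is strictly more work than the algebraic identity above, so I would prefer the latter.
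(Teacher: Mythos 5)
Your proof is correct: the reduction to $t\geqslant 0$ by oddness of $\tanh$, the identity $1-\tanh(t)=\frac{2e^{-t}}{e^t+e^{-t}}$, and the bound $e^t+e^{-t}\geqslant 2$ settle the claim, and the $t=0$ case is harmless since $\tanh(0)=0$ regardless of the sign convention. The paper itself gives no proof of this lemma --- it is imported from the cited reference \citep{BGP16Arxiv} --- so there is nothing in-text to compare against; your one-line algebraic argument is exactly the kind of elementary verification the statement calls for.
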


\begin{lemma}[Perturbed time-scaling]\label{lem:perturbed_time_scale_gpac}
Let $d\in\N$, $x_0\in\R^d$, $p\in\R^d[\R^d]$, $e\in C^0(\Rp,\R^d)$ and $\phi\in C^0(\Rp,\Rp)$.
Let $\psi(t)=\int_0^t\phi(u)du$.
Assume that $\psi$ is an increasing function and that $y,z:\Rp\rightarrow\R^d$ satisfy for all $t\in\Rp$:
\[\left\{\begin{array}{@{}r@{}l}y(0)&=x_0\\y'(t)&=p(y(t))+(\psi^{-1})'(t)e(\psi^{-1}(t))\end{array}\right.\qquad
\left\{\begin{array}{@{}r@{}l}z(0)&=x_0\\z'(t)&=\phi(t)p(z(t))+e(t)\end{array}\right.\]
Then $z(t)=y\left(\psi(t)\right)$ for all $t\in\Rp$. In particular,
\[\int_0^{\psi(t)}\infnorm{(\psi^{-1})'(u)e(\psi^{-1}(u))}du=\int_0^t\infnorm{e(u)}du\]
and
\[\sup_{u\in[0,\psi(t)]}\infnorm{(\psi^{-1})'(u)e(\psi^{-1}(u))}=\sup_{u\in[0,t]}\frac{\infnorm{e(u)}}{\phi(u)}.\]
\end{lemma}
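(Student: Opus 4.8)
The plan is to verify directly that the time-reparameterised map $w(t):=y(\psi(t))$ solves the very same initial value problem as $z$, and then to conclude by uniqueness of solutions. As a preliminary I would note that $\psi$ is $C^1$ (since $\phi$ is continuous), that $\psi(0)=\int_0^0\phi(u)\,du=0$, and that since $\psi$ is strictly increasing it is a $C^1$-diffeomorphism from $\Rp$ onto its image with $\psi'=\phi$; differentiating the identity $\psi^{-1}(\psi(t))=t$ gives $(\psi^{-1})'(\psi(t))=\frac{1}{\psi'(t)}=\frac{1}{\phi(t)}$. In particular $w(0)=y(\psi(0))=y(0)=x_0$.

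Next I would differentiate $w$ by the chain rule and simplify:
\[w'(t)=\psi'(t)\,y'(\psi(t))=\phi(t)\left(p\big(y(\psi(t))\big)+(\psi^{-1})'(\psi(t))\,e\big(\psi^{-1}(\psi(t))\big)\right).\]
Substituting $\psi^{-1}(\psi(t))=t$ and $(\psi^{-1})'(\psi(t))=\frac{1}{\phi(t)}$, the parenthesised term collapses to $p(w(t))+\frac{1}{\phi(t)}e(t)$, so that $w'(t)=\phi(t)p(w(t))+e(t)$. Hence $w$ satisfies $w(0)=x_0$ and $w'(t)=\phi(t)p(w(t))+e(t)$, which is exactly the IVP defining $z$. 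Since the right-hand side $(t,\xi)\mapsto\phi(t)p(\xi)+e(t)$ is continuous in $t$ and, $p$ being a polynomial, locally Lipschitz in $\xi$ uniformly on compact time intervals, the Cauchy--Lipschitz theorem gives uniqueness of the solution; therefore $z(t)=w(t)=y(\psi(t))$ for all $t\in\Rp$.

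The two remaining identities follow from the change of variables $u=\psi(s)$, $du=\phi(s)\,ds$: as $\psi$ is increasing, $u$ runs over $[0,\psi(t)]$ exactly when $s$ runs over $[0,t]$, and since $\phi>0$ we have $(\psi^{-1})'(u)=\frac{1}{\phi(s)}$, whence $\infnorm{(\psi^{-1})'(u)\,e(\psi^{-1}(u))}=\frac{\infnorm{e(s)}}{\phi(s)}$. The integral identity then reads $\int_0^{\psi(t)}\frac{\infnorm{e(s)}}{\phi(s)}\phi(s)\,ds=\int_0^t\infnorm{e(s)}\,ds$, and the supremum identity is immediate from the same substitution.

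The argument is mostly routine bookkeeping; the only delicate point is the regularity of $\psi^{-1}$. One must be careful to use ``$\psi$ increasing'' in the strict sense --- equivalently that $\phi$ does not vanish on any subinterval, which is the case in all applications of this lemma (e.g.\ inside \lemref{lem:pivp_slowstop}) --- so that $\psi^{-1}$ is genuinely $C^1$ and the manipulations involving $(\psi^{-1})'$ above are legitimate.
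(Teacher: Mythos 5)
Your proposal is correct and takes essentially the same route as the paper, whose one-line proof simply invokes $\phi=\psi'$, $\psi'\cdot(\psi^{-1})'\circ\psi=1$ and $\psi'\geqslant0$, i.e.\ the very chain-rule and change-of-variables computation you carry out, with your appeal to Cauchy--Lipschitz uniqueness making explicit the step the paper leaves implicit. (Only a cosmetic remark: pointwise differentiability of $\psi^{-1}$ needs $\phi>0$ everywhere rather than merely $\phi$ not vanishing on any subinterval, but this is already forced by the statement itself, which divides by $\phi(u)$ in the supremum identity.)
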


\begin{proof}
Use that $\phi=\psi'$, $\psi'\cdot(\psi^{-1})'\circ\psi=1$ and that $\psi'\geqslant0$.
\end{proof}

On a more technical side, we will need to ``apply'' \defref{def:grc} over finite intervals
and we need the following lemma to do so.

\begin{lemma}[Finite time robustness]\label{lem:finite_time_robust}
Let $f\in\grc{\Upsilon}{\Omega}{\Theta}$, $I=[0,T]$, $x\in\dom{f}$, $\mu\in\Rp$, $e_0\in\R^d$
and $e\in C^0(I,\R^d)$ such that
\[\infnorm{e_0}+\int_I\infnorm{e(t)}dt<e^{-\Theta(\infnorm{x},\mu)}.\]
Assume that $y:I\rightarrow\R^d$ satisfies for all $t\in I$:
\[y(0)=g(x,\mu)+e_0\qquad y'(t)=h(y(t))+e(t)\]
where $g,h$ come from \defref{def:grc} applied to $f$. Then for all $t\in I$:
\begin{itemize}
\item $\infnorm{y(t)}\leqslant\Upsilon(\infnorm{x},\mu,t)$
\item if $t\geqslant\Omega(\infnorm{x},\mu)$ then $\infnorm{y_{1..m}-f(x)}\leqslant e^{-\mu}$
\end{itemize}
\end{lemma}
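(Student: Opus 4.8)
The plan is to deduce the statement from \defref{def:grc} applied on all of $\Rp$: I would first extend the perturbation $e$ from $I=[0,T]$ to a continuous function on $\Rp$ without inflating its total mass past the threshold, then invoke robust computability to obtain a global solution carrying the required bounds, and finally use uniqueness of solutions of ordinary differential equations to conclude that $y$ coincides on $I$ with that global solution.

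Concretely, I would set $\varepsilon := e^{-\Theta(\infnorm{x},\mu)} - \big(\infnorm{e_0}+\int_I\infnorm{e(t)}dt\big)$, which is strictly positive by hypothesis, and extend $e$ to $\tilde{e}\in C^0(\Rp,\R^d)$ by letting $\tilde{e}$ agree with $e$ on $I$, decay linearly from $e(T)$ to $0$ on $[T,T+\tau]$, and take the value $0$ for $t\geqslant T+\tau$, where $\tau\in]0,1]$ is chosen small enough that the added mass, which equals $\tfrac{\tau}{2}\infnorm{e(T)}$, is less than $\varepsilon$. Then $\tilde{e}$ is continuous, restricts to $e$ on $I$, and $\infnorm{e_0}+\int_0^{\infty}\infnorm{\tilde{e}(t)}dt<e^{-\Theta(\infnorm{x},\mu)}$. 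Applying \defref{def:grc} to $f$ (with its associated $d$, $g$, $h$) for the input $x$, precision $\mu$, initial error $e_0$ and perturbation $\tilde{e}$ yields a unique $\tilde{y}:\Rp\rightarrow\R^d$ with $\tilde{y}(0)=g(x,\mu)+e_0$ and $\tilde{y}'(t)=h(\tilde{y}(t))+\tilde{e}(t)$, satisfying $\infnorm{\tilde{y}(t)}\leqslant\Upsilon(\infnorm{x},\mu,t)$ for all $t\in\Rp$ and $\infnorm{\tilde{y}_{1..m}(t)-f(x)}\leqslant e^{-\mu}$ whenever $t\geqslant\Omega(\infnorm{x},\mu)$.

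It then remains to identify $y$ with $\tilde{y}$ on $I$. Since $\tilde{e}=e$ on $I$, both $y$ and the restriction $\frestrict{\tilde{y}}{I}$ solve the initial value problem $u(0)=g(x,\mu)+e_0$, $u'(t)=h(u(t))+e(t)$ on $I$; its right-hand side $(t,u)\mapsto h(u)+e(t)$ is continuous in $t$ and locally Lipschitz in $u$, because $h\in\gpval$ is analytic and hence $C^1$, so by the Cauchy--Lipschitz theorem the solution on $I$ is unique and $y=\frestrict{\tilde{y}}{I}$. The two asserted bounds on $y$ over $I$ then follow verbatim from those of $\tilde{y}$. The only slightly delicate step — and the reason the hypothesis is stated with a strict inequality — is engineering the continuous extension $\tilde{e}$ so that its total mass together with $\infnorm{e_0}$ stays strictly below $e^{-\Theta(\infnorm{x},\mu)}$; once that is secured, the argument is a direct appeal to \defref{def:grc} and to standard ODE uniqueness.
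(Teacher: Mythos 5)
Your proposal is correct and follows essentially the same route as the paper: define $\varepsilon$ as the slack left by the strict inequality, extend $e$ continuously to $\Rp$ with small enough extra mass (the paper uses an exponential tail where you use a linear ramp, an immaterial difference), apply \defref{def:grc} to the extended perturbation, and conclude by uniqueness that $y$ agrees with the global solution on $I$. Your explicit appeal to Cauchy--Lipschitz just spells out the identification step the paper leaves implicit.
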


\begin{proof}
The trick is simply to extend $e$ so that it is defined over $\Rp$ and such that:
\[\infnorm{e_0}+\int_0^\infty\infnorm{e(u)}du\leqslant e^{-\Theta(\infnorm{x},\mu)}\]
This is always possible because the truncated integral is strictly smaller than the bound.
Formally, define for $t\in\Rp$:
\[\bar{e}(t)=\begin{cases}e(t)&\text{if }t\leqslant T\\
    e(T)e^{\frac{e(T)}{\varepsilon}(T-t)}&\text{otherwise}\end{cases}
\]
\[
\qquad\text{where }\varepsilon=e^{-\Theta(\infnorm{x},\mu)}-\infnorm{e_0}-\int_I\infnorm{e(t)}dt>0\]
One easily checks that $\bar{e}\in C^0(\Rp,\R^d)$ and that:
\begin{align*}
\infnorm{e_0}+\int_0^\infty\infnorm{\bar{e}(t)}dt&=\infnorm{e_0}+\int_0^T\infnorm{e(t)}dt+\int_T^\infty e(T)e^{\frac{e(T)}{\varepsilon}(T-t)}dt\\
    &=e^{-\Theta(\infnorm{x},\mu)}-\varepsilon+\left[-\varepsilon e(T)e^{\frac{e(T)}{\varepsilon}(T-t)}\right]_T^\infty\\
    &=e^{-\Theta(\infnorm{x},\mu)}
\end{align*}
Assume that $z:\Rp\rightarrow\R^d$ satisfies for $t\in\Rp$:
\[z(0)=g(x,\mu)\qquad z'(t)=g(z(t))+\bar{e}(t)\]
Then $z$ satisfies \defref{def:grc} so $\infnorm{z}(t)\leqslant\Upsilon(\infnorm{x},\mu)$
and if $t\geqslant\Omega(\infnorm{x},\mu)$ then $\infnorm{z_{1..m}(t)-f(x)}\leqslant e^{-\mu}$.
Conclude by noting that $z(t)=y(t)$ for all $t\in[0,T]$ since $e(t)=\bar{e}(t)$.
\end{proof}

\subsection{The proof}

The proof of \thref{th:robust_to_strong} is then the following.

\begin{proof}
Let $\Omega,\Theta,\Upsilon$ be polynomials and $(f:\subseteq\R^n\rightarrow\R^m)\in\grc{\Upsilon}{\Omega}{\Theta}$.
Without loss of generality, we assume that $\Omega$, $\Theta$, $\Upsilon$ are
increasing functions of their arguments.
Apply \defref{def:grc} to get $d$, $h$ and $g$.
Let $x\in\R^n$, $\mu\in\Rp$, $(e_{0,y},e_{0,\ell})\in\R^{d+1}$ and $(e_y,e_\ell)\in C^0(\Rp,\R^{d+1})$.
Define $\hat{e}(t)=\infnorm{e_0}+\int_0^t\infnorm{e(u)}du$,
and consider the following system for $t\in\Rp$:
\[
\left\{\begin{array}{@{}r@{}l}
y(0)&=g(x,\mu)+e_{0,y}\\
y'(t)&=\psi(t)h(y(t))+e_{y}(t)\\
\ell(0)&=\mx_1(\norm_{\infty,1}(x),\mu)+1+e_{0,\ell}\\
\ell'(t)&=1+e_{\ell}(t)
\end{array}\right.
\]
\[\psi(t)=\frac{1+\tanh(\Delta(t))}{2}\qquad
\Delta(t)=\Upsilon(\ell(t),\ell(t),\ell(t))+1-\norm_{\infty,1}(y(t))\]
We will first show that the system remains polynomially bounded. Apply
\lemref{lem:max} and \lemref{lem:norm} to get that:
\begin{align*}
\infnorm{\ell(0)}&\leqslant\max(\infnorm{x}+1,\mu)+1+\infnorm{e_{0,\ell}}\\
    &\leqslant\poly(\infnorm{x},\mu)+\infnorm{e_{0,\ell}}
\end{align*}
Consequently:
\begin{align}
\infnorm{\ell(t)}&\leqslant\infnorm{\ell(0)}+\int_0^t1+\infnorm{e_{\ell}(u)}du\nonumber\\
    &\leqslant \poly(\infnorm{x},\mu)+t+\infnorm{e_{0,\ell}}+\int_0^t\infnorm{e_{\ell}(u)}du\nonumber\\
    &\leqslant\poly(\infnorm{x},\mu)+t+\hat{e}(t)\nonumber\\
    &\leqslant\poly(\infnorm{x},\mu,t,\hat{e}(t))\label{eq:robust_to_strong:ell}
\end{align}
Since $g,h\in\gpval$,
there exist two polynomials $\mtt{sp}$ and $\ovl{\mtt{sp}}$ such that $\infnorm{g(x)}\leqslant\mtt{sp}(\infnorm{x})$
and $\infnorm{h(x)}\leqslant\ovl{\mtt{sp}}(\infnorm{x})$
for all $x\in\R^d$ and without loss of generality, we assume that $\mtt{sp}$ and $\ovl{\mtt{sp}}$ are
increasing functions. Let $t\in\Rp$, there are two possibilities:
\begin{itemize}
\item If $\Delta(t)\geqslant0$ then $\norm_{\infty,1}(y(t))\leqslant1+\Upsilon(\ell(t),\ell(t),\ell(t))$
so apply  \lemref{lem:norm} and use \eqref{eq:robust_to_strong:ell} to conclude that $\infnorm{y(t)}\leqslant\poly(\infnorm{x},\mu,t,\hat{e}(t))$
and thus:
\begin{align}\infnorm{\psi(t)h(y(t))}&\leqslant\ovl{\mtt{sp}}(\infnorm{y(t)})\tag*{use that $\tanh<1$}\\
    &\leqslant\poly(\infnorm{x},\mu,t,\hat{e}(t))\label{eq:robust_to_strong:rhs_one}
\end{align}
\item If $\Delta(t)<0$ then apply \lemref{lem:bounds_tanh} to get that $\psi(t)\leqslant\frac{1}{2}e^{\Delta(t)}\leqslant e^{\Delta(t)}$.
Apply \lemref{lem:norm} to get that $\Delta(t)\leqslant\Upsilon(\ell(t),\ell(t),\ell(t))+1-\infnorm{y(t)}$
and thus $\infnorm{y(t)}\leqslant \Upsilon(\ell(t),\ell(t),\ell(t))+1-\Delta(t)$ and thus:
\begin{align}
\infnorm{\psi(t)h(y(t))}&\leqslant e^{\Delta(t)}\ovl{\mtt{sp}}(\infnorm{y(t)})\tag*{use the bound on $\psi$}\\
    &\leqslant e^{\Delta(t)}\ovl{\mtt{sp}}(\Upsilon(\ell(t),\ell(t),\ell(t))+1-\Delta(t))\tag*{use the bound on $\infnorm{y(t)}$}\\
    &\leqslant\poly(\ell(t))e^{\Delta(t)}\poly(-\Delta(t))\tag*{use that $\Upsilon$ is polynomial}\\
    &\leqslant\poly(\ell(t))\tag*{use that $e^{-x}\poly(x)=\bigO{1}$ for $x\geqslant0$ and fixed $\poly$}\\
    &\leqslant\poly(\infnorm{x},\mu,t,\hat{e}(t))\label{eq:robust_to_strong:rhs_two}
\end{align}
\end{itemize}
Putting \eqref{eq:robust_to_strong:rhs_one} and \eqref{eq:robust_to_strong:rhs_two} together, we get that:
\begin{align*}
\infnorm{y(t)}&\leqslant\infnorm{g(x,\mu)}+\infnorm{e_{0,y}}+\int_0^t\infnorm{\psi(u)h(y(u))}+\infnorm{e_y(u)}du\\
    &\leqslant\mtt{sp}(\infnorm{x,\mu})+\int_0^t\poly(\infnorm{x},\mu,u,\hat{e}(u))du+\hat{e}(t)\\
    &\leqslant\poly(\infnorm{x},\mu,t,\hat{e}(t))
\end{align*}
We will now analyze the behavior of the system when the error is bounded.
Define $\Theta^*(\alpha,\mu)=\Theta(\alpha,\mu)+1$. Define $\hat{\psi}(t)=\int_0^t\psi(u)du$ and note that it is a diffeomorphism
since $\psi>0$. Apply \lemref{lem:perturbed_time_scale_gpac} to get that $y(t)=z(\hat{\psi}(t))$
for all $t\in\Rp$, where $z$ satisfies for $\xi\in\hat{\psi}(\Rp)$:
\[z(0)=g(x,\mu)+e_{0,y}\qquad z'(\xi)=h(z(\xi))+\tilde{e}(\xi)\]
\[\qquad\text{where}
\int_0^{\hat{\psi}(t)}\infnorm{\tilde{e}(\xi)}d\xi=\int_0^t\infnorm{e_y(u)}du\]
Assume that $x\in\dom{f}$ and let $T\in\Rp$ such that $\hat{e}(T)\leqslant e^{-\Theta^*(\infnorm{x},\mu)}$.
Then $\hat{e}(T)<e^{-\Theta(\infnorm{x},\mu)}$ and for all $t\in[0,T]$:
\begin{align*}
\infnorm{e_{0,y}}+\int_0^{\hat{\psi}(t)}\infnorm{\tilde{e}}(u)du&=\infnorm{e_{0,y}}+\int_0^t\infnorm{e_y(u)}du\\
    &\leqslant\hat{e}(t)\leqslant e^{-\Theta(\infnorm{x},\mu)}
\end{align*}
Apply \lemref{lem:finite_time_robust} to get for all $u\in[0,\hat{\psi}(T)]$:
\begin{equation}\infnorm{z(u)}\leqslant\Upsilon(\infnorm{x},\mu,u)\end{equation}
\begin{equation}\text{if }u\geqslant\Omega(\infnorm{x},\mu)\text{ then }\infnorm{z_{1..m}(u)-f(x)}\leqslant e^{-\mu}\end{equation}
Apply Lemmas \ref{lem:max} and \ref{lem:norm} to get for all $t\in[0,T]$:
\begin{align*}
\ell(t)&\geqslant \mx_1(\norm_{\infty,1}(\infnorm{x},\mu))+1-\infnorm{e_{0,\ell}}+t-\int_0^t\infnorm{e_\ell(u)}du\\
    &\geqslant \max(\infnorm{x},\mu)+1+t-\hat{e}(t)\\
    &\geqslant\max(\infnorm{x},\mu,t)\tag*{using that $\hat{e}(t)\leqslant1$}
\end{align*}
Consequently, using \lemref{lem:norm}, for all $t\in[0,T]$:
\begin{align*}\label{eq:robust_to_strong:ineq_Delta}
\Delta(t)&\geqslant\Upsilon(\ell(t),\ell(t),\ell(t))-\infnorm{y(t)}\\
    &\geqslant\Upsilon(\infnorm{x},\mu,t)-\infnorm{y(t)}\tag*{using that $\ell(t)\geqslant\max(\infnorm{x},\mu,t)$}\\
    &=\Upsilon(\infnorm{x},\mu,t)-\infnorm{z(\hat{\psi}(t))}\tag*{using that $y(t)=z(\hat{\psi}(t))$}\\
    &\geqslant0\tag*{because $\hat{\psi}(t)\in[0,\hat{\psi}(T)]$}
\end{align*}
Consequently for all $t\in[0,T]$:
\[\hat{\psi}(t)=\int_0^t\psi(u)du=\int_0^t\frac{1+\tanh(\Delta(u))}{2}du\geqslant\frac{t}{2}\]
Define $\Omega^*(\alpha,\mu)=2\Omega(\alpha,\mu)$.
Assume that $T\geqslant \Omega^*(\infnorm{x},\mu)$ then $\hat{\psi}(T)\geqslant\Omega(\infnorm{x},\mu)$
and thus $\infnorm{y_{1..m}(T)-f(x)}=\infnorm{z(\hat{\psi}(T))-f(x)}\leqslant e^{-\mu}$.

Finally, $(y,\ell)(0)=g^*(x,\mu)+e_0$ where $g^*\in\gpval$. Similarly $(y,\ell)'(t)=h^*((y,\ell)(t))+e(t)$
where $h^*\in\gpval$. Note again that both $h^*$ and $g^*$ are defined over the entire space.
This concludes the proof that $f\in\gsc{\Omega^*}{\poly}{\Theta^*}$.
\end{proof}

\section{Proof that ASP implies AXP}
\label{app:strong_to_unaware}

This section is devoted to prove the following: in \defref{def:gsc} we defined a class
with a high degree of robustness to perturbations and related it to previous classes.
However, the value $f(x)$ the system computes
still depends on the initial condition (i.e.~$x$ is provided via the initial condition).
Here we want robustness to errors like in \defref{def:gsc}, but we also 
want to dynamically change the argument $x$ during a computation, as done in
\defref{def:goc}. Since these are two exigent requirements, we named this
computability form as ``extreme''. Here $\indicator{X}$ denotes the function defined by
$\indicator{X}(x)=1$ if $x\in X$ and $\indicator{X}(x)=0$ otherwise.

\begin{theorem}[Strong $\subseteq$ \unaware{}, $\gpsc\subseteq\gpuc$]\label{th:strong_to_unaware}
$f\in\gpsc$ iff  there exist polynomials $\Upsilon,\Lambda,\Theta$
and a constant polynomial\footnote{$\Omega(x)=c$ for all $x$ for some constant $c$.} $\Omega$ such that $f\in\guc{\Upsilon}{\Omega}{\Lambda}{\Theta}$.
\end{theorem}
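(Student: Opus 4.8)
The plan is to prove the two implications of the equivalence separately, the backward one being routine and the forward one ($\gpsc\subseteq\gpuc$) carrying the real content.

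For the backward direction, suppose $f\in\guc{\Upsilon}{\Omega}{\Lambda}{\Theta}$ with $\Upsilon,\Lambda,\Theta$ polynomial and $\Omega$ a constant. I would instantiate the \unaware{}-computable system with the \emph{constant} external signals $x(t)\equiv x$ and $\mu(t)\equiv\mu$. Then the stability hypothesis of \unaware{} computability holds on all of $\Rp$ (take $\bar x=x$, $\bar\mu=\mu$, and note $\infnorm{x(t)-\bar x}=0\leqslant e^{-\Lambda(\ldots)}$), the window-supremum $\pastsup{\delta}{\infnorm{x}}(t)$ collapses to $\infnorm{x}$, and the conclusion yields convergence of $y_{1..m}$ to $f(x)$ after the constant time $\Omega$, with the whole state bounded by a polynomial in $\infnorm{x},\mu,\hat e(t),t$; the dynamics-error tolerance $\Theta$ transfers verbatim. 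Hence $f\in\gpsc$ with polynomial parameters.

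For the forward direction, let $f\in\gpsc$ and pull $d$, the generable maps $h,g$ and the polynomials $\Upsilon,\Omega,\Theta$ out of \defref{def:gsc}. The goal is a generable, everywhere-defined ODE that reads the argument and precision as time-varying inputs, tolerates errors on its dynamics, and re-synchronises within a \emph{bounded} delay whenever the external inputs settle. I would build it from three coupled sub-systems. First, a \emph{smoothing/delayed copy} of the input: a generable sub-system whose state lags $(x(t),\mu(t))$ over a window of length $\delta$, from which, by comparison with the live input, one extracts an ``instability detector'' $\iota(t)\geqslant0$ that is (exponentially close to) $0$ exactly when the input has been stable over the last $\delta$ time units and is positive right after a change — this is precisely where the $\pastsup{\delta}$-form of the space bound becomes unavoidable. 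Second, a \emph{reset/reach} mechanism carrying the main state $Y$, with dynamics of the shape $Y'(t)=\tfrac{1+\tanh(-c\,\iota(t))}{2}\,h(Y(t))+\bigl(1-\tfrac{1+\tanh(-c\,\iota(t))}{2}\bigr)\reach\bigl(Y(t),g(x(t),\mu(t))\bigr)$: when the input is stable, $\iota\approx0$ and $Y$ runs the $\gpsc$ dynamics $Y'=h(Y)$; when the input has just moved, the $\reach$ term dominates and, by the fast and perturbation-tolerant convergence of $\reach$ (exactly the ``$y:=g_\infty$'' behaviour), snaps $Y$ onto $g(x(t),\mu(t))$ in constant time, despite both the external error $e$ and the fluctuation of $g(x(t),\mu(t))$. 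Third, a \emph{timer} that is damped back to $0$ by $\iota$ and otherwise grows linearly, off which the $\gpsc$ convergence time is read.

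The verification then runs as follows. When $(x(t),\mu(t))$ locks near $(\bar x,\bar\mu)$ on an interval $I=[a,b]$, within a constant delay $\iota$ has dropped to essentially $0$ and the reset has placed $Y$ at $g(\bar x,\bar\mu)$ up to an exponentially small residual; thereafter $Y$ obeys the perturbed generable IVP $Y'=h(Y)+(\text{small})$ whose total perturbation budget is kept below $e^{-\Theta(\infnorm{\bar x},\bar\mu)}$ by taking the polynomial $\Theta$ of the \unaware{} system large enough, so \lemref{lem:finite_time_robust} applied over the finite window $I$ — exactly as in the proof of \thref{th:robust_to_strong} — gives $\infnorm{Y_{1..m}(u)-f(\bar x)}\leqslant e^{-\bar\mu}$ for the appropriate $u\in I$ past the threshold. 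The polynomial state bound is obtained as in \thref{th:robust_to_strong}: a $\frac{1+\tanh(\Delta)}{2}$-type gate on the $h$-part switches off whenever $Y$ threatens to leave the polynomial region $\Upsilon(\ell,\ell,\ell)$ predicted from a slowly growing majorant $\ell$ of $(\pastsup{\delta}{\infnorm{x}}(t),\pastsup{\delta}{\mu}(t),\hat e(t),t)$, while $\reach$, $\tanh$, $\norm_{\infty,1}$, $\mx_1$ and the smoothing filter are themselves bounded and lie in $\gpval$ on the entire space, so the right-hand side is generable as $\gpuc$ requires. The main obstacle I expect is the simultaneous bookkeeping of the three mechanisms: showing the reset always finishes before the timer is ``trusted'', that the reaction delay is genuinely independent of $\infnorm{\bar x}$ and $\bar\mu$ (so $\Omega$ can be constant, the residual $\gpsc$ convergence time being absorbed into the part of the locking window past the threshold, which the later online-computability wrapper then re-expands into a polynomial $\Omega$), and that all of this survives arbitrary — possibly large — external errors $e_0,e$, which is where the damping gate from \thref{th:robust_to_strong} earns its keep.
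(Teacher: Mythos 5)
There is a genuine gap in the forward direction, and it sits exactly at the point the theorem is about: the \emph{constant} $\Omega$. In your construction, once the input has locked and the $\reach$-reset has snapped $Y$ onto $g(\bar{x},\bar{\mu})$, the main state evolves under the unmodified strong dynamics $Y'=h(Y)$, so $Y_{1..m}$ only becomes $e^{-\bar{\mu}}$-close to $f(\bar{x})$ after a further delay of order $\Omega_{\mathrm{strong}}(\infnorm{\bar{x}},\bar{\mu})$, which is a polynomial, not a constant; \defref{def:guc} requires accuracy for \emph{every} $u\in[a+\Omega(\infnorm{\bar{x}},\hat{\mu}),b]$, so with a constant $\Omega$ your system violates the guarantee on the initial polynomial-length portion of the locking window. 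Your suggestion that ``the later online-computability wrapper re-expands $\Omega$ into a polynomial'' is backwards and circular: the AXP$\subseteq$AOP step relies on \lemref{lem:gpuc_time_rescaling}, whose proof goes around the circle of inclusions and back through \thref{th:strong_to_unaware} itself, so the constancy of $\Omega$ must be produced here. The paper produces it by a time-acceleration trick you are missing: inside each period the $h$-part enters as $\plil_{[2,3],4}(t,\mu^*(t),A(t)h(y(t)))$ with $A(t)=1+\Omega(1+\norm_{\infty,1}(x(t)),\nu(t))$, so that (via \lemref{lem:perturbed_time_scale_gpac}) one unit of real time advances the simulated strong system by at least its full convergence time $\Omega(\infnorm{\bar{x}},\bar{\nu})$; your gate $\frac{1+\tanh(-c\,\iota(t))}{2}\,h(Y(t))$ runs $h$ at speed at most $1$ and cannot give a constant reaction delay.

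The second gap is the conditional reset driven by an input-instability detector. \defref{def:guc} quantifies over an \emph{arbitrary} initial state $y_0$ and puts no restriction on $x$, $\mu$, $e$ before the interval $I=[a,b]$: the input may have been locked at $\bar{x}$ long before $a$ while huge errors (allowed outside $I$) corrupted $Y$, or the run may simply start from a bad $y_0$ with a constant input. In all such cases your detector, which watches only the input signal, never fires, $Y$ is never re-initialized to $g(\bar{x},\bar{\mu})$, and running $h$ from a corrupted state carries no guarantee. This is precisely why the paper resets \emph{unconditionally and periodically} (period $4$): sample the precision on $[4n,4n+1]$, snap $y$ onto $g(x(t),\nu(t))$ on $[4n+1,4n+2]$ via $\sample$ (built on $\reach$), run the accelerated $h$-phase on $[4n+2,4n+3]$, and let $z$ sample-and-hold the output, so every period inside the locking window regenerates a correct value regardless of history. (Two lesser points: a ``delayed copy over a window of length $\delta$'' is not implementable by a finite-dimensional ODE, only approximate smoothing is; and in your backward direction you still need to autonomize the non-autonomous extreme system and control the drift of the stored $x$, $\mu$ and timer under the perturbations that \defref{def:gsc} must tolerate, whereas the paper simply obtains that direction from the chain of inclusions proved in the other sections.)
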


where

\begin{definition}[\Unaware{} computability]\label{def:guc}
Let $n,m\in\N$, $f:\subseteq\R^n\rightarrow\R^m$, $\Upsilon:\Rp^3\rightarrow\Rp$
and $\Omega,\Lambda,\Theta:\Rp^2\rightarrow\Rp$.
We say that $f$ is $(\Upsilon,\Omega,\Lambda,\Theta)$-\unawarely{}-computable
if and only if there exist $\delta\geqslant0$, $d\in\N$ and
$(g:\R^{d}\times\R^{n+1}\rightarrow\R^d)\in\gpval[\K]{}$  such that for any
$x\in C^0(\Rp,\R^n)$, $\mu\in C^0(\Rp,\Rp)$, $y_0\in\R^d$, $e\in C^0(\Rp,\R^d)$
there exists (a unique) $y:\Rp\rightarrow\R^d$ satisfying for all $t\in\Rp$:
\begin{itemize}
\item $y(0)=y_0$ and $y'(t)=g(t,y(t),x(t),\mu(t))+e(t)$
\item $\infnorm{y(t)}\leqslant
\Upsilon\left(\pastsup{\delta}{\infnorm{x}}(t),\pastsup{\delta}{\mu}(t),
    \infnorm{y_0}\indicator{[1,\delta]}(t)+\int_{\max(0,t-\delta)}^t\infnorm{e(u)}du\right)$
\item For any $I=[a,b]$, if there exist $\bar{x}\in\dom{f}$ and $\check{\mu},\hat{\mu}\geqslant0$ such that
for all $t\in I$:
\[\mu(t)\in[\check{\mu},\hat{\mu}]\text{ and }\infnorm{x(t)-\bar{x}}\leqslant e^{-\Lambda(\infnorm{\bar{x}},\hat{\mu})}
\text{ and }\int_{a}^b\infnorm{e(u)}du\leqslant e^{-\Theta(\infnorm{\bar{x}},\hat{\mu})}\]
then
\[\infnorm{y_{1..m}(u)-f(\bar{x})}\leqslant e^{-\check{\mu}}\text{ whenever }
a+\Omega(\infnorm{\bar{x}},\hat{\mu})\leqslant u\leqslant b.\]
\end{itemize}
We denote by $\guc{\Upsilon}{\Omega}{\Lambda}{\Theta}$ the set of
$(\Upsilon,\Omega,\Lambda,\Theta)$-\unawarely{}-computable functions
and by $\gpuc$ the set of
$(\poly,\poly,\poly,\poly)$-\unawarely{}-computable functions.
\end{definition}

Actually we prove the implication from left to right. The equivalence
will follow from other sections.

\subsection{Some remarks}

A very common pattern in signal processing is known as ``sample and hold'',
where we have a variable signal and we would like to apply some process to it. Unfortunately,
the processor often assumes (almost) constant input and does not work in real time (analog-to-digital
converters are a typical example).
In this case, we cannot feed the signal directly to the processor so we need some black
box that samples the signal to capture its value, and holds this value long enough
for the processor to compute its output. This process is usually used in a $\tau$-periodic
fashion: the box samples for time $\delta$ and holds for time
$\tau-\delta$. We will need two intermediate lemmas before introducing sample and hold.

\begin{lemma}[``low-X-high'' and ``high-X-low'', \citep{BGP16Arxiv}]\label{lem:lxh_hxl}
For every $I=[a,b]$, there exists $\lxh_I,\hxl_I\in\gpval$ such that for every $\mu\in\Rp$ and $t,x\in\R$
we have:
\begin{itemize}
\item $\lxh_I$ is of the form $\lxh_I(t,\mu,x)=\phi_1(t,\mu,x)x$ where $\phi_1\in\gpval$,
\item $\hxl_I$ is of the form $\lxh_I(t,\mu,x)=\phi_2(t,\mu,x)x$ where $\phi_2\in\gpval$,
\item if $t\leqslant a, |\lxh_I(t,\mu,x)|\leqslant e^{-\mu}$ and $|x-\hxl_I(t,\mu,x)|\leqslant e^{-\mu}$,
\item if $t\geqslant b, |x-\lxh_I(t,\mu,x)|\leqslant e^{-\mu}$ and $|\hxl_I(t,\mu,x)|\leqslant e^{-\mu}$,
\item in all cases, $|\lxh_I(t,\mu,x)|\leqslant|x|$ and $|\hxl_I(t,\mu,x)|\leqslant|x|$.
\end{itemize}
\end{lemma}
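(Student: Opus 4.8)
The plan is to prove \lemref{lem:lxh_hxl} constructively, by writing the two functions down explicitly and then checking the listed properties one at a time. Fix $I=[a,b]$ with $a<b$ (and $a,b\in\K$), write $\sigma(u)=\frac{1+\tanh u}{2}$, and set
\[\phi_1(t,\mu,x)=\sigma\!\left(\frac{2}{b-a}\left(t-\frac{a+b}{2}\right)\bigl(\mu+\ln(1+x^2)\bigr)\right),\qquad \lxh_I(t,\mu,x)=\phi_1(t,\mu,x)\,x,\]
and define $\phi_2$ and $\hxl_I=\phi_2\cdot x$ by the same formula with $t-\frac{a+b}{2}$ replaced by $\frac{a+b}{2}-t$. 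The point of inserting the factor $\mu+\ln(1+x^2)$ into the ``steepness'' of the sigmoid is precisely that the residual transition error of $\phi_i$, once multiplied by $|x|$, will still be dominated by $e^{-\mu}$.

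First I would check that $\phi_1,\phi_2\in\gpval$, and hence that $\lxh_I,\hxl_I\in\gpval$. The ingredients are: $\tanh\in\gpval$ (known); the map $x\mapsto\ln(1+x^2)$ is in $\gpval$, since together with $u=\frac1{1+x^2}$ it satisfies the polynomial IVP $\frac{d}{dx}(x,u,\ln(1+x^2))=(1,-2xu^2,2xu)$ with initial value $(0,1,0)$ and is polynomially bounded because $0\leqslant\ln(1+x^2)\leqslant x^2$; and then $\phi_i$ is obtained from these and the coordinate projections by the arithmetic and composition closure of \lemref{lem:gpac_ext_class_stable}, and is bounded (by $1$). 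Since $\lxh_I$ is the product of the bounded generable function $\phi_1$ with the polynomial $(t,\mu,x)\mapsto x$, it is generable with polynomially bounded value, so $\lxh_I\in\gpval$, and likewise for $\hxl_I$; the required product forms hold by construction.

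Next I would verify the four error inequalities, all of them coming from \lemref{lem:bounds_tanh}. For $\lxh_I$ with $t\leqslant a$: since $\mu+\ln(1+x^2)\geqslant0$ and $t-\frac{a+b}{2}\leqslant-\frac{b-a}{2}$, the argument $s$ of $\tanh$ inside $\phi_1$ satisfies $s\leqslant-(\mu+\ln(1+x^2))\leqslant0$, so $1+\tanh(s)\leqslant e^{s}\leqslant e^{-\mu}/(1+x^2)$, giving $0\leqslant\phi_1(t,\mu,x)\leqslant\frac{e^{-\mu}}{2(1+x^2)}$ and therefore $|\lxh_I(t,\mu,x)|\leqslant\frac{e^{-\mu}}{2}\cdot\frac{|x|}{1+x^2}\leqslant e^{-\mu}$ (using $\frac{|x|}{1+x^2}\leqslant\frac12$). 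For $t\geqslant b$: now $s\geqslant\mu+\ln(1+x^2)\geqslant0$, so $1-\tanh(s)\leqslant e^{-s}\leqslant e^{-\mu}/(1+x^2)$, i.e. $1-\phi_1(t,\mu,x)\leqslant\frac{e^{-\mu}}{2(1+x^2)}$, whence $|x-\lxh_I(t,\mu,x)|=|x|\,\bigl(1-\phi_1(t,\mu,x)\bigr)\leqslant e^{-\mu}$. The two inequalities for $\hxl_I$ follow from exactly the same computation after the substitution $t\mapsto -t$, which interchanges the cases $t\leqslant a$ and $t\geqslant b$. Finally, $0\leqslant\phi_i\leqslant1$ yields $|\lxh_I(t,\mu,x)|\leqslant|x|$ and $|\hxl_I(t,\mu,x)|\leqslant|x|$ in all cases.

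The only genuinely delicate point — and the one place a naive attempt would fail — is the generability of the steepness factor: one must use $\ln(1+x^2)$ (polynomially bounded thanks to $\ln(1+x^2)\leqslant x^2$) rather than $\ln|x|$ or $\ln$ itself, which is not polynomially bounded near $0$ and would destroy membership in $\gpval$. Everything else is bookkeeping with the closure properties recalled in \secref{sec:shannon} and the $\tanh$ estimate of \lemref{lem:bounds_tanh}.
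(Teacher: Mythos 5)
Your construction is correct, and it is essentially the construction behind the cited result: the paper itself does not prove \lemref{lem:lxh_hxl} (it imports it from \citep{BGP16Arxiv}), and the reference's $\lxh_I$ is likewise a $\tanh$-sigmoid in $t$ whose steepness is scaled by $\mu+\ln(1+x^2)$, multiplied by $x$, with the error bounds obtained exactly as you do from \lemref{lem:bounds_tanh} and $\frac{|x|}{1+x^2}\leqslant\frac12$. Your generability argument for $\ln(1+x^2)$ via the auxiliary variable $u=\frac{1}{1+x^2}$ and the closure properties of \lemref{lem:gpac_ext_class_stable} is also the standard route, so nothing further is needed (beyond the implicit assumption $a,b\in\K$, which you state).
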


\begin{lemma}[``periodic low-integral-low'']\label{lem:plil}
There is a family of functions $\plil_{I,\tau}\in\gpval$  where
$\mu,\tau\in\Rp$, $I=[a,b]\subsetneq[0,\tau]$ and $x\in\R$ with the following property: 
there exist a constant $K$ and $\phi$ such that $\plil_{I,\tau}(t,\mu,x)=\phi(t,\mu,x)x$ and:
\begin{itemize}
\item $\plil_{I,\tau}(\cdot,\mu,x)$ is $\tau$-periodic
\item for all $t\notin I$, $|\plil_{I,\tau}(t,\mu,x)|<e^{-\mu}$
\item for any $\alpha:I\rightarrow\Rp,\beta:I\rightarrow\R$:
\[1\leqslant\int_a^b \phi(t,\alpha(t),\beta(t))dt\leqslant K\]
\end{itemize}
\end{lemma}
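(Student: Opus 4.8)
The plan is to give an explicit formula for $\plil_{I,\tau}$ of the announced shape $\phi(t,\mu,x)\,x$, built from one $\tau$-periodic cosine and the ``low-X-high'' gadget $\lxh$ of \lemref{lem:lxh_hxl}, and then to verify the three properties by elementary estimates. Write $I=[a,b]$ and put
\[s(t)=\cos\!\left(\tfrac{2\pi}{\tau}\bigl(t-\tfrac{a+b}{2}\bigr)\right),\qquad
\lambda_0=s(a)=s(b)=\cos\tfrac{\pi(b-a)}{\tau},\qquad
\lambda_1=s\!\left(a+\tfrac{b-a}{4}\right)=\cos\tfrac{\pi(b-a)}{2\tau},\]
so $s$ is $\tau$-periodic, maximal (equal to $1$) at the midpoint $\tfrac{a+b}{2}$ of $I$ and symmetric about it, with $-1\le\lambda_0<\lambda_1<1$ since $b-a<\tau$. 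Set $c=\tfrac14+\tfrac2{b-a}$ and define
\[\phi(t,\mu,x)=\lxh_{[\lambda_0,\lambda_1]}\!\bigl(s(t),\ \mu+2+\ln(1+x^2),\ c\bigr),\qquad
\plil_{I,\tau}(t,\mu,x)=x\,\phi(t,\mu,x).\]
Recall from \lemref{lem:lxh_hxl} and its construction in \citep{BGP16Arxiv} that $\lxh_{[\lambda_0,\lambda_1]}(v,\nu,x)=\tfrac12\bigl(1+\tanh(\sigma_\nu(v-\tfrac{\lambda_0+\lambda_1}{2}))\bigr)x$ with $\sigma_\nu>0$ and $\sigma_\nu\ge\tfrac{4}{\lambda_1-\lambda_0}$ as soon as $\nu\ge2$; thus $\phi=\tfrac12\bigl(1+\tanh\Xi\bigr)c$ where $\Xi(t)=\sigma\,(s(t)-\tfrac{\lambda_0+\lambda_1}{2})$ and $\sigma\ge\tfrac4{\lambda_1-\lambda_0}$. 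In particular $0<\phi<c$, $\phi$ depends on $x$ only through $\sigma$, and $\Xi(t)>0$ exactly when $s(t)>\tfrac{\lambda_0+\lambda_1}{2}$.

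\emph{Generability, periodicity, smallness off $I$.} The cosine $s$, the relevant polynomials, the generable polynomially bounded map $x\mapsto\ln(1+x^2)$, and $\lxh$ all lie in $\gpval$, which is closed under the finitely many sums, products and compositions used here (\lemref{lem:gpac_ext_class_stable}); hence $\plil_{I,\tau}\in\gpval$, and it is $\tau$-periodic in $t$ because $s$ is. If $t\bmod\tau\notin I$ then, since $b-a<\tau$ and $s$ is a cosine centred at $\tfrac{a+b}{2}$ with $s(a)=s(b)=\lambda_0$, we have $s(t)<\lambda_0$; feeding this into the ``$v\le\lambda_0$'' bound of \lemref{lem:lxh_hxl} with precision $\nu=\mu+2+\ln(1+x^2)$ gives $|\phi(t,\mu,x)|\le e^{-\nu}\le e^{-\mu-2}/(1+x^2)$, whence $|\plil_{I,\tau}(t,\mu,x)|=|x|\,|\phi(t,\mu,x)|\le\dfrac{|x|}{1+x^2}\,e^{-\mu-2}<e^{-\mu}$.

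\emph{The two-sided integral bound --- the main obstacle.} Fix $\alpha:I\to\Rp$ and $\beta:I\to\R$. The upper bound is immediate from $0<\phi<c$: $\int_a^b\phi(t,\alpha(t),\beta(t))\,dt<(b-a)c=\tfrac{b-a}{4}+2=:K$. For the lower bound, restrict to the subinterval $G=\{t\in I:s(t)>\tfrac{\lambda_0+\lambda_1}{2}\}$, centred at $\tfrac{a+b}{2}$; a short $\cos$-computation (using $b-a<\tau$ and $\lambda_0<\tfrac{\lambda_0+\lambda_1}{2}<\lambda_1$) shows $|G|\ge\tfrac23(b-a)$. On $G$ one has $\Xi(t)>0$, so $\phi>\tfrac12c$; and on $G_1=\{t\in I:s(t)>\lambda_1\}=\bigl(a+\tfrac{b-a}{4},\,b-\tfrac{b-a}{4}\bigr)$, of length exactly $\tfrac{b-a}{2}$, one has $\Xi(t)\ge\tfrac{4}{\lambda_1-\lambda_0}\bigl(\lambda_1-\tfrac{\lambda_0+\lambda_1}{2}\bigr)=2$, so $\phi\ge\tfrac{1+\tanh 2}{2}c$. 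Here the constant $2$ added to the precision, the sharpness $\tfrac2{\lambda_1-\lambda_0}$ inside $\lxh$, and the threshold placement $\lambda_1=s(a+\tfrac{b-a}{4})$ are precisely what force $\Xi\ge2$ on $G_1$ no matter what $\alpha,\beta$ are. Splitting $\int_G=\int_{G_1}+\int_{G\setminus G_1}$, using $|G\setminus G_1|\ge\tfrac16(b-a)$ and $c=\tfrac14+\tfrac2{b-a}$, we obtain
\[\int_a^b\phi(t,\alpha(t),\beta(t))\,dt\ \ge\ c\!\left(\tfrac{1+\tanh 2}{2}\cdot\tfrac{b-a}{2}+\tfrac12\cdot\tfrac{b-a}{6}\right)=\left(\tfrac{1+\tanh 2}{4}+\tfrac1{12}\right)\!\left(\tfrac{b-a}{4}+2\right)\ >\ 1,\]
uniformly in $\alpha,\beta$ (the right-hand side is $\ge 2\bigl(\tfrac{1+\tanh 2}{4}+\tfrac1{12}\bigr)>1$, the term $\tfrac2{b-a}$ in $c$ being what rescues arbitrarily short $I$). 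Arranging this last estimate so that the bound is genuinely $1$ --- rather than merely a positive constant --- for every admissible $I,\tau,\alpha,\beta$ is the only delicate point; the rest is routine bookkeeping with \lemref{lem:lxh_hxl} and the closure properties of \secref{sec:shannon}.
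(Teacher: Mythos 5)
Your construction is in fact the paper's own, not merely similar: your $s(t)=\cos\bigl(\tfrac{2\pi}{\tau}(t-\tfrac{a+b}{2})\bigr)$ is exactly the paper's $f(t)=\sin(\omega(t-t_1))$ with $t_1=\tfrac{a+b}{2}-\tfrac{\tau}{4}$, your window $[\lambda_0,\lambda_1]$ is its $J=[f(a),f(a+\tfrac{b-a}{4})]$, your $c$ is its $K=\tfrac14+\tfrac{2}{b-a}$, and the precision boost $\mu+2+\ln(1+x^2)$ is identical; periodicity, the off-$I$ smallness and the upper integral bound are also argued the same way. Where you diverge is the key lower bound $\int_a^b\phi\geqslant 1$: the paper stays at the level of the stated interface of \lemref{lem:lxh_hxl} --- on the middle half $[a+\tfrac{b-a}{4},b-\tfrac{b-a}{4}]$ one has $s(t)\geqslant\lambda_1$, so the ``$t\geqslant b$'' clause gives $\phi\geqslant c-e^{-\nu}\geqslant c-\tfrac14=\tfrac{2}{b-a}$, and since that interval has length $\tfrac{b-a}{2}$ the integral is at least $1$ (this is precisely why $c$ was chosen as $\tfrac14+\tfrac{2}{b-a}$). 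You instead open the box: you assume the explicit form $\lxh_{[\lambda_0,\lambda_1]}(v,\nu,x)=\tfrac12\bigl(1+\tanh(\sigma_\nu(v-\tfrac{\lambda_0+\lambda_1}{2}))\bigr)x$ with $\sigma_\nu\geqslant\tfrac{4}{\lambda_1-\lambda_0}$ for $\nu\geqslant2$, deduce $\phi\geqslant\tfrac{1+\tanh 2}{2}c$ on the middle half and $\phi>\tfrac{c}{2}$ on the larger set $G$, and close with a numerical estimate that additionally needs the asserted (true, but unproved in your text) inequality $|G|\geqslant\tfrac23(b-a)$, i.e.\ $\cos\tfrac{2\theta}{3}\geqslant\tfrac{\cos\theta+\cos(\theta/2)}{2}$ on $(0,\pi)$. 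This route does work, but it is heavier and less self-contained: the internal $\tanh$ form and the slope bound are not part of \lemref{lem:lxh_hxl} as stated in this paper (only of the construction in the cited reference), whereas the stated property alone already yields the bound, with the clean constant $1$, and with no measure estimate on $G$. Note also that your pointwise bound $\tfrac{1+\tanh 2}{2}c$ is actually weaker than the interface bound $c-e^{-2}$ when $c$ is large (short intervals $I$), which is exactly why you are forced to harvest the extra contribution from $G\setminus G_1$; the detour therefore buys nothing. Replacing the $G$/$G_1$ bookkeeping by the two-line interface argument makes your write-up coincide with the paper's proof.
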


\begin{definition}[``periodic low-integral-low'']
Let $t\in\R,\tau\in\Rp,\mu,x\in\R,I=[a,b]\subseteq[0,\tau]$ with $0<b-a<\tau$ and define:
\[\plil_{I,\tau}(t,\mu,x)=\lxh_J(f(t),\nu,K)x\]
where
\[\delta=b-a
\qquad \omega=\frac{2\pi}{\tau}
\qquad K=\frac{1}{4}+\frac{2}{\delta}
\qquad t_1=\frac{a+b}{2}-\frac{\tau}{4}\]
\[\nu=\mu+2+\ln(1+x^2)
\qquad f(t)=\sin(\omega(t-t_1))
\quad J=\left[f(a),f\left(a+\frac{\delta}{4}\right)\right]\]
\end{definition}

\begin{proof}[of \lemref{lem:plil}]
The $\tau$-periodicity is trivial. Using trigonometric identities, observe that
\[f(t)-f(a)=-2\sin\left(\omega\frac{t-b}{2}\right)\sin\left(\omega\frac{t-a}{2}\right)\]
Now it is easy to see that if $t\in[0,a]$ then
$\omega\frac{t-b}{2},\omega\frac{t-a}{2}\in[-\pi,0]$ thus $f(t)\leqslant f(a)$.
By the choice of $J$ and \lemref{lem:lxh_hxl}, we get that
$\lxh_J(f(t),\mu+2,K)\leqslant e^{-\nu}$.
Similarly if $t\in[b,\tau]$ then $\omega\frac{t-b}{2},\omega\frac{t-a}{2}\in[0,\pi]$
and we get the same result. We conclude the first part of the result using that
$|xe^{-\nu}|\leqslant e^{-\mu}$.

Let $\alpha:I\rightarrow\Rp,\beta:I\rightarrow\R$.
Let $a'=a+\frac{\delta}{4}$ and $b'=b-\frac{\delta}{4}$.
Since $\lxh>0$, we have $\int_a^b\plil_{I,\tau}(t,\alpha(t),\beta(t))dt\geqslant\int_{a'}^{b'}\plil_{I,\tau}(t,\alpha(t),\beta(t))dt$.
Again observe that
\[f(t)-f(a')=-2\sin\left(\omega\frac{t-b'}{2}\right)\sin\left(\omega\frac{t-a'}{2}\right)\]
Consequently, if $t\in[a',b']$ then $f(t)\geqslant f(a')$.
By the choice of $J$ and \lemref{lem:lxh_hxl}, we get that
$\lxh_J(f(t),\nu,K)\geqslant K-e^{-\nu}\geqslant K-\frac{1}{4}$ since $\nu\geqslant2$.
Finally $\int_a^b\plil_{I,\tau}(t,\alpha(t),\beta(t))dt\geqslant (b'-a')(K-\frac{1}{4})\geqslant1$
and $\int_a^b\plil_{I,\tau}(t,\alpha(t),\beta(t))dt\leqslant (b-a)K$ by \lemref{lem:lxh_hxl}.


Apply \lemref{lem:gpac_ext_class_stable} multiple times to get that $\plil_{I,\tau}\in\gval{\poly}$.
\end{proof}

\begin{lemma}[Sample and hold]\label{lem:sample}
There is a family of functions  $\sample_{I,\tau}(t,\mu,x,g) \in \gpval$, where
$t\in\R,\mu,\tau\in\Rp,x,g\in\R,I=[a,b]\subsetneq[0,\tau]$, with the following property: 
let $\tau\in\Rp$, $I=[a,b]\subsetneq[0,\tau]$, $y:\Rp\rightarrow\R$,
$y_0\in\R$, $x,e\in C^0(\Rp,\R)$ and
$\mu:\Rp\rightarrow\Rp$ be an increasing function. Suppose that for all $t\in\Rp$:
\[y(0)=y_0\qquad y'(t)=\sample_{I,\tau}(t,\mu(t),y(t),x(t))+e(t)\]
Then:
\[|y(t)|\leqslant2+\smashoperator{\int_{\max(0,t-\tau-|I|)}^t}|e(u)|du+\max\left(|y(0)|\indicator{[0,b]}(t),\pastsup{\tau+|I|}|x|(t)\right)\]
Furthermore:
\begin{itemize}
\item if $t\notin I\pmod{\tau}$ then $|y'(t)|\leqslant e^{-\mu(t)}+|e(t)|$
\item for $n\in\N$, if there exist $\bar{x}\in\R$ and $\nu,\nu'\in\Rp$ such that
$|\bar{x}-x(t)|\leqslant e^{-\nu}$ and $\mu(t)\geqslant\nu'$ for
all $t\in n\tau+I$ then
\[|y(n\tau+b)-\bar{x}|\leqslant\int_{n\tau+I}|e(u)|du+e^{-\nu}+e^{-\nu'}.\]
\item for $n\in\N$, if there exist $\check{x},\hat{x}\in\R$ and $\nu\in\Rp$ such that $x(t)\in[\check{x},\hat{x}]$ and $\mu(t)\geqslant\nu$ for
all $t\in n\tau+I$ then
\[y(n\tau+b)\in[\check{x}-\varepsilon,\hat{x}+\varepsilon]\]
where $\varepsilon=2e^{-\nu}+\int_{n\tau+I}|e(u)|du$.
\item for any $J=[c,d]\subseteq\Rp$, if there exist $\nu,\nu'\in\Rp$ and $\bar{x}\in\R$ such that
$\mu(t)\geqslant\nu'$ for all $t\in J$ and $|x(t)-\bar{x}|\leqslant e^{-\nu}$ for all $t\in J\cap(n\tau+I)$ for some $n\in\N$,
then
\[|y(t)-\bar{x}|\leqslant e^{-\nu}+e^{-\nu'}+\int_{t-\tau-|I|}^t|e(u)|du\] for all $t\in[c+\tau+|I|,d]$.
\item if there exists $\Omega:\Rp\rightarrow\Rp$ such that for any $J=[a,b]$ and $\bar{x}\in\R$ such that for all
$\nu\in\Rp$, $n\in\N$ and $t\in(n\tau+I)\cap[a+\Omega(\nu),b]$ we have $|\bar{x}-x(t)|\leqslant e^{-\nu}$, then
\[|y(t)-\bar{x}|\leqslant e^{-\nu}\] for all $t\in[a+\Omega^*(\nu),b]$ where
\[\Omega^*(\nu)=\max(\Omega(\nu+\ln3),\mu^{-1}(\nu+\ln3))+\tau+|I|.\]
\end{itemize}
\end{lemma}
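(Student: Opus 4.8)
Following the ``sample and hold'' intuition above, I would build $\sample_{I,\tau}$ from the periodic window function $\plil_{I,\tau}$ of \lemref{lem:plil} together with a polynomial \emph{reach} term: on the active window $I\pmod\tau$ the dynamics pulls $y$ towards the current input $x$, and off the window $y$ is essentially frozen. Concretely I would take
\[
\sample_{I,\tau}(t,\mu,y,x)=\plil_{I,\tau}\bigl(t,\mu,\reach_\mu(y,x)\bigr),
\qquad
\reach_\mu(y,x)=(2\mu+1)\bigl(1+(x-y)^2\bigr)(x-y).
\]
Here $\reach_\mu$ is a polynomial in $(\mu,y,x)$, hence generable and polynomially bounded; the flow $y'=\reach_\mu(y,x)$ moves $y$ strictly monotonically towards $x$ (no overshoot); its cubic term drives $|y-x|$ below $1$ after a bounded amount of integrated rate from \emph{any} starting point; and its linear term then makes $|y-x|$ decay exponentially quickly enough that one unit of integrated rate already brings $|y-x|$ below $e^{-\mu}$ irrespective of $y(0)$. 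Since $\plil_{I,\tau}\in\gpval$ by \lemref{lem:plil} and generable functions are closed under composition (\lemref{lem:gpac_ext_class_stable}), $\sample_{I,\tau}\in\gpval$. I would record three facts for later: $\sample_{I,\tau}(t,\mu,y,x)=\phi(t,\mu,y,x)\reach_\mu(y,x)$ with $\phi\geqslant0$; it is $\tau$-periodic in $t$; and by the ``small outside $I$'' clause of \lemref{lem:plil} one has $|\sample_{I,\tau}(t,\mu,y,x)|<e^{-\mu}$ whenever $t\notin I\pmod\tau$, from which $|y'(t)|\leqslant e^{-\mu(t)}+|e(t)|$ for such $t$ — the first listed ``furthermore'' — follows at once.

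\textbf{The single-window estimate (the crux).} The heart of the argument is the behaviour on one window $W_n=n\tau+I=[n\tau+a,n\tau+b]$, where $y'(t)=\phi(t)\reach_{\mu(t)}(y(t),x(t))+e(t)$ with $\phi\geqslant0$ and $\int_{W_n}\phi\geqslant1$ (the integral clause of \lemref{lem:plil}). First I would peel off the perturbation: since $\reach_\mu(\cdot,x)$ is strictly decreasing, $\frac{d}{dt}|y-\tilde y|\leqslant|e|$ where $\tilde y$ solves the same system with $e\equiv0$, so $|y(n\tau+b)-\tilde y(n\tau+b)|\leqslant\int_{W_n}|e|$. For $\tilde y$, monotonicity towards $x$ together with the ``reach in one unit of rate'' property give, for any reference constant $\xi$, that $\tilde y(n\tau+b)$ lies in the convex hull of $x(W_n)$ enlarged by $e^{-\min_{W_n}\mu}$, hence $|\tilde y(n\tau+b)-\xi|\leqslant\sup_{W_n}|x-\xi|+e^{-\min_{W_n}\mu}$ — crucially with no surviving dependence on $\tilde y(n\tau+a)$. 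Taking $\xi=\bar x$ under $|x-\bar x|\leqslant e^{-\nu}$, $\mu\geqslant\nu'$ on $W_n$ yields $|y(n\tau+b)-\bar x|\leqslant\int_{W_n}|e|+e^{-\nu}+e^{-\nu'}$ (the second ``furthermore''); the same computation with $x(t)\in[\check x,\hat x]$ and $\xi$ an endpoint of $[\check x,\hat x]$ gives $y(n\tau+b)\in[\check x-\varepsilon,\hat x+\varepsilon]$ with $\varepsilon=2e^{-\nu}+\int_{W_n}|e|$ (the third). I expect this window estimate to be the main obstacle: one must reconcile \emph{memorylessness} of the contraction with \emph{no overshoot}, handle the fact that the target $x$ is itself moving, and squeeze the error budget into exactly the stated form, which is where the precise shape of $\reach_\mu$ has to be played against the constants $K,|I|$ produced by \lemref{lem:plil}.

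\textbf{Global norm bound and propagation.} The remaining items are bookkeeping around the window structure. For the displayed bound on $|y(t)|$: if $t\leqslant b$ no window has been completed, so $y$ departs from $y(0)$ only through the hold terms and at most a partial first window, giving $|y(t)|\leqslant 2+\int_{\max(0,t-\tau-|I|)}^t|e(u)|\,du+|y(0)|\indicator{[0,b]}(t)$; if $t>b$, I would let $W_m$ be the window most recently entered (so $t-(m\tau+a)<\tau$) and apply the single-window estimate with $\xi=0$ on $W_m$, which bounds $|y|$ there by $\sup_{u\in[t-\tau-|I|,t]}|x(u)|$ plus a bounded additive constant plus $\int_{\max(0,t-\tau-|I|)}^t|e(u)|\,du$, while the hold following $W_m$ only adds a further $\int|e|$ and an $e^{-\mu}$-sized amount, absorbed into the constant $2$. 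The fourth ``furthermore'' then follows by applying the second one to a window $W_n\subseteq J$ lying in the good part of $J$ and propagating forward through the intervening holds using $|y'|\leqslant e^{-\mu}+|e|$: after a further $\tau+|I|$ of time the distance to $\bar x$ is still at most $e^{-\nu}+e^{-\nu'}+\int_{t-\tau-|I|}^t|e(u)|\,du$. Finally, for the last ``furthermore'' I would take the slack $\ln3$: by time $a+\Omega(\nu+\ln3)$ there is a complete good window, and since $\mu$ is increasing there is one on which $\mu\geqslant\nu+\ln3$ once $t\geqslant\mu^{-1}(\nu+\ln3)$; invoking the fourth ``furthermore'' (with $e\equiv0$, matching that bullet) with these parameters and folding the resulting error contributions into $e^{-\nu}$ via $3e^{-(\nu+\ln3)}=e^{-\nu}$ gives $|y(t)-\bar x|\leqslant e^{-\nu}$ for all $t\geqslant a+\Omega^*(\nu)$ with $\Omega^*(\nu)=\max(\Omega(\nu+\ln3),\mu^{-1}(\nu+\ln3))+\tau+|I|$, as claimed.
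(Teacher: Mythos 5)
Your construction and plan mirror the paper's proof: the paper also defines $\sample_{I,\tau}$ as $\plil_{I,\tau}$ applied to a cubic ``reach'' term $\reach(\cdot,y,x)$ with $X_3(u)=u+u^3$, uses the integral property $\int_I\phi\geqslant 1$ of \lemref{lem:plil} on each window together with the memoryless contraction of the reach ODE (\lemref{lem:reach_ode}/\lemref{lem:reach}), the ``small outside $I$'' property for the hold phases, and the same $\ln 3$ slack for the last bullet. So the architecture is the right one.

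There is, however, a concrete quantitative gap in your parametrization, and it hits exactly the steps you wave through. You feed the raw precision $\mu$ into $\plil_{I,\tau}$ and use the gain $2\mu+1$ in the reach term, whereas the paper uses $\check{\mu}=\frac{\mu+1}{\min(1,|I|)}$ inside $\reach$ and $\hat{\mu}=\mu+\max(0,\ln(\tau-|I|))$ inside $\plil$. Two consequences. First, off the window the leakage is only bounded by $e^{-\mu(t)}$ pointwise, so over a hold phase of length $\tau-|I|$ it integrates to $(\tau-|I|)e^{-\mu}$, not $e^{-\mu}$; your claims that the hold ``only adds an $e^{-\mu}$-sized amount, absorbed into the constant $2$'' and that after a further $\tau+|I|$ the error is still $e^{-\nu}+e^{-\nu'}+\int|e|$ are false for your $\sample$ as soon as $\tau-|I|>1$ (note the later application in \thref{th:strong_to_unaware} has $\tau=4$, $|I|=1$, and during a hold $x(t)$ need not be near $\bar{x}$, so no sign argument saves the fourth and fifth bullets). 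Second, with gain $2\mu+1$ the integrated reach rate over a window is only about $\mu+\frac12$ (and \lemref{lem:plil} only guarantees $\int_I\phi\geqslant1$, independently of $|I|$), so the hypothesis $\int\phi\geqslant1$ of \lemref{lem:reach} fails for small $\mu$ and you would have to redo the ODE estimate by hand; the paper's $\check{\mu}$ gives $\int\phi\check{\mu}\geqslant1+\mu$ cleanly. Both defects are repairable precisely by the paper's rescaling of the precision parameters, but as stated your construction does not satisfy the lemma's bounds uniformly in $(I,\tau)$, which is what the downstream proof needs.
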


\begin{definition}[Sample and hold]
Let $t\in\R,\mu,\tau\in\Rp,x,g\in\R,I=[a,b]\subsetneq[0,\tau]$ and define:
\[\sample_{I,\tau}(t,\mu,x,g)=\plil_{I,\tau}(t,\hat{\mu},\reach(\check{\mu},x,g))\]
where
\[\check{\mu}=\frac{\mu+1}{\min(1,|I|)}\qquad\hat{\mu}=\mu+\max(0,\ln(\tau-|I|))\]
\end{definition}

\begin{proof}
Let $n\in\N$. Apply
\lemref{lem:plil}, \lemref{lem:reach} and Remark \ref{rem:reach_mult} to get that:
\begin{itemize}
\item For all $t\in I_n=[n\tau+a,n\tau+b]$:
\[y'(t)=\phi(t)\reach(\check{\mu}(t),y(t),x(t))+e(t)\] where $\int_{I_n}\phi\geqslant1$.
Since $|x(t)-0|\leqslant\sup_{u\in I_n}|x(u)|$ and
\[\int_{I_n}\phi\check{\mu}=\int_{I_n}\phi\frac{1+\mu}{|I|}\geqslant1\]
then
\begin{align*}
|y(n\tau+b)-0|
&\leqslant \sup_{I_n}|x(u)|+\int_{I_n}|e(u)|du+e^{-1}\\
&\leqslant1+\sup_{u\in I_n}|x(u)|+\int_{I_n}|e(u)|du.
\end{align*}
\item For all $t\in [n\tau+b,(n+1)\tau+a]$:
\[|y'(t)|\leqslant|e(t)|+e^{-\hat{\mu}(t)}\leqslant|e(t)|+e^{-\ln(\tau-|I|)}\]
thus
\begin{align*}
|y(t)-0|
&\leqslant\int_{n\tau+b}^{t}|e(u)|du+(\tau-|I|)e^{-\ln(\tau-|I|)}\\
&\qquad+1+\sup_{u\in I_n}|x(u)|+\int_{I_n}|e(u)|du\\
&\leqslant2+\sup_{u\in I_n}|x(u)|+\int_{n\tau+a}^{t}|e(u)|du.
\end{align*}
\item For all $t\in I_{n+1}$:
\[y'(t)=\reach(\phi(t)\check{\mu}(t),y(t),x(t))\] where $\int_{I_n}\phi\geqslant1$.
Since $|x(t)-0|\leqslant\sup_{u\in I_{n+1}}|x(u)|$
then
\begin{align*}
|y(t)-0|
&\leqslant\max\left(\sup_{u\in[(n+1)\tau+a,t]}|x(u)|,|y((n+1)\tau+a)-0|\right)\\
&\qquad+\int_{(n+1)\tau+a}^t|e|\\
&\leqslant2+\sup_{u\in[n\tau+a,t]}|x(u)|+\int_{n\tau+a}^{t}|e(u)|du.
\end{align*}
\end{itemize}
Note that this analysis is a bit subtle: the first point \emph{does not} give a bound on $|y(t)|$
over $I_n$, it only gives a bound on $|y(n\tau+b)|$. On the contrary the two other points give
bounds on $|y(t)|$ over $[n\tau+b,(n+1)\tau+b]$ which cover the whole period so
by correctly putting everything together, we get that for all
$|y(t)|\leqslant2+\sup_{u\in[t,t-\tau-|I|]\cap\Rp}$ $|x(u)|+\int_{t-\tau-|I|}^t|e(u)|du$
for all $t\geqslant b$. The case of the initial segment is similar in aspect but uses the
other result from \lemref{lem:reach}:
\begin{itemize}
\item For all $t\in [0,a]$:
\[|y'(t)|\leqslant|e(t)|+e^{-\hat{\mu}(t)}\leqslant|e(t)|+e^{-\ln(\tau-|I|)}\]
thus
\[|y(t)|\leqslant\int_0^t|e(u)|du+ae^{-\ln(\tau-|I|)}+|y_0|\leqslant\int_0^t|e(u)|du+1+|y_0|.\]
\item For all $t\in [a,b]$:
\[y'(t)=\reach(\phi(t)\check{\mu}(t),y(t),x(t))+e(t)\] where $\int_{I_n}\phi\geqslant1$.
Since $|x(t)-0|\leqslant\sup_{u\in[a,t]}|x(u)|$
then
\begin{align*}
|y(t)-0|&\leqslant\max(\sup_{u\in[a,t]}|x(u)|,|y(a)-0|)+\int_a^t|e(u)|du\\
&\leqslant1+\int_0^t|e(u)|du+\max(|y_0|,\sup_{u\in[a,t]}|x(u)|).
\end{align*}
\end{itemize}
Finally, we get that for all $t\in\Rp$:
\[|y(t)|\leqslant2+\smashoperator{\int_{t-\tau-|I|}^t}|e(u)|du+\max\left(|y(0)|\indicator{[0,b]}(t),\pastsup{\tau+|I|}|x|(t)\right)\]

The first extra statement is a trivial consequence of \lemref{lem:plil} and the
fact that $\check{\mu}(t)\geqslant\mu(t)$.

The second extra statement has mostly been proved already and uses \lemref{lem:plil} and \lemref{lem:reach} again.
Let $n\in\N$, assume there exist $\bar{x}\in\R$ and $\nu\in\Rp$ such as described.
For all $t\in I_n=[n\tau+a,n\tau+b]$ we have
\[y'(t)=\phi(t)\reach(\check{\mu}(t),y(t),x(t))+e(t)\] where $\int_{I_n}\phi\geqslant1$.
Since $|x(t)-\bar{x}|\leqslant e^{-\nu}$ and $\int_{I_n}\phi\check{\mu}=\int_{I_n}\phi\frac{1+\mu}{|I|}\geqslant\nu'$
then
\[|y(n\tau+b)-\bar{x}|\leqslant e^{-\nu}+\int_{I_n}|e(u)|du+e^{-\nu'}.\]

The third statement is a consequence of the previous one: since $n\tau+I$ is a compact set
and $x$ is a continuous function, it admits a maximum over $n\tau+I$. Apply the
previous statement to $\frac{\bar{x}+\sup_{n\tau+I}x}{2}\geqslant\bar{x}$ to conclude.


The last extra statement requires more work.
Let $\nu\geqslant0$ and $n\in\N$ such that $n\tau+a\geqslant\Omega(\nu)$. Apply
\lemref{lem:plil}, Remark \ref{rem:reach_mult} and \lemref{lem:reach} to get that:
\begin{itemize}
\item For all $t\in I_n$:
\[y'(t)=\phi(t)\reach(\check{\mu}(t),y(t),x(t))\]
where $\int_{I_n}\phi\geqslant1$.
Since $t\geqslant n\tau+a\geqslant\Omega(\nu)$ and $t\in I_n$ then $|x(t)-\bar{x}|\leqslant e^{-\nu}$.
And since
\[\int_{I_n}\phi\check{\mu}=\int_{I_n}\phi\frac{1+\mu}{|I|}\geqslant1+\mu(n\tau+a)\]
then
\[|y(n\tau+b)-\bar{x}|\leqslant e^{-\nu}+e^{-\mu(n\tau+a)}.\]
\item For all $t\in [n\tau+b,(n+1)\tau+a]$:
\[|y'(t)|\leqslant e^{-\hat{\mu}(t)}\leqslant e^{-\hat{\mu}(n\tau+a)}\]
thus
\begin{align*}
|y(t)-\bar{x}|
&\leqslant(\tau-|I|)e^{-\hat{\mu}(n\tau+a)}+e^{-\nu}+e^{-\mu(n\tau+a)}\\
&\leqslant e^{-\nu}+2e^{-\mu(n\tau+a)}.
\end{align*}
\item For all $t\in I_{n+1}$:
\[y'(t)=\phi(t)\reach(\check{\mu}(t),y(t),x(t))\]
where $\int_{I_n}\phi\geqslant1$.
Since $t\geqslant n\tau+a\geqslant\Omega(\nu)$ and $t\in I_n$ then $|x(t)-\bar{x}|\leqslant e^{-\nu}$.
Thus
\[|y(t)-\bar{x}|\leqslant\max(e^{-\nu},|y((n+1)\tau+a)-\bar{x}|)\leqslant e^{-\nu}+2e^{-\mu(n\tau+a)}.\]
\end{itemize}
Finally, we get that
\[|y(t)-\bar{x}|\leqslant e^{-\nu}+2e^{-\mu(n\tau+a)}\]
for all $t\in[n\tau+b,(n+1)\tau+b]$.

Define
\[\Omega^*(\nu)=\max(\Omega(\nu+\ln3),\mu^{-1}(\nu+\ln3))+\tau+|I|.\]
Let $\nu\geqslant0$
and $t\geqslant \Omega^*(\nu)$. Let $n\in\N$ such that $t\in[n\tau+b,(n+1)\tau+b]$.
Then
\begin{align*}
n\tau+a
&=(n+1)\tau+b-\tau-|I|\\
&\geqslant t-\tau-|I|\\
&\geqslant \Omega^*(\nu)-\tau-|I|\\
&\geqslant\Omega(\nu+\ln3).
\end{align*}
By the previous reasoning, we get that $|y(t)-\bar{x}|\leqslant e^{-\nu}+2e^{-\mu(n\tau+a)}$.
And since
\begin{align*}
n\tau+a
&\geqslant\Omega^*(\nu)-\tau-|I|\\
&\geqslant\mu^{-1}(\nu+\ln3)
\end{align*}
then $\mu(n\tau+a)\geqslant\nu+\ln3$.
Thus $|y(t)-\bar{x}|\leqslant3e^{-\nu}\leqslant e^{-\nu}$.
\end{proof}

\subsection{The proof}

We then get to the proof of \thref{th:strong_to_unaware}

\begin{proof}
Let $(f:\subseteq\R^n\rightarrow\R^m)\in\gsc{\Upsilon}{\Omega}{\Theta}$ where
$\Upsilon$, $\Omega$ $\Theta$ are polynomials which we assume, without loss of generality,
to be increasing functions of theirs inputs. Apply \defref{def:gsc} to get
$d$, $h$ and $g$.

Let $e=1+d+m$, $x\in C^0(\Rp,\R^n)$, $\mu\in C^0(\Rp,\Rp)$, $(\nu_0,y_0,z_0)\in\R^e$,
$(e_\nu,e_y,e_z)\in C^0(\Rp,\R^e)$
and consider the following system:
\[
\left\{\begin{array}{@{}r@{}l}
\nu(0)&=\nu_0\\y(0)&=y_0\\z(0)&=z_0
\end{array}\right.\quad
\left\{\begin{array}{@{}r@{}l}
\nu'(t)&=\sample_{[0,1],4}(t,\mu^*(t),\nu(t),\mu(t)+\ln\Delta+7)+e_\nu(t)\\
y'(t)&=\sample_{[1,2],4}(t,\mu^*(t),y(t),g(x(t),\nu(t)))\\
        &\hspace{1em}+\plil_{[2,3],4}(t,\mu^*(t),A(t)h(y(t)))+e_y(t)\\
z'(t)&=\sample_{[3,4],4}(t,\mu^*(t),z(t),y_{1..m}(t))+e_z(t)
\end{array}\right.
\]
where
\[\Delta=5\qquad\Delta'=\ln\Delta+10\]
\[\mu^*(t)=\mho^*(1+\norm_{\infty,1}(x(t)),\nu(t)+4)\]
\[A(t)=1+\Omega(1+\norm_{\infty,1}(x(t)),\nu(t))\]
\[\Lambda^*(\alpha,\mu)=\Theta^*(\alpha,\mu)=\mho^*(\alpha,\mu+\Delta')\]
\[\mho^*(\alpha,\mu)=\mu+\ln\Delta+\Theta(\alpha,\mu)+\ln q(\alpha+\mu)\]

Let $I=[a,b]$ and assume there exist $\bar{x}\in\dom{f}$ and $\check{\mu},\hat{\mu}\in\Rp$ such that
for all $t\in I$, $\mu(t)\in[\check{\mu},\hat{\mu}]$, $\infnorm{x(t)-\bar{x}}\leqslant e^{-\Lambda^*(\infnorm{\bar{x}},\hat{\mu})}$
and $\int_a^b\infnorm{e(u)}du\leqslant e^{-\Theta^*(\infnorm{\bar{x}},\hat{\mu})}$.
Apply \thref{prop:generable_mod_cont} to $g$ to get $q\in\K[\R]$, without loss of generality
we can assume that $q$ is an increasing function and $q\geqslant1$.
We will use \lemref{lem:norm} to get that $\norm_{\infty,1}(x(t))+1\geqslant\infnorm{\bar{x}}$
because $\infnorm{x(t)-\bar{x}}\leqslant1$. Also note that $\mu^*,\Theta^*,\Lambda^*$
are increasing functions of their arguments.
Let $n\in\N$ such that $[4n,4n+4]\subseteq I$ and $t\in[4n,4n+4]$. We will first analyse the variable $\nu$,
note that the analysis is extremely rough to simplify the proof.
\begin{itemize}
\item \textbf{if $t\in[4n,4n+1]$ then} $\mu^*(t)\geqslant0$ so apply
  \lemref{lem:sample} to get that
\[\nu(4n+1)\in[\check{\mu}+\ln\Delta+7-\varepsilon,\hat{\mu}+\ln\Delta+7+\varepsilon]\]
where
\[\varepsilon\leqslant2e^{-0}+\int_{4n}^{4n+1}|e_\nu(u)|du\leqslant3\]
because $\int_a^b\infnorm{e(t)}dt\leqslant1$. Define $\bar{\nu}=\nu(4n+1)$, then
\[\bar{\nu}\in[\check{\mu}+\ln\Delta+4,\hat{\mu}+\underbrace{\ln\Delta+10}_{=\Delta'}].\]
\item \textbf{if $t\in[4n+1,4n+4]$ then} $\mu^*(t)\geqslant0$ so apply
  \lemref{lem:sample} to get that
\[|\nu'(t)|\leqslant e^{-0}+\int_{4n+1}^t|e_\nu(u)|du\]
and thus
\[|\nu(t)-\bar{\nu}|\leqslant(t-4n-1)+\int_{4n+1}^t\infnorm{e(u)}du\leqslant4\]
because $\int_a^b\infnorm{e(t)}dt\leqslant1$. In other words
\[\nu(t)\in[\bar{\nu}-4,\bar{\nu}+4].\]
\end{itemize}
Furthermore for $t\in[4n+1,4n+4]$ we have:
\[
\mu^*(t)\geqslant\Theta^*(1+\norm_{\infty,1}(x(t)),\nu(t)+4)
    \geqslant\mho^*(\infnorm{\bar{x}},\bar{\nu})
\]
It will also be useful to note that:
\begin{align*}
\Lambda^*(\infnorm{\bar{x}},\hat{\mu})
    =\Theta^*(\infnorm{\bar{x}},\hat{\mu})
    &\geqslant\mho^*(\infnorm{\bar{x}},\hat{\mu}+\Delta')\\
    &\geqslant\mho^*(\infnorm{\bar{x}},\bar{\nu})
\end{align*}
We can now analyse $y$ using this property:
\begin{itemize}
\item \textbf{if $t\in[4n+1,4n+2]$ then}
\[|\nu'(t)|\leqslant e^{-\mu^*(t)}+|e_\nu(t)|\]
thus
\[|\nu(t)-\bar{\nu}|\leqslant e^{-\mho^*(\infnorm{\bar{x}},\bar{\nu})}+\int_{4n+1}^{4n+2}|e_\nu(u)|du.\]
Furthermore
\[\sup_{[4n+1,4n+2]}\infnorm{x}\leqslant\infnorm{\bar{x}}+1,\]
thus:
\begin{align*}
& \infnorm{g(\bar{x},\bar{\nu})-g(x(t),\nu(t))} \\
&\qquad \qquad \leqslant\max(|\nu(t)-\bar{\nu}|,\infnorm{x(t)-\bar{x}})q(\max(\infnorm{\bar{x}},|\bar{\nu}|))\\
&\qquad \qquad \leqslant\max\left(e^{-\Theta^*(\infnorm{\bar{x}},\hat{\mu})}+
    e^{-\mho^*(\infnorm{\bar{x}},\bar{\nu})},e^{-\Lambda^*(\infnorm{\bar{x}},\hat{\mu})}\right)q(\infnorm{\bar{x}}+\bar{\nu})\\
&\qquad \qquad  \leqslant 2e^{-\Theta(\infnorm{\bar{x}},\bar{\nu})-\ln\Delta}
\end{align*}
Also note that
\[\infnorm{y'(t)-\sample_{[1,2],4}(t,\mu^*(t),y(t),g(x(t),\nu(t)))}\leqslant e^{-\mu^*(t)}\]
by \lemref{lem:plil}. So we can apply \lemref{lem:sample} to get that
\begin{align*}
\infnorm{y(4n+2)-g(\bar{x},\bar{\nu})}
&\leqslant2e^{-\Theta(\infnorm{\bar{x}},\bar{\nu})-\ln\Delta}+e^{-\mho^*(\infnorm{\bar{x}},\bar{\nu})}\\
&\qquad+\int_{4n+1}^{4n+2}\infnorm{e(u)}du\\
&\leqslant 4e^{-\Theta(\infnorm{\bar{x}},\bar{\nu})-\ln\Delta}.
\end{align*}
\item \textbf{if $t\in[4n+2,4n+3]$ then} apply Lemmas \ref{lem:sample} and \ref{lem:plil}
to get $\phi$ such that $\int_{4n+2}^{4n+3}\phi(u)du\geqslant1$ and
\[\infnorm{y'(t)-\phi(t)A(t)h(y(t))}\leqslant e^{-\mu^*(t)}+\infnorm{e_y(t)}.\]
Define $\psi(t)=\int_{4n+2}^t\phi(u)A(u)du$ then
$\psi(4n+3)\geqslant\Omega(\infnorm{\bar{x}},\bar{\nu})$ since
$A(u)\geqslant\Omega(\infnorm{\bar{x}},\bar{\nu})$ for $u\in[4n+2,4n+3]$.
Apply \lemref{lem:perturbed_time_scale_gpac} over $[4n+2,4n+3]$ to get that
$y(t)=w(\psi(t))$ where $w$ satisfies
\[w(0)=y(4n+2),\qquad w'(\xi)=h(w(\xi))+\tilde{e}(\xi)\]
where $\tilde{e}\in C^0(\Rp,\R^d)$ satisfies
\[\int_{0}^{\psi(t)}\infnorm{\tilde{e}(\xi)}d\xi=\int_{4n+2}^t\infnorm{e_y(u)}du
\leqslant e^{-\Theta^*(\infnorm{\bar{x}},\hat{\mu})}\leqslant e^{-\Theta(\infnorm{\bar{x}},\bar{\nu})-\ln\Delta}.\]
Hence $\infnorm{w(0)-g(\bar{x},\bar{\nu})}\leqslant4e^{-\Theta(\infnorm{\bar{x}},\bar{\nu})-\ln\Delta}$
from the result above. In other words:
\[w(0)=g(\bar{x},\bar{\nu})+\tilde{e}_0,\qquad
w'(t)=g(w(t))+\tilde{e}(t)\]
where
\[\infnorm{\tilde{e}_0}+\int_0^{\psi(t)}\infnorm{e(u)}du\leqslant 5e^{-\Theta(\infnorm{\bar{x}},\bar{\nu})-\ln\Delta}
\leqslant e^{-\Theta(\infnorm{\bar{x}},\bar{\nu})}\]
because $\Delta\geqslant5$.
Apply \defref{def:gsc} to get that
\[\infnorm{w_{1..m}(\psi(4n+3))-f(\bar{x})}\leqslant e^{-\bar{\nu}}\]
since $\psi(4n+3)\geqslant\Omega(\infnorm{\bar{x}},\bar{\nu})$.
\item \textbf{if $t\in[4n+3,4n+4]$ then} $\infnorm{y'(t)}\leqslant e^{-\mu^*(t)}+\infnorm{e_y(t)}$
thus
\begin{align*}
\infnorm{y(t)-y(4n+3)}
&\leqslant e^{-\mho^*(\infnorm{\bar{x}},\bar{\nu})}+\int_{4n+3}^t\infnorm{e_y(u)}du\\
&\leqslant 2e^{-\bar{\nu}}
\end{align*}
so $\infnorm{y_{1..m}(t)-f(\bar{x})}\leqslant3e^{-\bar{\nu}}$.
\end{itemize}
Note that the above reasoning is also true for the last segment $[4n,b]\subseteq I$
in which case the result only applies up to time $b$ of course. In other words,
the results apply as long as $t\in[4n,4+4]\cap I$ and $4n\geqslant a$.
From this we conclude that if $t\in[a+4,b]\cap[4n+3,4n+3]$ for some $n\in\N$ then
$\infnorm{y_{1..m}(t)-f(\bar{x})}\leqslant3e^{-\bar{\nu}}$.
Apply \lemref{lem:sample} to get, using that $\bar{\nu}\geqslant\check{\mu}+\ln\Delta$ and $\Delta\geqslant5$,
that for all $t\in[a+5,b]$:
\begin{align*}
\infnorm{z(t)-f(\bar{x})}
    &\leqslant 3e^{-\bar{\nu}}+e^{-\mho^*(\infnorm{\bar{x}},\bar{\nu})}+\int_{t-5}^t\infnorm{e(u)}du
    \leqslant 5e^{-\bar{\nu}}\\
    &\leqslant e^{-\check{\mu}}
\end{align*}

To complete the proof, we must also analyse the norm of the system. As a shorthand, we introduce
the following notation:
\[\operatorname{int}_\delta^+\alpha(t)=\int_{\max(0,t-\delta)}^t\alpha(u)du\]
Apply \lemref{lem:sample}
to get that:
\begin{align*}
|\nu(t)|&\leqslant 2+\smashoperator{\int_{\max(0,t-5)}^t}|e_\nu(u)|du
        +\max\left(|\nu_0|\indicator{[0,4]}(t),\pastsup{5}{|\mu+\ln\Delta+7|}(t)\right)\\
        &\leqslant\poly\left(|\nu_0|\indicator{[0,5]}(t)+\operatorname{int}_{5}^+|e_\nu|(t),\pastsup{5}\mu(t)\right)
\end{align*}
The analysis of $y$ is a bit more painful, as it uses both results about the
sampling function and the strongly-robust system we are simulating. Let $n\in\N$,
and $t\in[4n,4n+4]$:
\begin{itemize}
\item \textbf{if $t\in[4n,4n+1]$ then} apply Lemmas \ref{lem:sample} and \ref{lem:plil}
to get, using that $\mu(t)\geqslant0$, that $\infnorm{y'(t)}\leqslant2+\infnorm{e(t)}$ and thus
$\infnorm{y(t)-y(4n)}\leqslant2+\int_{4n}^t\infnorm{e(u)}du$.
\item \textbf{if $t\in[4n+1,4n+2]$ then} using the result on $\nu$, we have
\begin{multline}\infnorm{g(x(t),\nu(t))}\leqslant\sup_{[4n+1,t]}\poly(\infnorm{x},\nu) \\
\leqslant
\poly\left(|\nu_0|\indicator{[0,5]}(t)+\operatorname{int}_{6}^+\infnorm{e}(t),\pastsup{6}\mu(t),\pastsup{1}\infnorm{x}(t)\right).
\end{multline}
Apply Lemmas \ref{lem:sample} and \ref{lem:plil}
to get, using that $\mu(t)\geqslant0$ and the result on $\nu$, that:
\begin{multline}
\infnorm{y(4n+2)} \leqslant\sup_{[4n+1,4n+2]}\infnorm{g(x,\nu)}+2+\int_{4n+1}^{4n+2}\infnorm{e(u)}du
    \\ \leqslant\poly\left(|\nu_0|\indicator{[0,5]}(4n+2)+\operatorname{int}_{6}^+\infnorm{e}(4n+2),\pastsup{6}\mu(4n+2),\right.\\
               \left.          \pastsup{1}\infnorm{x}(4n+2)\right) \\
\end{multline}
and also that:
\begin{align*}
\infnorm{y(t)}&\leqslant\max\left(\sup_{[4n+1,t]}\infnorm{g(x,\nu)}+2,\infnorm{y(4n+1)}\right)+\int_{4n+1}^t\infnorm{e(u)}du\\
    &\leqslant\poly\left(|\nu_0|\indicator{[0,5]}(t)+\operatorname{int}_{6}^+\infnorm{e}(t),\pastsup{6}\mu(t),
                         \pastsup{1}\infnorm{x}(t),\infnorm{y(4n)}\right)
\end{align*}
\item \textbf{if $t\in[4n+2,4n+3]$ then} apply \lemref{lem:sample},
  Lemmas \ref{lem:plil},
\ref{lem:perturbed_time_scale_gpac} and \ref{def:gsc} to get that
\[\infnorm{y(t)}\leqslant \Upsilon(0,0,\hat{e}(\hat{A}(t)),\hat{A}(t))\]
where $\hat{A}(t)=\int_{4n+2}^tA(u)du$ and
\[\hat{e}(\hat{A}(t))=\infnorm{y(4n+2)-g(0,0)}+\int_{4n+2}^t1+\infnorm{e(u)}du.\]
Since $\Omega$ is a polynomial, and using the result on $\nu$, we get that:
\begin{align*}
\hat{A}(t)&\leqslant\sup_{[4n+2,t]}\poly(\infnorm{x},|\nu|)\\
    &\leqslant\poly\left(|\nu_0|\indicator{[0,5]}(t)+\operatorname{int}_{6}^+\infnorm{e},\pastsup{6}\mu(t),\pastsup{1}{\infnorm{x}}(t)\right)
\end{align*}
and using that $4n+2\leqslant t\leqslant4n+3$:
\begin{align*}
\infnorm{y(4n+2)-g(0,0)}&\leqslant\infnorm{y(4n+2)}+\infnorm{g(0,0)}\\
    &\leqslant\poly\left(|\nu_0|\indicator{[0,5]}(t)+\operatorname{int}_{6}^+\infnorm{e},\pastsup{7}\mu(t),
                         \pastsup{2}\infnorm{x}(t)\right)
\end{align*}
And since $\Upsilon$ is a polynomial, we conclude that:
\[
\infnorm{y(t)}\leqslant\poly\left(|\nu_0|\indicator{[0,5]}(t)+\operatorname{int}_{6}^+\infnorm{e}(t),\pastsup{7}\mu(t),
                         \pastsup{2}\infnorm{x}(t)\right)
\]
\item \textbf{if $t\in[4n+3,4n+4]$ then} apply Lemmas \ref{lem:sample} and \ref{lem:plil}
to get, using that $\mu(t)\geqslant0$, that $\infnorm{y'(t)}\leqslant2+\infnorm{e(t)}$ and thus
\[\infnorm{y(t)-y(4n+3)}\leqslant2+\int_{4n+3}^t\infnorm{e(u)}du.\]
\end{itemize}
From this analysis we can conclude that for all $t\in[0,2]$:
\begin{align*}
\infnorm{y(t)}&\leqslant\poly\left(|\nu_0|\indicator{[0,5]}(t)+\operatorname{int}_{6}^+\infnorm{e}(t),\pastsup{6}\mu(t),
                         \pastsup{1}\infnorm{x}(t),\infnorm{y(0)}\right)\\
                &\leqslant\poly\left(|\nu_0|+\operatorname{int}_{6}^+\infnorm{e}(t),\pastsup{6}\mu(t),
                         \pastsup{1}\infnorm{x}(t),\infnorm{y_0}\right)
\end{align*}
and for all $n\in\N$ and $t\in[4n+2,4n+6]$:
\[
\infnorm{y(t)}\leqslant\poly\left(|\nu_0|\indicator{[0,5]}(t)+\operatorname{int}_{9}^+\infnorm{e}(t),\pastsup{9}\mu(t),
                         \pastsup{4}\infnorm{x}(t)\right)
\]
Putting everything together, we get for all $t\in\Rp$:
\[
\infnorm{y(t)}\leqslant\poly\left(\infnorm{y_0,\nu_0}\indicator{[0,5]}(t)+\operatorname{int}_{9}^+\infnorm{e}(t),\pastsup{9}\mu(t),
                         \pastsup{4}\infnorm{x}(t)\right)
\]
Finally apply \lemref{lem:sample} to get a similar bound on $z$ and thus
on the entire system.
\end{proof}

\section{Proof that AXP implies AOP}
\label{app:unaware_to_online}

We can prove 

\begin{theorem}[\Unaware{} $\subseteq$ online]\label{th:unaware_to_online}
$\gpuc = \gpoc$
\end{theorem}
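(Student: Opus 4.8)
The plan is to establish the nontrivial inclusion $\gpuc\subseteq\gpoc$; the reverse inclusion, hence the stated equality, then follows from the other sections, which close the cycle $\gpoc\subseteq\gpc\subseteq\gpwc\subseteq\gprc\subseteq\gpsc\subseteq\gpuc$. So fix $f\in\gpuc$ together with its extreme machine: the generable $g$ (and the parameters $\delta,y_0$) and the polynomials $\Upsilon,\Omega,\Lambda,\Theta$ of \defref{def:guc}. The gap between \defref{def:guc} and \defref{def:goc} is that the extreme machine is fed, besides the running input $x(t)$, an external precision signal $\mu(t)$, and is also subject to a perturbation $e$, whereas an online machine has neither. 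So I would (i) take $e\equiv 0$, (ii) adjoin a clock $\tau$ with $\tau(0)=0$, $\tau'=1$ to absorb the explicit $t$-dependence, and, the real work, (iii) synthesise $\mu$ internally from $x$ and $\tau$ so that it rises to every precision exactly when $x$ stabilises.

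For (iii) the online system would additionally carry an outer sample-and-hold of $x$ (\lemref{lem:sample}) producing a state variable $\tilde x$ which, on each hold sub-interval, is constant up to a drift $e^{-\tau}$, and which, whenever $x$ hovers within $\eta$ of some $\bar x$, stays within $\eta+e^{-\tau}$ of $\bar x$; together with a synthesised precision $\mu^\sharp(t)$ driven up by $\tau$ but kept just below what the hold actually supports, concretely $\mu^\sharp(t)=\Lambda^{-1}(\infnorm{\tilde x(t)}+1,\tau(t)-O(1))$, which is generable because the inverse of the polynomial $\Lambda$ in its second argument is generable on $[1,\infty[$. The extreme machine is then run as $y'(t)=g(\tau(t),y(t),\tilde x(t),\mu^\sharp(t))$. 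All right-hand sides are generable, the raw input $x$ entering only multiplicatively through the sample-and-hold (the functions of \lemref{lem:lxh_hxl}, \lemref{lem:plil} and \lemref{lem:sample} all have the shape $\phi(\cdots)x$), so by the generable-ODE rewriting (\propref{prop:gpac_ext_ivp_stable_pre}) the whole system becomes a bona fide polynomial $\gpoc$-system with a constant initial condition.

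For the analysis, take a stability window $I=[a,b]$, $\bar x\in\dom f$ and $\bar\mu\geqslant0$ with $\infnorm{x(t)-\bar x}\leqslant e^{-\Lambda_{oc}(\infnorm{\bar x},\bar\mu)}$ on $I$. On a trailing sub-window of one of the outer holds, $\tilde x$ stays within $e^{-\tau}\leqslant e^{-\Lambda(\infnorm{\tilde x},\mu^\sharp)}$ of the constant held value $v$, so the extreme-machine guarantee of \defref{def:guc} applies there with stable point $v$ and precision $\mu^\sharp$, giving $\infnorm{y_{1..m}(u)-f(v)}\leqslant e^{-\mu^\sharp(u-O(1))}$; as $\mu^\sharp$ climbs past $\bar\mu$ within time polynomial in $\bar\mu$, and $v$ lies within $e^{-\Lambda_{oc}(\infnorm{\bar x},\bar\mu)}$ of $\bar x$, one then picks $\Lambda_{oc}$ as a polynomial dominating both $\Lambda(\infnorm{\bar x},\bar\mu)+O(1)$ and an inverse-modulus-of-continuity term for $f$ (a polynomial-type modulus of continuity of $f$ being extractable from $f\in\gpuc$ through the parameter-dependency \thref{th:dependency_right_side} applied over the bounded horizon given by $\Omega$), which yields $\infnorm{y_{1..m}(u)-f(\bar x)}\leqslant e^{-\bar\mu}$ for all $u\geqslant a+\Omega_{oc}(\infnorm{\bar x},\bar\mu)$ with $\Omega_{oc}$ polynomial. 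The space bound is routine: $\infnorm{\tilde x(t)}\leqslant\pastsup{\delta'}{\infnorm{x}}(t)+O(1)$ by \lemref{lem:sample}, $\mu^\sharp(t)$ is polynomial in $(\pastsup{\delta'}{\infnorm{x}}(t),t)$, and $\Upsilon$ is a polynomial, so the whole state is bounded by a polynomial in $(\pastsup{\delta'}{\infnorm{x}}(t),t)$, as required.

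The main obstacle is exactly this self-calibration in (iii) and its interplay with $\dom f$. A precision driven purely by the clock would, once $\Lambda(\infnorm{\bar x},\mu^\sharp(t))$ overtakes $\ln(1/\eta)$ for the fixed band $\eta=e^{-\Lambda_{oc}(\infnorm{\bar x},\bar\mu)}$, start demanding of $x$ more accuracy than is available, which is why the hold---keeping the extreme machine's input genuinely constant on each sub-interval---is needed; but the held value $v$ need not lie in $\dom f$, only within $\eta$ of $\bar x\in\dom f$, so one must either use a robust form of the extreme-machine guarantee valid for a stable point merely close to $\dom f$ (legitimate because the extreme machine's own internal sample-and-hold bounds the relevant parameter-dependence horizon) or throttle $\mu^\sharp$ by an estimate of the spread of $x$; the subtle point is that such an estimate under-reports the spread when $x$ drifts slowly, so the argument must reconcile the regime where $\bar x$ itself is used as the stable point (while $\mu^\sharp$ is still safely throttled) with the slow-drift regime where a sampled value is used and the cost is absorbed by the modulus of continuity of $f$, keeping every estimate polynomial throughout.
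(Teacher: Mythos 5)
You correctly reduce the statement to the inclusion $\gpuc\subseteq\gpoc$ (the converse coming from the chain of inclusions in the other sections), and your overall architecture---condition the input, generate the precision signal internally from a clock, run the \unaware{} machine, and finish with \propref{prop:gpac_ext_ivp_stable_pre} to get a genuine PIVP with constant initial condition---matches the paper's. But the proposal stops exactly where the proof has to be done. The accuracy clause of \defref{def:guc} is usable only when the stable point lies in $\dom{f}$ and the signal fed to the machine stays within $e^{-\Lambda(\infnorm{\bar{x}},\hat{\mu})}$ of it, where $\hat{\mu}$ bounds the precision signal over the \emph{whole} window; as you yourself observe, a clock-driven precision eventually demands more accuracy than the fixed band $e^{-\Lambda_{oc}(\infnorm{\bar{x}},\bar{\mu})}$ provides, and with an input sample-and-hold the held value $v$ need not belong to $\dom{f}$. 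This is indeed the delicate heart of the argument, and at that point you only offer two unestablished escape routes: a ``robust form of the extreme-machine guarantee valid for a stable point merely close to $\dom{f}$'' is not part of \defref{def:guc} and is proved nowhere (and you may only use the definition, or else actually prove such a property for a machine produced by the cycle, which you do not), while the modulus-of-continuity patch only compares values of $f$ at two points \emph{of} $\dom{f}$, so it cannot absorb a held sample outside $\dom{f}$. The throttling alternative $\mu^\sharp=\Lambda^{-1}(\cdot)$ is left as a declared intention, with its slow-drift failure mode acknowledged but not resolved. So the central claim---that $y_{1..m}$ stays $e^{-\bar{\mu}}$-close to $f(\bar{x})$ on all of $[a+\Omega_{oc},b]$, for arbitrarily long stability windows---is not proved.

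There are also concrete secondary gaps, and they are precisely where the paper invests its machinery. Generability, polynomial boundedness and total definedness (jointly in both arguments, over the whole space, as emphasized in the remark after \defref{def:grc}) of the partial inverse $\Lambda^{-1}$ are asserted, not proved. The \unaware{} machine's delay is $\Omega(\infnorm{\bar{x}},\hat{\mu})$ and grows with the requested precision, so fixed-length hold periods do not obviously leave it time to settle; the paper removes this problem by first normalizing $\Omega$ to a constant via \lemref{lem:gpuc_time_rescaling} (itself proved by going around the cycle), a step you never invoke. Finally, since you output $y_{1..m}$ directly, each resampling of the input opens a dead zone of length at least that delay during which nothing is guaranteed, whereas \defref{def:goc} demands accuracy at \emph{every} $u\in[a+\Omega_{oc},b]$; the paper bridges these zones with a sample-and-hold on the \emph{output} ($\sample_{[\omega+1,\omega+2],\tau}$ with $\tau=\omega+2$), which your construction lacks. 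In short, the paper's mechanism is: constant-delay normalization, $\reach$-smoothing of the raw input (\lemref{lem:reach}) with sharpness scaling with the current precision, linearly growing precision $\mu(t)=t/\tau$, and a per-period output hold; your input-side hold plus $\Lambda^{-1}$-throttling is a genuinely different mechanism, but as written it does not get past the obstacle you yourself identify.
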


Actually, we prove in this section that $\gpuc  \subseteq
\gpoc$. Equality will follow from other sections.

\subsection{Some remarks}

We start by the following lemmas:

\begin{lemma}[$\gpuc$ time rescaling]\label{lem:gpuc_time_rescaling}
If $f\in\gpuc$ then there exist polynomials $\Upsilon,\Lambda,\Theta$
and a constant polynomial\footnote{$\Omega(x)=c$ for all $x$ for some constant $c$.} $\Omega$ such that $f\in\guc{\Upsilon}{\Omega}{\Lambda}{\Theta}$.
\end{lemma}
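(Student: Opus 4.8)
The plan is to re-run the construction used in the proof of \thref{th:strong_to_unaware}, but applied to the \unaware{} system of $f$ itself rather than to a strongly-computable one. So let $f\in\gpuc$, fix polynomials $\Upsilon_0,\Omega_0,\Lambda_0,\Theta_0$ with $f\in\guc{\Upsilon_0}{\Omega_0}{\Lambda_0}{\Theta_0}$ (increasing, without loss of generality), and let $\delta_0\geqslant0$, $d_0\in\N$, $g_0\in\gpval$ and $y_0^{(0)}\in\R^{d_0}$ be the data of \defref{def:guc} for $f$. The only genuine difficulty is that $\Omega_0(\infnorm{\bar x},\hat\mu)$ depends on $\hat\mu$, a quantity an on-line system only ever perceives through the instantaneous values $\mu(t)\in[\check\mu,\hat\mu]$; a plain time rescaling of $g_0$ therefore cannot compress this delay to a constant, because the boost factor it can compute at time $t$ is controlled only by $\mu(t)\geqslant\check\mu$. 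The fix is to sample, once per period, a single precision value $\bar\nu\in[\check\mu+O(1),\hat\mu+O(1)]$ (exactly as $\nu$ is sampled in \thref{th:strong_to_unaware}), and then to feed this value, held constant, to $g_0$ while running $g_0$ at a boosted speed for one unit of time: since the fed precision is then constant over that window, $g_0$'s delay on the window is exactly $\Omega_0(\infnorm{\bar x},\bar\nu)$, which we arrange to be covered by the window.

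Concretely, I would use period $\tau=4$ and, on each period $[4n,4n+4]$, dedicate the four unit sub-windows to the following tasks, implemented with the $\sample_{\cdot,4}$ and $\plil_{\cdot,4}$ functions of \lemref{lem:sample} and \lemref{lem:plil} as in \thref{th:strong_to_unaware}: on $[4n,4n+1]$, a variable $\nu$ samples-and-holds $\mu(t)+O(1)$, yielding $\bar\nu=\nu(4n+1)$; on $[4n+1,4n+2]$, the simulated state $y$ and a clock variable $c$ are pulled back (by $\sample$ terms) to their constants $y_0^{(0)}$ and $0$ — this restart is legitimate precisely because the initial condition in \defref{def:guc} is a constant; on $[4n+2,4n+3]$, we add the term $\plil_{[2,3],4}\!\big(t,\mu^*(t),A(t)\,g_0(c(t),y(t),x(t),\nu(t))\big)$ and set $c'(t)$ equal to the same $\plil$ density times $A(t)$, where $A(t)=1+\Omega_0\big(1+\norm_{\infty,1}(x(t)),\nu(t)+O(1)\big)$ and $\mu^*$ is a suitable polynomial driver; on $[4n+3,4n+4]$, a final $\sample_{[3,4],4}$ term copies $y_{1..m}$ into the output variables $z$. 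By \lemref{lem:perturbed_time_scale_gpac}, on $[4n+2,4n+3]$ the pair $(y,c)$ realises a time reparametrisation of a genuine solution of the \unaware{} system of $f$, fed $x(t)$ and the held precision, started from $y_0^{(0)}$, with internal elapsed time $c(4n+3)=\int_{4n+2}^{4n+3}(\plil\text{ density})\cdot A\geqslant A\geqslant\Omega_0(\infnorm{\bar x},\bar\nu)$ and internal accumulated error at most $\int_{4n+2}^{4n+3}\infnorm{e_y(u)}\,du$. I would then set $\Omega\equiv 12$ (a constant polynomial), $\Lambda(\alpha,\mu)=\Lambda_0(\alpha,\mu+O(1))+O(1)$, $\Theta(\alpha,\mu)=\Theta_0(\alpha,\mu+O(1))+O(1)$, and $\Upsilon$ a polynomial to be read off from the norm analysis.

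The verification then follows the pattern of \thref{th:strong_to_unaware}. For the convergence clause: given a locked interval $I=[a,b]$ and any $n$ with $[4n,4n+4]\subseteq I$, the analysis of $\nu$ gives $\bar\nu\in[\check\mu+O(1),\hat\mu+O(1)]$; feeding $x(t)\approx\bar x$ (within $e^{-\Lambda}$) and the held precision $\nu(t)\approx\bar\nu$ to $g_0$ for internal time $\geqslant\Omega_0(\infnorm{\bar x},\bar\nu)$, the convergence and error clauses of \defref{def:guc} for $f$ (using $\int_I\infnorm{e}\leqslant e^{-\Theta}$, with the $O(1)$ margins absorbed into $\Lambda,\Theta$) yield $\infnorm{y_{1..m}(4n+3)-f(\bar x)}\leqslant e^{-\bar\nu}$, and the final $\sample_{[3,4],4}$ propagates this up to an extra $e^{-\bar\nu}$, so that $\infnorm{z_{1..m}(u)-f(\bar x)}\leqslant e^{-\check\mu}$ once $u\geqslant a+12$. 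For the norm clause: outside the simulation window $y$ (and $\nu,c,z$) obey $\sample/\plil$-type dynamics, hence are bounded by $O(1)$ plus the recent integral of $\infnorm{e}$ plus the recent suprema of $\infnorm{x}$ and $\mu$; inside the simulation window \lemref{lem:perturbed_time_scale_gpac} identifies $y$ with a solution of the $f$-system whose internal time never exceeds $K\cdot A(t)=\poly(\infnorm{x(t)},\nu(t))$ and whose internal accumulated error never exceeds that of the new system, so the norm clause of \defref{def:guc} for $f$ bounds it by $\poly(\pastsup{\delta}{\infnorm{x}}(t),\pastsup{\delta}{\mu}(t),\text{recent }\textstyle\int\infnorm{e})$ for a suitable $\delta$; combining gives the required polynomial $\Upsilon$, and $g_0,\tanh,\norm_{\infty,1},\sample,\plil\in\gpval$ (Lemmas~\ref{lem:gpac_ext_class_stable}, \ref{lem:norm}) ensures the whole right-hand side is generable and defined on all of Euclidean space. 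The hard part is conceptual only in the observation above — that one must sample-and-hold the precision rather than merely time-rescale — after which the work is the same bookkeeping as in \thref{th:strong_to_unaware}: making the several $O(1)$ margins in $\bar\nu$, $A$, $\mu^*$, $\Lambda$ and $\Theta$ mutually consistent, tracking which sub-window currently governs $y$, and pushing the polynomial bounds through the reparametrisation of \lemref{lem:perturbed_time_scale_gpac}.
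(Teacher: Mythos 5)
Your construction is workable, but it takes a genuinely different (and much heavier) route than the paper. The paper's own proof is a short reduction: it first observes that $\gpuc\subseteq\gpwc$ is essentially trivial (freeze the input and the precision into constant variables, add a variable storing the time to make the system autonomous, and rewrite as a PIVP via \thref{prop:gpac_ext_ivp_stable_pre}), and then simply composes the already-established chain \thref{th:weak_to_robust}, \thref{th:robust_to_strong} and \thref{th:strong_to_unaware}, the last of which yields exactly an \unaware{} system with constant $\Omega$. You instead redo the sample-and-hold/boosted-rerun construction of \thref{th:strong_to_unaware} directly with the \unaware{} system of $f$ as the inner engine; your identification of the crux (one must sample and hold the precision, since a plain rescaling can only use $\mu(t)\geqslant\check{\mu}$ while the delay depends on $\hat{\mu}$) is exactly the idea that powers \thref{th:strong_to_unaware}, and the per-period restart plus clock variable does make the inner run a legitimate instance of \defref{def:guc}. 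What each approach buys: the paper's proof is a few lines and reuses the hardest theorem as a black box (legitimately, since \thref{th:weak_to_robust}--\thref{th:strong_to_unaware} are proved before this lemma is needed), whereas your proof is self-contained at the \unaware{} level but duplicates that entire bookkeeping and additionally needs a mild generalization of \lemref{lem:perturbed_time_scale_gpac} to a non-autonomous, externally driven generable right-hand side (the lemma is stated for $y'=\phi\, p(y)+e$ with $p$ a polynomial; your clock-variable trick handles this, but it should be stated and checked, including the preservation of the error integral). Two small corrections: in \defref{def:guc} the initial condition $y_0$ is universally quantified, not a constant as you say --- this is in fact what makes your restart (and even no restart at all) legitimate; and the sustained-accuracy clause is carried by the output variable $z$ alone, since $y_{1..m}$ is deliberately destroyed during each restart window, so the first $m$ components of your system must be the $z$ variables, as in the paper's construction.
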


\begin{proof}
We go for the shortest proof: we will show that $\gpuc\subseteq\gpwc$
and use \thref{th:weak_to_robust} then Theorem
\ref{th:robust_to_strong} followed by \thref{th:strong_to_unaware}
which proves exactly our statement.

The proof that $\gpuc\subseteq\gpwc$ is next to trivial since because we are given an \unaware{} system and some input and precision,
we can simply store the input and precision into some variables and feed them into the (extreme) system.
We make the system autonomous by using a variable to store the time.

Let
$(f:\subseteq\R^n\rightarrow\R^m)\in\guc{\Upsilon}{\Omega}{\Lambda}{\Theta}$,
apply \defref{def:guc} to get
$\delta, d$ and $g$.
Let $x\in\dom{f}$ and $\mu\in\Rp$, and consider the following system:
\[
\left\{\begin{array}{@{}r@{}l@{}}
x(0)&=x\\\mu(0)&=\mu\\\tau(0)&=0\\y(0)&=0
\end{array}\right.
\qquad
\left\{\begin{array}{@{}r@{}l@{}}
x'(t)&=0\\\mu'(t)&=0\\\tau'(t)&=1\\
y'(t)&=g(t,y(t),x(t),\mu(t))
\end{array}\right.
\]
Clearly the system is of the form $z(0)=h(x,\mu)$ and $z'(t)=H(z(t))$
where $h$ and $H$ belong to $\gpval$ (and are defined over the entire space).
Apply the definition to get that:
\[\infnorm{y(t)}\leqslant\Upsilon(\infnorm{x},\mu,0)\]
And thus the entire system is bounded by a polynomial in $\infnorm{x},\mu$
and $t$.
Furthermore, if $t\geqslant\Omega(\infnorm{x},\mu)$ then $\infnorm{y_{1..m}(t)-f(x)}\leqslant e^{-\mu}$.
To conclude the proof, we need to rewrite the system as a PIVP using
\thref{prop:gpac_ext_ivp_stable_pre}.
\end{proof}

\subsection{Reaching a value}\label{sec:reach}

The notion of \unaware{} computability might seem so strong at first that one
can wonder if anything is really computable in this sense. In this section,
we will introduce a very useful pattern which we call ``reaching a value''.
This can be seen as a proof that all constant functions or generable functions are \unawarely{}-computable, and this pattern will be used as
a basic block to build more complicated extremely-computable functions.
As as introductory example, consider the system:
\[y'(t)=\alpha-y(t)\]
This system can be shown to converge to $\alpha$ whatever the initial value is.
In this section we extend this system in several non-trivial
ways. In particular, we want to ensure a certain rate of convergence in all
situations and we want to make this system robust to perturbations. In other words,
we want to analyse:
\[y'(t)=\alpha(t)-y(t)+e(t)\]
where $e(t)$ is a perturbation and $\alpha(t)\approx \alpha$.

\begin{definition}[Reach ODE]
Let $T>0$, $I=[0,T]$, $g,E:I\rightarrow\R$, $\phi:I\rightarrow\Rps$.
Define \eqref{eq:reach} as the following differential equation for $t\in I$,
\begin{equation}\label{eq:reach}
\left\{\begin{array}{@{}r@{}l@{}}y'(t)&=\phi(t)X_3(g(t)-y(t))+E(t)\\y(0)&=y_0\end{array}\right.\qquad\text{where }X_3(u)=u+u^3
\end{equation}
\end{definition}

\begin{lemma}[Reach ODE: integral error]\label{lem:reach_ode}
Let $T>0$, $I=[0,T]$, $g,E \in C^0(I,\R)$, $\phi\in C^0(I,\Rps)$.
Assume that there exist $\eta>0$ and $\bar{g}\in\R$ such that
for all $t\in I$ we have $|g(t)-\bar{g}|\leqslant\eta$.
Then the solution $y$ to \eqref{eq:reach} exists over $I$ and satisfies:
\[|y(T)-\bar{g}|\leqslant\eta+\int_0^T|E(t)|dt+\frac{1}{\sqrt{\exp(2\int_0^T\phi(u)du)-1}}\]
Furthermore, for any $t\in I$:
\[|y(t)-\bar{g}|\leqslant\max(\eta,|y(0)-\bar{g}|)+\int_0^t|E(u)|du\]
\end{lemma}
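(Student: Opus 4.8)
The statement concerns the ODE $y'(t)=\phi(t)X_3(g(t)-y(t))+E(t)$ with $X_3(u)=u+u^3$, where $g$ stays within $\eta$ of a constant $\bar g$. The strategy is to work with the error variable $w(t)=y(t)-\bar g$ and track how $|w|$ evolves. Write $g(t)-y(t)=(\bar g-y(t))+(g(t)-\bar g)=-w(t)+r(t)$ where $|r(t)|\leqslant\eta$. Then
\[w'(t)=\phi(t)X_3(r(t)-w(t))+E(t).\]
The first step is to establish the second (cruder) bound, $|y(t)-\bar g|\leqslant\max(\eta,|y(0)-\bar g|)+\int_0^t|E(u)|du$, which also secures existence of the solution over all of $I$ (no blow-up). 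For this I would argue that whenever $|w(t)|>\eta$, the sign of $r(t)-w(t)$ is opposite to that of $w(t)$ (since $|r(t)|\leqslant\eta<|w(t)|$), hence $X_3(r(t)-w(t))$ has sign opposite to $w(t)$ too (as $X_3$ is odd and increasing), so the ``$\phi X_3$'' term pushes $|w|$ toward $\eta$; the only way $|w|$ can grow is through the $E$ term. Making this rigorous via $\frac{d}{dt}|w(t)|\leqslant |E(t)|$ on the set $\{|w|>\eta\}$ (using a one-sided derivative or a standard comparison/Gronwall-type argument on $v(t)=(|w(t)|-\eta)^+$) gives $|w(t)|\leqslant\max(\eta,|w(0)|)+\int_0^t|E(u)|du$, and in particular the solution stays bounded and extends to all of $I=[0,T]$.

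**The quantitative bound.** The harder part is the first inequality, with its $\big(\exp(2\int_0^T\phi)-1\big)^{-1/2}$ term, which is where the cubic $u^3$ in $X_3$ earns its keep: a purely linear $-\phi(t)w$ would only give exponential-in-$\int\phi$ decay of the $\eta$-free part, whereas the cubic term forces algebraic decay that is \emph{independent of the initial condition}. I would first reduce to the case $E\equiv 0$: by the triangle inequality and the comparison argument above, the contribution of $E$ to $|y(T)-\bar g|$ is at most $\int_0^T|E(t)|dt$, so it suffices to bound the solution $\tilde y$ of the unperturbed equation and then combine. Next, reduce to $\eta=0$ in the leading analysis by again splitting: one shows the $\eta$-perturbation of $g$ changes the final value by at most $\eta$ (same sign-of-drift reasoning: once $|w|\leqslant\eta$ it stays $\leqslant\eta$ in the unperturbed-$E$ dynamics, and from outside it monotonically approaches the $\eta$-band). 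So the crux is: for $y'=-\phi(t)(y+y^3)$ (taking $\bar g=0$, $g\equiv0$), show $|y(T)|\leqslant\big(\exp(2\int_0^T\phi)-1\big)^{-1/2}$ regardless of $y(0)$.

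**The core computation.** For that core estimate I would change variables to kill $\phi$: let $s(t)=\int_0^t\phi(u)du$ and reparametrize so that, writing $y$ as a function of $s$, we get the autonomous equation $\frac{dy}{ds}=-(y+y^3)$. Then consider $h=1/y^2$ (for $y\neq0$; $y\equiv0$ is trivial): $\frac{dh}{ds}=-\frac{2}{y^3}\frac{dy}{ds}=\frac{2}{y^3}(y+y^3)=2h+2$, a linear ODE. Solving, $h(s)+1=(h(0)+1)e^{2s}\geqslant e^{2s}$ since $h(0)\geqslant0$, hence $h(s)\geqslant e^{2s}-1$, i.e. $y(s)^2\leqslant\frac{1}{e^{2s}-1}$, which at $s=s(T)=\int_0^T\phi$ is exactly the claimed bound. (One checks the sign convention makes $y$ monotone toward $0$ so $h$ is increasing and well-defined; if $y(0)=0$ then $y\equiv0$ and the bound is vacuous.) Finally I would assemble: unperturbed-$\eta=0$ bound $\big(\exp(2\int_0^T\phi)-1\big)^{-1/2}$, plus $\eta$ from the $g$-perturbation, plus $\int_0^T|E|$ from the dynamics perturbation, giving the stated inequality. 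The main obstacle is the bookkeeping of the three perturbation sources (initial condition, $\eta$-drift in $g$, and $E$) simultaneously; the cleanest route is the two successive reductions ($E$ first, then $\eta$) so that the cubic-ODE trick is applied to the bare equation, and then monotonicity/comparison handles the rest.
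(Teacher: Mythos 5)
Your proposal is correct, but it takes a genuinely different route from the paper. The paper handles all three perturbation sources at once: it sandwiches $y$ between explicit sub- and super-solutions $y_\pm$ built from $f_\pm(t,x)=\pm|E(t)|+\phi(t)X_3(\bar g\pm\eta-(x\mp I(t)))$ with $I(t)=\int_0^t|E|$, observes that $y_+-I$ and $y_-+I$ solve the exact shifted equation $x'=\phi(t)X_3(x_\infty-x)$, and reads \emph{both} stated inequalities off a single closed-form solution of that equation (one property of the explicit formula gives the $(e^{2\int\phi}-1)^{-1/2}$ term, another gives the crude $\max(\eta,|y(0)-\bar g|)$ bound). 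You instead peel the perturbations off one at a time: the crude bound and global existence come from a Dini-derivative/dissipativity argument on $(|w|-\eta)^+$; the $E$-term is removed by the one-sided-Lipschitz (contraction) property of the monotone drift, costing $\int_0^T|E|$; the $\eta$-term is removed by a barrier comparison against the dynamics with $g\equiv\bar g$, costing $\eta$; and the core decay estimate is obtained by rescaling time via $s=\int_0^t\phi$ and substituting $h=1/y^2$, which linearizes the autonomous cubic ODE and yields $h(s)\geqslant e^{2s}-1$, i.e.\ exactly $(e^{2s}-1)^{-1/2}$, uniformly in the initial condition. The trade-off: your argument never needs to guess or verify a closed-form solution and exploits only that $X_3$ is odd and increasing (so it would generalize to other such nonlinearities, with the $1/y^2$ trick specific to the cubic only in the last step), at the price of three separate comparison arguments whose barrier/Dini details must each be made rigorous; the paper's single explicit-solution comparison is less conceptual but delivers both conclusions in one stroke. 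Your sketch's standard-care points (strict barrier at $\eta+\varepsilon$, crossing $u=0$ in the contraction estimate, $\hat w$ not vanishing by uniqueness) are all fixable by routine arguments, so there is no gap of substance.
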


\begin{proof}
Write $f(t,x)=E(t)+\phi(t)X_3(g(t)-x)$, then $y'(t)=f(t,y(t))$. Define $I(t)=\int_{0}^t|E(u)|du$ and consider:
\[f_+(t,x)=|E(t)|+\phi(t)X_3\left(\bar{g}+\eta-(x-I(t))\right)\]
\[f_-(t,x)=-|E(t)|+\phi(t)X_3\left(\bar{g}-\eta-(x+I(t))\right)\]
Since $X_3$ and $I$ are increasing functions, it is easily seen that 
\[f_-(t,x)\leqslant f(t,x)\leqslant f_+(t,x).\]
By a classical result of differential inequalities, we get that
\[y_-(t)\leqslant y(t)\leqslant y_+(t)\]
where $y_-(0)=y_+(0)=y(0)$ and $y_{\pm}'(t)=f_{\pm}(t,y_{\pm}(t))$.
Now realize that:
\[y_+'(t)-I'(t)=\phi(t)X_3(\bar{g}+\eta-(y_+(t)-I(t)))\]
\[y_-'(t)+I'(t)=\phi(t)X_3(\bar{g}-\eta-(y_-(t)+I(t)))\]
which are two instances of the following differential equation:
\[x(0)=x_0\qquad x'(t)=\phi(t)X_3(x_\infty-x(t))\]
Since $\phi$ and $X_3$ are continuous, this equation has a unique solution by the Cauchy-Lipschitz theorem and one can check that the following
is a solution:
\[x(t)=x_\infty+\underbrace{\frac{x_0-x_\infty}{\sqrt{(e^{2\int_0^t\phi(u)du}-1)(1+(x_0-x_\infty)^2)+1})}}_{:=\alpha(x_0,x_\infty,t)}\]
Furthermore, one can check that for any $a,b\in\R$ and any $t>0$:
\begin{itemize}
\item $|\alpha(a,b,t)|\leqslant\frac{1}{\sqrt{e^{2\int_0^T\phi(u)du}-1}}$
\item $\min(0,a-b)\leqslant\alpha(a,b,t)\leqslant \max(0,a-b)$
\end{itemize}
It follows that:
\[\bar{g}-\eta-I(t)+\alpha(y(0),\bar{g}-\eta,t)\leqslant y(t)
\leqslant \bar{g}+\eta+I(t)+\alpha(y(0),\bar{g}+\eta,t)\]
\[-\eta-I(t)+\alpha(y(0),\bar{g}-\eta,t))\leqslant y(t)-\bar{g}\leqslant
\eta+I(t)+\alpha(y(0),\bar{g}+\eta,t)\]
Using the first inequality on $\alpha$ we get that:
\[-\eta-I(t)-\frac{1}{\sqrt{e^{2\int_0^T\phi(u)du}-1}}\leqslant y(t)-\bar{g}
\leqslant \eta+I(t)+\frac{1}{\sqrt{e^{2\int_0^T\phi(u)du}-1}}\]
Which proves the first result. And using the second inequality we get that:
\[-\eta-I(t)+\min(0,y(0)-(\bar{g}-\eta))|\leqslant y(t)-\bar{g}\leqslant
\eta+I(t)+\max(0,y(0)-(\bar{g}+\eta))\]
This proves the second result by case analysis.
\end{proof}

Sometimes though, the previous lemma lacks some precision. In particular when
$\phi$ is never close to $0$, where the intuition tells us that we should be
able to replace $\int_0^t|E(u)|du$ with some bound that does not depend on $t$.
The next lemma focuses on this case exclusively.

\begin{lemma}[Reach ODE: worst error]\label{lem:reach_ode2}
Let $T>0$, $I=[0,T]$, $g,E:I\rightarrow\R$, $\phi:I\rightarrow\Rps$.
Assume that there exist $\eta,\phi_{min},E_{max}>0$ and $\bar{g}\in\R$ such that
\begin{itemize}
\item For all $t\in I, |g(t)-\bar{g}|\leqslant\eta$.
\item For all $t\in I, |E(t)|\leqslant E_{max}$
\item For all $t\in I$, $\phi(t)\geqslant\phi_{min}$
\end{itemize}
Then the solution $y$ to \eqref{eq:reach} exists over $I$ and satisfies for all $t\in I$:
\[|y(t)-\bar{g}|\leqslant\eta+\frac{E_{max}}{\phi_{min}}+\frac{1}{\sqrt{\exp(2\int_0^t\phi(u)du)-1}}\]
\end{lemma}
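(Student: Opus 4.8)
The plan is to follow the comparison strategy used in the proof of \lemref{lem:reach_ode}, but to absorb the perturbation $E$ into a \emph{shift of the target value} rather than into a running integral of $|E|$. Writing $f(t,x)=E(t)+\phi(t)X_3(g(t)-x)$ so that $y'(t)=f(t,y(t))$, I would first bound $f$ from above: since $X_3$ is increasing, $g(t)\leqslant\bar g+\eta$ and $E(t)\leqslant E_{max}$, we get $f(t,x)\leqslant E_{max}+\phi(t)X_3(\bar g+\eta-x)$ for all $t\in I$ and $x\in\R$. The new ingredient is the elementary shift inequality
\[X_3(s+c)-X_3(s)=c\left(1+3\left(s+\tfrac{c}{2}\right)^2+\tfrac{c^2}{4}\right)\geqslant c\qquad(c\geqslant0,\ s\in\R),\]
which, together with $\phi(t)\geqslant\phi_{min}$ and the choice $c:=E_{max}/\phi_{min}$, yields $\phi(t)\bigl(X_3(s+c)-X_3(s)\bigr)\geqslant\phi_{min}c=E_{max}$. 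Applying this with $s=\bar g+\eta-x$ gives $E_{max}+\phi(t)X_3(\bar g+\eta-x)\leqslant\phi(t)X_3(g_\infty-x)$, where $g_\infty:=\bar g+\eta+E_{max}/\phi_{min}$; symmetrically $f(t,x)\geqslant\phi(t)X_3(g_{-\infty}-x)$ with $g_{-\infty}:=\bar g-\eta-E_{max}/\phi_{min}$.

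Next I would invoke the classical differential-inequality comparison theorem (exactly as in \lemref{lem:reach_ode}) to sandwich $y$ between the solutions $z_{\pm}$ of the \emph{unperturbed, constant-target} equations $z_{\pm}'(t)=\phi(t)X_3(g_{\pm\infty}-z_{\pm}(t))$ with $z_{\pm}(0)=y_0$. These are instances of the ODE solved explicitly in the proof of \lemref{lem:reach_ode}, so $z_{\pm}(t)=g_{\pm\infty}+\alpha(y_0,g_{\pm\infty},t)$ with $|\alpha(y_0,g_{\pm\infty},t)|\leqslant\bigl(e^{2\int_0^t\phi(u)du}-1\bigr)^{-1/2}$. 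Hence for all $t\in I$,
\[g_{-\infty}-\frac{1}{\sqrt{e^{2\int_0^t\phi(u)du}-1}}\leqslant z_-(t)\leqslant y(t)\leqslant z_+(t)\leqslant g_\infty+\frac{1}{\sqrt{e^{2\int_0^t\phi(u)du}-1}},\]
which rearranges to precisely $|y(t)-\bar g|\leqslant\eta+\frac{E_{max}}{\phi_{min}}+\bigl(e^{2\int_0^t\phi(u)du}-1\bigr)^{-1/2}$. Existence of $y$ over all of $I$ is not an extra burden: either cite \lemref{lem:reach_ode} (whose hypotheses are weaker and are satisfied here), or observe that $y$ is trapped between the globally defined $z_{\pm}$ and therefore cannot blow up in finite time.

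The only genuinely new computation is the shift inequality $\phi(t)\bigl(X_3(s+c)-X_3(s)\bigr)\geqslant E_{max}$, and the only subtlety is checking that $f_{\pm}$ dominate (resp.\ are dominated by) $f$ on the \emph{whole} $x$-axis and not merely near $\bar g$ --- but global monotonicity of $X_3$ makes this immediate. I therefore expect no serious obstacle; everything else reuses the explicit solution formula for $\alpha$ already recorded in the proof of \lemref{lem:reach_ode}.
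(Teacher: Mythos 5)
Your proof is correct, and it reaches the paper's bound by a genuinely leaner route than the one the paper takes. The paper first performs the time change $\psi(t)=\int_0^t\phi(u)du$ (valid because $\phi\geqslant\phi_{min}>0$) to normalize $\phi\equiv 1$, then compares the rescaled solution with the autonomous equations $x'=-X_3(x)\pm\alpha$, $\alpha=E_{max}/\phi_{min}$; handling these requires a second, one-sided comparison with $f^*(x)=X_3(x_\infty-x)$ where $X_3(x_\infty)=\alpha$, together with a no-crossing argument at $x_\infty$ and shifted initial conditions, and the final constant $\alpha$ appears through $\delta_\infty\leqslant\alpha$ since $X_3(x)\geqslant x$. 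You instead absorb the bounded perturbation \emph{before} comparing, via the shift inequality $X_3(s+c)-X_3(s)\geqslant c$ combined with $\phi\geqslant\phi_{min}$, which gives global (in $x$) domination $\phi(t)X_3(g_{-\infty}-x)\leqslant f(t,x)\leqslant\phi(t)X_3(g_{\infty}-x)$ with $g_{\pm\infty}=\bar g\pm(\eta+E_{max}/\phi_{min})$; a single application of the standard comparison theorem then sandwiches $y$ between two constant-target reach ODEs whose explicit solution (and the bound $|\alpha(y_0,g_{\pm\infty},t)|\leqslant(e^{2\int_0^t\phi}-1)^{-1/2}$) is already available from the proof of \lemref{lem:reach_ode}. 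Your shift inequality plays exactly the role of the paper's $X_3(\delta_\infty)=\alpha\Rightarrow\delta_\infty\leqslant\alpha$, but invoking it at the level of the nonautonomous equation lets you skip both the time-rescaling and the delicate one-sided domination/no-crossing analysis, at no loss: the final estimate is identical. Your existence argument (cite \lemref{lem:reach_ode}, or note that $y$ is trapped between the globally defined $z_\pm$ on its maximal interval and hence cannot blow up) is also fine.
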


\begin{proof}
Define $\psi(t)=\int_0^t\phi(u)du$ for $t\in I$. Since $\phi(t)\geqslant\phi_{min}>0$
then $\psi$ is an increasing function and admits an inverse $\psi^{-1}$. Define
for all $\xi\in[0,\psi(T)]$:
\[z_{\infty}(\xi)=g(\psi^{-1}(\xi))\quad\text{and}\quad z(\xi)=y(\psi^{-1}(\xi)).\]
One sees that $z$ satisfies
\[z'(\xi)=\underbrace{X_3(z_{\infty}(\xi)-z(\xi))+\frac{E(\psi^{-1}(\xi))}{\phi(\psi^{-1}(\xi))}}_{:=f(\xi,z(\xi))}\]
for $\xi\in[0,\psi(T)]$ and $z(0)=y(0)$. Furthermore, for all such $\xi$:
\[|z_{\infty}(\xi)-\bar{g}|\leqslant\eta\quad\text{ and }\quad
\left|\frac{E(\psi^{-1}(\xi))}{\phi(\psi^{-1}(\xi))}\right|\leqslant\frac{E_{max}}{\phi_{min}}.\]
Define $\alpha=\frac{E_{max}}{\phi_{min}}$,
\[f_{+}(x)=X_3(\bar{g}+\eta-x)+\alpha\quad\text{and}\quad
f_{-}(x)=X_3(\bar{g}-\eta-x)-\alpha.\]
One can check that $f_{-}(x)\leqslant f(\xi,x)\leqslant f_{+}(x)$ for any $\xi$ and $x$.
Consider the solutions $z_{-}$ and $z_{+}$ to $z_{-}'=f_{-}(z_{-})$ and $z_{+}'=f_{+}(z_{+})$
where $z_{-}(0)=z_{+}(0)=z(0)=y(0)$. By a classical result of differential inequalities,
we get that $z_-(\xi)\leqslant z(\xi)\leqslant z_+(\xi)$.
By shifting the solutions, both are instances of a system of the form:
\[x(0)=x_0\qquad x'(t)=-X_3(x(t))+\varepsilon\]
Since $x\mapsto-X_3(x)+\varepsilon$ is an increasing function, there exists a unique
$x_\infty$ such that $\varepsilon=X_3(x_\infty)$. Define $f(x)=-X_3(x)+\varepsilon$
and $f^{*}(x)=X_3(x_\infty-x)$. One checks that $f^{*}(x)-f(x)=3x_\infty(x^2-x_\infty^2)$,
thus $f^{*}(x)\leqslant f(x)$ if $x\leqslant x_\infty$ and $f(x)\leqslant f^{*}(x)$
if $x_\infty\leqslant x$. Notice that $f(x_\infty)=0$, so by a classical result
of differential equations, $x(t)-x_\infty$ must have a constant sign for the entire
life of the solution (i.e. $x(t)$ cannot ``cross'' $x_\infty$).
Consider the solutions $x_{-}$ and $x_{+}$ to $x_{-}=f^*(x_{-})$ and $x_{+}=f^*(x_{+})$
where $x_{-}(0)=\min(x_\infty, x_0)$ and $x_{+}(0)=\max(x_\infty,x_0)$.
Then the previous remark and a standard result guarantees that $x_{-}(t)\leqslant x(t)\leqslant x_{+}(t)$.
By the existence-uniqueness theorem for ODEs, the equations $x_{\pm}'=f^*(x_{\pm})$ have a unique solution and one can check that the following
are solutions:
\[x_{\pm}(t)=x_\infty+\frac{x_{\pm}(0)-x_\infty}{\sqrt{(e^{2t}-1)(1+(x_{\pm}(0)-x_\infty)^2)-1})}\]
We immediately deduce that
\[|x_{\pm}(t)-x_\infty|\leqslant\frac{1}{\sqrt{e^{2t}-1}}\]
and so
\[|x(t)-x_\infty|\leqslant\frac{1}{\sqrt{e^{2t}-1}}.\]
Let $\delta_\infty$ be such that $X_3(\delta_\infty)=\alpha$. Unrolling the definitions,
we get that
\[|z_{\pm}(\xi)-\bar{g}\mp\delta_\infty\mp\eta|\leqslant\frac{1}{\sqrt{e^{2t}-1}}.\]
So
\[|z(\xi)-\bar{g}|\leqslant\eta+\delta_\infty+\frac{1}{\sqrt{e^{2\xi}-1}}.\]
And finally, since $y(t)=z(\psi(t))$, we get that
\[|y(t)-\bar{g}|\leqslant\eta+\delta_\infty+\frac{1}{\sqrt{e^{2\int_0^t\phi(u)du}-1}}.\]
To conclude, it suffices to note that if $X_3(\delta_\infty)=\alpha$ then $\delta_\infty\leqslant\alpha$
since $X_3(x)\geqslant x$ for all $x$.
\end{proof}

\begin{definition}[Reach function]\label{def:reach}
For any $\phi\geqslant0$ and $y,g\in\R$, define
\[\reach(\phi,y,g)=2\phi X_3(g-y)\qquad\text{where }X_3(x)=x+x^3\]
\end{definition}

\begin{remark}\label{rem:reach_mult}
It is useful to note that for any $\phi,\psi\in\Rp$ and $y,g\in\R$,
\[\phi\reach(\psi,y,g)=\reach(\phi\psi,y,q)\]
\end{remark}

\begin{lemma}[Reach]\label{lem:reach}
There exists a function $\reach\in\gpval$ with the following property: 
given some arbitrary $I=[a,b]$, $\phi\in C^0(I,\Rp)$, $g,E\in C^0(I,\R)$, $y_0,g_\infty\in\R$ and $\eta>0$
such that for all $t\in I$, $|g(t)-g_\infty|\leqslant\eta$, let
$y:I\rightarrow\R$ be the solution of
\[
\left\{\begin{array}{@{}r@{}l}y(0)&=y_0\\y'(t)&=\reach(\phi(t),y(t),g(t))+E(t)\end{array}\right.
\]
Then for any $t\in I$,
\[|y(t)-g_\infty|\leqslant\eta+\int_a^t|E(u)|du+\exp\left(-\int_a^t\phi(u)du\right)
\quad\text{whenever }\int_a^t\phi(u)du\geqslant1\]
And for any $t\in I$,
\[|y(t)-g_\infty|\leqslant\max(\eta,|y(0)-g_\infty|)+\int_0^t|E(u)|du\]
\end{lemma}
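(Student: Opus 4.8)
The plan is to derive everything from \lemref{lem:reach_ode}, with the factor $2$ in the definition of $\reach$ playing exactly the role needed to turn the $\frac{1}{\sqrt{\exp(2\int\phi)-1}}$ tail of that lemma into the cleaner $\exp(-\int\phi)$ tail asked for here. First, the function $\reach$ is already fixed by \defref{def:reach}: $\reach(\phi,y,g)=2\phi X_3(g-y)=2\phi(g-y)+2\phi(g-y)^3$ is a polynomial in $(\phi,y,g)$ with coefficients in $\Q\subseteq\K$, hence generable and (trivially) polynomially bounded by itself, so $\reach\in\gpval$, and the same single function serves for all the data in the statement.

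Next I would normalize the interval. By translating time we may assume $a=0$, so that $I=[0,b]$ and every $\int_a^{\cdot}$ in the statement becomes $\int_0^{\cdot}$, matching the convention of the Reach ODE \eqref{eq:reach}. With this normalization, the hypothesis of \lemref{lem:reach} says precisely that $y$ solves \eqref{eq:reach} with $\phi$ replaced by $2\phi$ and $\bar g=g_\infty$, since $\reach(\phi(t),y(t),g(t))+E(t)=2\phi(t)X_3(g(t)-y(t))+E(t)$ and $|g(t)-g_\infty|\leqslant\eta$ for all $t\in I$. The only mismatch is that \lemref{lem:reach_ode} is stated for $\phi\in C^0(I,\Rps)$ whereas here $\phi$ is only nonnegative; this is harmless — either apply the lemma to $2\phi+\varepsilon$ and let $\varepsilon\to0^+$ using continuous dependence of the solution on the right-hand side, or simply observe that the differential-inequality argument proving \lemref{lem:reach_ode} never uses strict positivity of $\phi$.

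Then I would invoke \lemref{lem:reach_ode} with $2\phi$ in place of $\phi$. Its existence claim gives that $y$ is defined on all of $I$; its second conclusion gives, for every $t\in I$,
\[|y(t)-g_\infty|\leqslant\max(\eta,|y(0)-g_\infty|)+\int_0^t|E(u)|du,\]
which is the second claimed inequality verbatim. Its first conclusion gives, for every $t\in I$,
\[|y(t)-g_\infty|\leqslant\eta+\int_0^t|E(u)|du+\frac{1}{\sqrt{\exp\left(4\int_0^t\phi(u)du\right)-1}}.\]
The only thing left is an elementary estimate: writing $s=\int_0^t\phi(u)du$, I must check that $\frac{1}{\sqrt{e^{4s}-1}}\leqslant e^{-s}$ whenever $s\geqslant1$, i.e.\ that $e^{2s}\leqslant e^{4s}-1$, i.e.\ that $e^{2s}(e^{2s}-1)\geqslant1$; and for $s\geqslant1$ one has $e^{2s}\geqslant e^2>1$ and $e^{2s}-1\geqslant e^2-1>1$, so the product exceeds $1$. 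Substituting this into the displayed bound yields the first claimed inequality under the proviso $\int_a^t\phi\geqslant1$, completing the proof.

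I do not expect a genuine obstacle here: the statement is essentially a repackaging of \lemref{lem:reach_ode} with the constant $2$ absorbed into $\phi$, together with the one-line inequality above. The only points deserving a word of care are the cosmetic time-shift to put the left endpoint at $0$ and the $\phi\geqslant0$ versus $\phi>0$ discrepancy with \lemref{lem:reach_ode}, neither of which is substantive.
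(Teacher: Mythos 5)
Your proof is correct and follows essentially the same route as the paper: apply \lemref{lem:reach_ode} with the factor $2$ absorbed into $\phi$, and then verify the elementary inequality $1/\sqrt{e^{4s}-1}\leqslant e^{-s}$ for $s\geqslant1$ (the paper does this via the factorization $e^{4s}-1=(e^{2s}+1)(e^{2s}-1)$, you via $e^{2s}(e^{2s}-1)\geqslant1$, which is the same check). Your extra remarks on the time-translation and on $\phi\geqslant0$ versus $\phi>0$ are careful touches the paper glosses over, but they do not change the argument.
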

\begin{proof}
Apply \lemref{lem:reach_ode} and notice that if $\int_a^t\phi(u)du\geqslant1$, then:
\begin{align*}
\sqrt{\exp\left(\int_a^t4\phi(u)du\right)-1}
&\geqslant\sqrt{(\exp\left(2\int_a^t\phi(u)du\right)+1)(\exp\left(2\int_a^t\phi(u)du\right)-1)}\\
&\geqslant\exp\left(\int_a^t\phi(u)du\right)\sqrt{e^2-1}\geqslant \exp\left(\int_a^t\phi(u)du\right)
\end{align*}
\end{proof}

\subsection{The proof}

We then get to the proof of $\gpwc \subseteq \gpoc$.

\begin{proof}
Apart from the issue of the input, the system is quite intuitive: we constantly feed
the \unaware{} system with the (smoothed) input and some precision. By increasing
the precision with time, we ensure that the system will converge when the input is stable.
However there is a small catch: over a time interval $I$, if we change the precision
within a range $[\check{\mu},\hat{\mu}]$ then we must provide the \unaware{} system
with precision based on $\hat{\mu}$ in order to get precision $\check{\mu}$.
Since the \unaware{} system takes time $\Omega(\infnorm{x},\hat{\mu})$
to compute, we need to make arrangements so that the requested precision doesn't change too much over periods
of this duration to make things simpler. We will use to our advantage
that $\Omega$ can always be assumed to be a constant.

Let $(f:\subseteq\R^n\rightarrow\R^m)\in\guc{\Upsilon}{\Omega}{\Lambda}{\Theta}$
where $\Upsilon,\Omega,\Lambda$ and $\Theta$ are polynomials, which we can assume
to be increasing functions of their arguments. Apply \lemref{lem:gpuc_time_rescaling} to
get $\omega>0$ such that for all $\alpha\in\R^n,\mu\in\Rp$:
\[\Omega(\alpha,\mu)=\omega\]
Apply \defref{def:guc} to get
$\delta,d$ and $g$.
Define:
\[\tau=\omega+2\qquad\delta'=\max(\delta,\tau+1)\]
Let $x\in C^0(\Rp,\R^n)$ and consider the following systems:
\[
\left\{\begin{array}{@{}r@{}l@{}}
x^*(0)&=0\\
y(0)&=0\\
z(0)&=0\\
\end{array}\right.
\qquad
\left\{\begin{array}{@{}r@{}l@{}}
{x^*}'(t)&=\reach(\phi(t),x^*(t),x(t))\\
y'(t)&=g(t,y(t),x^*(t),\mu(t))\\
z'(t)&=\sample_{[\omega+1,\omega+2],\tau}(t,\mu(t),z(t),y_{1..m}(t))
\end{array}\right.
\]
where
\[\phi(t)=\ln2+\mu(t)+\Lambda^*(2+x_1(t)^2+\cdots+x_n(t)^2,\mu(t))\qquad\mu(t)=\frac{t}{\tau}\]
Let $t\geqslant1$,
since $\phi\geqslant1$ then \lemref{lem:reach} gives:
\[
\infnorm{x^*(t)}\leqslant\pastsup{1}{\infnorm{x}}(t)+e^{-\int_{t-1}^t\phi(u)du}
    \leqslant\pastsup{1}{\infnorm{x}}(t)+1
\]
Also for $t\in[0,1]$ we get that:
\[\infnorm{x^*(t)}\leqslant\sup_{[0,t]}\infnorm{x}\]
This proves that $\infnorm{x^*(t)}\leqslant\pastsup{1}{\infnorm{x}}(t)+1$ for all $t\in\Rp$.
From this we deduce that:
\begin{align*}
\infnorm{y(t)}&\leqslant\Upsilon(\pastsup{\delta}\infnorm{x^*}(t),\pastsup{\delta}{\mu}(t),0)\\
    &\leqslant\poly(\pastsup{\delta}\infnorm{x}(t),t)
\end{align*}
Apply \lemref{lem:sample} to get that:
\begin{align*}
\infnorm{z(t)}&\leqslant2+\pastsup{\tau+1}\infnorm{y}(t)\\
    &\leqslant\poly(\pastsup{\delta'}\infnorm{x}(t),t)
\end{align*}
Let $I=[a,b]$ and assume there exist $\bar{x}\in\dom{f}$ and $\bar{\mu}$ such that
for all $t\in I$, $\infnorm{x(t)-\bar{x}}\leqslant e^{-\Lambda(\infnorm{\bar{x}},\bar{\mu})}$.
Note that
\[2+\sum_{i=1}^nx_i(t)^2\geqslant1+\infnorm{x(t)}\geqslant\infnorm{\bar{x}}\]
for all $t\in I$.
Let $n\in\N$ such that $n\geqslant\bar{\mu}+\ln2$ and $[n\tau,(n+1)\tau]\subseteq I$.
Note that $\mu(t)\in[n,n+1]$ for all $t\in I_n$.
Apply \lemref{lem:reach}, using that $\phi\geqslant1$,
to get that for all $t\in[n\tau+1,(n+1)\tau]$:
\begin{align*}
\infnorm{x^*(t)-\bar{x}}
    &\leqslant e^{-\Lambda^*(\infnorm{\bar{x}},n)}+e^{-\int_{n\tau}^t\phi(u)du}
    \leqslant 2e^{-\Lambda^*(\infnorm{\bar{x}},n)}\\
    &\leqslant e^{-\Lambda(\infnorm{\bar{x}},\bar{\mu}+\ln2)}
\end{align*}
Using the definition of \unaware{} computability, we get that:
\[\infnorm{y_{1..m}-f(\bar{x})}\leqslant e^{-\bar{\mu}+\ln2}\]
for all $t\in[n\tau+1+\omega,(n+1)\tau]=[n\tau+\omega+1,n\tau+\omega+2]$.

Define $J=[a+(1+\bar{\mu}+\ln2)\tau,b]\subseteq I$. Assume that $t\in J\cap[n\tau+1,(n+1)\tau]$
for some $n\in\N$, then we must have $(n+1)\tau\geqslant (1+\bar{\mu}+\ln2)\tau$
and thus $n\geqslant \bar{\mu}+\ln2$ so we can apply the above reasoning to get that
\[\infnorm{y_{1..m}(t)-f(x)}\leqslant e^{-\bar{\mu}+\ln2}.\]
Furthermore, we also have
\[\mu(t)\geqslant\frac{(1+\bar{\mu}+\ln2)\tau}{\tau}\geqslant \bar{\mu}+\ln2\]
for all $t\in J$.
Apply \lemref{lem:sample} to conclude that for any $t\in [a+\tau+\bar{\mu}+\ln2+\tau+1,b]$,
we have
\[\infnorm{z(t)-f(x)}\leqslant 2e^{-\bar{\mu}+\ln2}\leqslant e^{-\bar{\mu}}.\]

To conclude the proof, we need to rewrite the system as a PIVP using
\lemref{prop:gpac_ext_ivp_stable_pre}.
Note that this works because we only rewrite the variable $y$, and doing so we
require that $x^*$ be a $C^1$ function (which is the case) and the new initial
variable will depend on $x^*(0)=0$ which is constant.
\end{proof}

\section{Proof that AOP implies ATSP}

The purpose of the current section is to show one last inclusion which, in conjunction with all
the inclusions of the previous sections, closes the circle of inclusions and shows Theorem
\ref{th:main}.

\begin{theorem}
$\gpoc\subseteq\gpc$.
\end{theorem}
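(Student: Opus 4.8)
The plan is to feed the online system with a \emph{constant} external input equal to the actual argument $x$: a constant signal is trivially ``stable'', so the online guarantee applies directly, and the resulting autonomous system is exactly of the shape demanded by \defref{def:gc}.

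Concretely, let $(f:\subseteq\R^n\rightarrow\R^m)\in\gpoc$ and apply \defref{def:goc} to obtain $\delta\geqslant0$, $d\in\N$, $p\in\K^d[\R^d\times\R^n]$, $y_0\in\K^d$ and polynomials $\Upsilon,\Omega,\Lambda$. I would add $n$ fresh variables $v=(v_1,\ldots,v_n)$ and consider, for $x\in\R^n$, the system
\[
\left\{\begin{array}{@{}r@{}l}y(0)&=y_0\\v(0)&=x\end{array}\right.\qquad
\left\{\begin{array}{@{}r@{}l}y'(t)&=p(y(t),v(t))\\v'(t)&=0\end{array}\right.
\]
This is a genuine PIVP in the sense of \defref{def:gc}: the right-hand side is polynomial in $(y,v)$, the initial condition $(y_0,x)$ is an affine (hence polynomial) map of $x$, and all coefficients lie in $\K$. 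Since $v(t)=x$ for all $t$, the $y$-component solves $y'(t)=p(y(t),x)$, i.e.\ it is precisely the (unique) online trajectory of \defref{def:goc} driven by the constant input signal $\bar x:=x$; putting $y$ before $v$ ensures that the first $m$ coordinates of the augmented state are $y_{1..m}$.

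It then remains to verify the two bullets of \defref{def:gc}. For the space bound, the feeding signal is constant, so $\pastsup{\delta}{\infnorm{\bar x}}(t)=\infnorm{x}$, and the online definition gives $\infnorm{y(t)}\leqslant\Upsilon(\infnorm{x},t)$ for all $t$; combined with $\infnorm{v(t)}=\infnorm{x}$ this bounds the whole state by a polynomial in $\infnorm{x}$ and $t$, and crucially this holds whether or not $x\in\dom f$. For convergence, assume $x\in\dom f$, fix $\mu\in\Rp$, and for each $T>0$ apply the third bullet of \defref{def:goc} on $I=[0,T]$ with $\bar x=x$ and $\bar\mu=\mu$: the hypothesis $\infnorm{\bar x(t)-\bar x}=0\leqslant e^{-\Lambda(\infnorm{x},\mu)}$ is trivially satisfied on $I$, hence $\infnorm{y_{1..m}(u)-f(x)}\leqslant e^{-\mu}$ whenever $\Omega(\infnorm{x},\mu)\leqslant u\leqslant T$. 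Since the online trajectory is defined (and unique) on all of $\Rp$ independently of $I$, letting $T\to\infty$ yields $\infnorm{y_{1..m}(t)-f(x)}\leqslant e^{-\mu}$ for every $t\geqslant\Omega(\infnorm{x},\mu)$, and this for every $\mu\in\Rp$. Therefore $f\in\gc{\Upsilon^*}{\Omega}$ for a suitable polynomial $\Upsilon^*$, so $f\in\gpc$. The proof is essentially bookkeeping; the only step needing a little care is the quantifier structure of \defref{def:goc}, where the target accuracy $\bar\mu$ is tied to a \emph{fixed} interval $I=[a,b]$, so one must explicitly let $b\to\infty$ to recover the unconditional ``for all $t\geqslant\Omega(\infnorm{x},\mu)$'' convergence required by \defref{def:gc}.
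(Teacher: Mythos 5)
Your proposal is correct and is essentially the paper's own proof: store the input in extra constant variables ($v'=0$), run the online system with this constant (hence trivially stable) signal, and read off the space bound and convergence from \defref{def:goc}; the paper just applies the online guarantee directly on $I=[0,t]$ for the given $t\geqslant\Omega(\infnorm{x},\mu)$ instead of your $T\to\infty$ limit, which is an immaterial difference.
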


\begin{proof}
The proof is trivial: given $x$, we store it in a variable and run the online system.
Since the input has no error, we can directly apply the definition to get that
the online system converges.

Let $(f:\subseteq\R^n\rightarrow\R^m)\in\goc{\Upsilon}{\Omega}{\Lambda}$. Apply
\defref{def:goc} to get $\delta,d,p$ and $y_0$. Let $x\in\dom{f}$ and consider the
following system:
\[
\left\{\begin{array}{@{}r@{}l@{}}
x(0)&=x\\
y(0)&=y_0\\
\end{array}\right.
\qquad
\left\{\begin{array}{@{}r@{}l@{}}
x'(t)&=0\\
y'(t)&=p(y(t),x(t))\\
\end{array}\right.
\]
We immediately get that:
\[\infnorm{y(t)}\leqslant\Upsilon(\pastsup{\delta}{\infnorm{x}}(t),t)\leqslant\Upsilon(\infnorm{x},t)\]
Let $\mu\in\Rp$ and let $t\geqslant\Omega(\infnorm{x},\mu)$, then apply \defref{def:goc}
to $I=[0,t]$ to get that
\[\infnorm{y_{1..m}(t)-f(x)}\leqslant e^{-\mu}\]
since $\infnorm{x(t)-x}=0$.
\end{proof}

\section{Conclusion}

As a conclusion, we proved actually even a stronger statement than
\thref{th:main}, namely:

\begin{theorem} \label{th:main1}
All notions of computations are equivalent, both at the computability
level:
$$\cglc=\cgc=\cgwc=\cgoc$$
and at the complexity level:
$$\gplc=\gpc=\gpwc=\gpoc$$
\end{theorem}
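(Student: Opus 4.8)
The plan is to obtain \thref{th:main1} by closing the circle of inclusions established section by section in the body. At the complexity level the preceding sections give $\gpc=\gplc$ (\secref{app:gpcsubsetgpwc}), the immediate inclusion $\gpc\subseteq\gpwc$, then $\gpwc\subseteq\gprc$ (\thref{th:weak_to_robust}), $\gprc=\gpsc$ (\thref{th:robust_to_strong}), $\gpsc\subseteq\gpuc$ (\thref{th:strong_to_unaware}), $\gpuc=\gpoc$ (\thref{th:unaware_to_online}), and finally $\gpoc\subseteq\gpc$. Chaining
\[\gpc\subseteq\gpwc\subseteq\gprc\subseteq\gpsc\subseteq\gpuc\subseteq\gpoc\subseteq\gpc\]
forces every inclusion in the cycle to be an equality; combined with $\gpc=\gplc$ this yields $\gplc=\gpc=\gpwc=\gpoc$, the complexity half of the statement.

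For the computability half I would run exactly the same sequence of constructions, but carrying arbitrary resource functions $\Omega,\Upsilon,\Theta,\Lambda$ through the argument instead of polynomials. Each transformation used in Sections~\ref{app:gpcsubsetgpwc}--\ref{app:unaware_to_online} — the length/time reparameterization, the slow-stop of \lemref{lem:pivp_slowstop}, the clamping feedback built from $\mx_\delta$ and $\norm_{\infty,\delta}$, the sample-and-hold of \lemref{lem:sample}, and the precision-increasing feeder — is parametric in these bounds and only uses that they are monotone and stable under composition with polynomials and generable functions, never that they are polynomial. Hence the very same cycle of inclusions holds between the unrestricted classes, giving $\cglc=\cgc=\cgwc=\cgoc$.

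The genuinely hard part, already carried out in the body, is the reverse arc $\gpwc\subseteq\gprc\subseteq\gpsc\subseteq\gpuc\subseteq\gpoc$: turning a ``weak'' computation — one that merely approximates $f(x)$ and may blow up off the domain — back into a clean time-space bounded one. The obstacles, in order, are: (i) a perturbed polynomial ODE drifts from its unperturbed trajectory at a rate controlled by \thref{th:dependency_right_side}, so the system must be slowed almost to a halt once the requested accuracy is reached, keeping the accumulated error $\bigO{e^{-\mu}}$; (ii) for $x\notin\dom f$ the trajectory can explode, so one adds a generable damping term inert while the state is small and strongly contracting otherwise, all without leaving $\gpval$; (iii) tolerating online changes of the input requires the sample-and-hold gadget so the system reacts with bounded delay; and (iv) feeding a precision parameter that grows with time forces convergence precisely when the input stabilizes. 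Verifying that every one of these gadgets stays inside $\gpval$ and keeps all bounds polynomial is exactly what makes the collapse hold at the complexity level and not merely at the computability level.
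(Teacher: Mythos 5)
Your treatment of the complexity half is exactly the paper's: the sections establish $\gpc=\gplc$, $\gpc\subseteq\gpwc\subseteq\gprc\subseteq\gpsc\subseteq\gpuc\subseteq\gpoc\subseteq\gpc$, and closing this circle forces all equalities, which is precisely how the paper concludes $\gplc=\gpc=\gpwc=\gpoc$.

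The gap is in your computability half. Your justification --- that every construction ``only uses that the bounds are monotone and stable under composition \dots never that they are polynomial'' --- is not accurate for the constructions as they stand. The resource functions are not mere bookkeeping: they are hard-wired into the generable data of the intermediate systems. In the weak-to-robust step the initial condition contains $A(0)=\Omega^*(\norm_{\infty,1}(x),\mu+\ln 2)+2$; in the robust-to-strong step the damping uses $\Delta(t)=\Upsilon(\ell(t),\ell(t),\ell(t))+1-\norm_{\infty,1}(y(t))$; in the strong-to-extreme step the internal speed-up is $A(t)=1+\Omega(1+\norm_{\infty,1}(x(t)),\nu(t))$ and the precision signal $\mu^*$ is built from $\Theta$ and the modulus polynomial $q$; in the extreme-to-online step $\phi(t)$ contains $\Lambda^*$. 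For the resulting $g,h$ to lie in $\gpval$ (as Definitions \ref{def:grc}, \ref{def:gsc} and \ref{def:guc} require), these bounds must themselves be generable and polynomially bounded; an arbitrary $\Omega$ or $\Upsilon$ need not even admit a generable majorant, since generable functions grow at most like fixed-height towers of exponentials. Moreover several estimates invoke polynomiality explicitly (``use that $\Upsilon$ is polynomial'' and $e^{-x}\poly(x)=\bigO{1}$ in \secref{app:robust_to_strong}, ``since $\Omega$ is a polynomial'' in \secref{app:strong_to_unaware}), and the online step relies on \lemref{lem:gpuc_time_rescaling}, whose proof of ``$\Omega$ can be taken constant'' goes around the polynomial circle and again feeds $\Omega$ into the dynamics.

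So to make the computability-level claim rigorous you must either (a) introduce generalized intermediate classes whose right-hand sides are in $\gval{\mtt{sp}}$ for non-polynomial $\mtt{sp}$, prove the closure and boundedness estimates in that setting, and justify that the bounds of a \cgwc{}-system can be replaced, without loss of generality, by generable majorants (this is a real step, not automatic for arbitrary $\Omega,\Upsilon$), or (b) find a different route to $\cglc=\cgc=\cgwc=\cgoc$ that bypasses the robust/strong/extreme classes. The paper itself only writes out the polynomial case, so you are in good company, but as written your argument for the computability half rests on a claim about the constructions that they do not satisfy.
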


\section*{References}

\bibliographystyle{elsarticle-harv}
\bibliography{
  ContComp
}

\end{document}